\providecommand{\tabularnewline}{\\}
\providecommand{\algorithmname}{Algorithm}
\newenvironment{lyxlist}[1]
{\begin{list}{}
{\settowidth{\labelwidth}{#1}
 \setlength{\leftmargin}{\labelwidth}
 \addtolength{\leftmargin}{\labelsep}
 }}
{\end{list}}
\newtheorem{theorem}{Theorem}[section]
\numberwithin{theorem}{section}
\newtheorem{definition}[theorem]{Definition}
\newtheorem{lemma}[theorem]{Lemma}
\newtheorem{corollary}[theorem]{Corollary}
\begin{document}
\pagenumbering{gobble}
\title{Distributed Symmetry-Breaking with Improved Vertex-Averaged Complexity
\footnote{This research has been supported by ISF grant 724/15, and by the Open University of Israel research fund.}
} 
\author{ 
Leonid Barenboim\\ 
Open University of Israel\\ 
Department of Computer Science\\ 
Ra'anana\\ 
Israel\\ 
leonidb@openu.ac.il 
\and 
Yaniv Tzur\\ 
Open University of Israel\\ 
Department of Computer Science\\ 
Ra'anana\\ 
Israel\\ 
yaniv.tzur@outlook.com 
} 
\maketitle
\begin{onehalfspacing}
\begin{abstract}
\phantomsection
\addcontentsline{toc}{section}{Abstract}We study the distributed message-passing model in which a communication
network is represented by a graph $G=(V,E)$. Usually, the measure
of complexity that is considered in this model is the worst-case complexity,
which is the largest number of rounds performed by a vertex $v\in V$.
While often this is a reasonable measure, in some occasions it does
not express sufficiently well the actual performance of the algorithm.
For example, an execution in which one processor performs $r$ rounds,
and all the rest perform significantly less rounds than $r$, has
the same running time as an execution in which all processors perform
the same number of rounds $r$. On the other hand, the latter execution
is less efficient in several respects, such as energy efficiency,
task execution efficiency, local-neighborhood efficiency and simulation
efficiency. Consequently, a more appropriate measure is required in
these cases. Recently, the vertex-averaged complexity was proposed
by \cite{Feuilloley2017}. In this measure, the running time is the
worst-case sum of rounds of communication performed by all of the
graph's vertices, averaged over the number of vertices. Feuilloley
\cite{Feuilloley2017} showed that leader-election admits an algorithm
with a vertex-averaged complexity significantly better than its worst-case
complexity. On the other hand, for $O(1)$-coloring of rings, the
worst-case and vertex-averaged complexities are the same. This complexity
is $O\left(\log^{*}n\right)$ \cite{Feuilloley2017}. It remained
open whether the vertex-averaged complexity of symmetry-breaking in
general graphs can be better than the worst-case complexity. We answer
this question in the affirmative, by showing a number of results with
improved vertex-averaged complexity for several different variants
of the vertex-coloring problem, for the problem of finding a maximal
independent set, for the problem of edge-coloring, and for the problem
of finding a maximal matching.
\end{abstract}
\pagebreak{}

\tableofcontents{}

\listoftables

\listoffigures

\listof{algorithm}{List of Algorithms}

\pagebreak{}

\pagenumbering{arabic}

\section{Introduction}

\subsection{Computational Model\label{subsec:computationalModel}}

We operate in the static, synchronous message passing model of distributed
computation. In this model, an $n$ -vertex graph $G=(V,E)$ is given,
where the vertex set \emph{$V$} represents the set of processors
and the edge set \emph{E} represents the set of communication lines
between processors in the underlying network. All processors operate
in parallel, where they can pass messages of unbounded size to their
neighbors in constant time, over the communication lines.

Also, we assume, in the static model, that no addition or deletion
of vertices or edges is performed. Every algorithm in this model operates
in synchronous rounds, where all vertices start simultaneously in
round 0, and each vertex $v\in V$ starts the $i+1$-th round only
after all vertices have finished the $i$-th round.

The most important problems studied in this model include the problems
of vertex coloring, edge coloring, finding a maximal independent set
(also known as MIS) and finding a maximal matching (also known as
MM). (See Section \ref{sec:Preliminaries} for the definitions of
these problems.) Many studies have been conducted since the mid 80's
to try and find an efficient distributed solution to these problems.
Notable early studies include \cite{Awerbuch1989,Cole1986,Goldberg1988,Israel1986,Israeli1986,Linial1987,Luby1993,Luby1986}.

\subsection{Motivation \label{subsec:Motivation}}

According to the worst-case complexity measure, the running time of
an algorithm is the number of rounds of the processor that is the
last one to terminate. Even if $n-1$ vertices terminate after just
a single round, and the remaining $n$-th vertex performs $n$ rounds,
the running time is $n$. According to this measure, exactly the same
running time is achieved in the scenario where each of the $n$ processors
perform $n$ rounds. The former scenario, however, is significantly
better in several respects. First, the overall energy consumption
may be up to $n$ times better. This is because the most significant
energy waste occurs during processor activity and communication. On
the other hand, once a processor terminates, it does not communicate
any more, and does not perform local computations. Consequently, in
a network of processors that are fed by a common energy source, the
former scenario may be $n$ times more energy efficient, although
it has the same worst-case running time as the latter one.

Another advantage of the former scenario is in improving the running
time itself, for a majority of the network processors. One example
of improving the running time of most of the network's vertices, is
a task that consists of a pair of subtasks $\mathcal{A},\mathcal{B}$
that are executed one after another. It would be better to execute
the second task in each processor once it terminates, rather than
waiting for all processors to complete the first task. This may result
in asynchronous start of the second task, which requires more sophisticated
algorithms, but significantly improves the running times of the majority
of processors. We note that if the average running time per vertex
$\bar{T}_{\mathcal{A}}$ of the first task is asymptotically smaller
than its worst-case running time $T_{\mathcal{A}}$, that is, $\bar{T}_{\mathcal{A}}=o\left(T_{\mathcal{A}}\right)$,
then the suggested execution is indeed advantageous. The reason is,
that the previous statement entails that most vertices run the first
task for $o\left(T_{A}\right)$ rounds.

Yet another example of the benefit of the former scenario is in simulations
of large-scale networks. These are commonly used for Big Data Analysis,
as well as for efficiently executing algorithms on big graphs \cite{Barenboim2017}.
In such algorithms, a distributed execution of a large-scale network
is simulated by a smaller number of processors, or just by a single
processor in some cases. Consequently, a single processor has to simulate
all rounds of a large number of vertices. In this case, minimizing
the \emph{sum of the number of rounds} of all vertices is much more
important than minimizing the maximum number of rounds of a vertex.
By minimizing this sum, the overall number of rounds a processor has
to simulate may be significantly smaller. Consequently, a complexity
measure that takes into account the sum of rounds is of great interest.
Specifically, the \emph{vertex-averaged complexity }is the sum of
rounds of all $n$ processors divided by $n$.

\section{Related Work\label{subsec:relatedResults}}

Research on the efficient distributed solution of the above-mentioned
central graph-theoretic problems conducted since the 1980's and until
a few years ago has focused almost exclusively on the analysis of
the time complexity of the developed algorithms in the \emph{worst-case
scenario}. On the other hand, during the past few years, studies presenting
new distributed algorithms for the solution of the above-mentioned
problems with no improvement in the worst-case scenario, but a significant
improvement in the average-case scenario have been published (\cite{Parter2016,Feuilloley2017}).
The analysis of the time complexity of such algorithms in the average-case
scenario is based on one of several different models.

In \cite{Feuilloley2017} (and also, in a later published extended
version \cite{Feuilloley2017a}), the model is static, that is, the
vertices and edges of the input graph do not change over time. Also,
the running time of a vertex for a certain algorithm $\mathcal{A}$
is defined in one of two ways, which the author shows to be equivalent.
In the first definition, a vertex chooses an output (its part of the
solution for the problem we are trying to solve) after some number
of rounds, based on communication with its neighbors. After selecting
the output, the vertex can continue to transmit messages and perform
computations, but cannot change its output. In the second definition,
each vertex starts with no knowledge about other vertices, and in
the $i$-th round knows the inputs and identities of its $i$-neighborhood.
At some round, the vertex chooses an output and terminates.

The average running time of an algorithm is computed by summing up
the running time of all vertices in the input graph and dividing the
sum by the number of vertices in the graph. This measure of average
running time is called by the author the \emph{complexity of an ordinary
node}, or \emph{node-averaged complexity. }More formally, let there
be a graph $G$, an algorithm $\mathcal{A}$ and let $\mathcal{ID}$
denote the set of legal ID assignments. The above-mentioned measure
of average running time is computed as follows:
\[
\bar{T}(G)=\max_{I\in\mathcal{ID}}\frac{1}{n}\sum_{v\in V(G)}r_{G,I,\mathcal{A}}(v)
\]
where the following notation has been used:
\begin{lyxlist}{00.00.0000}
\item [{$\bar{T}(G)$}] The average running time of the inspected distributed
algorithm for a graph whose number of vertices is denoted by $n$.
\item [{$V(G)$}] The set of vertices of the graph $G$.
\item [{$r_{G,I,\mathcal{A}}(v)$}] The number of rounds until a vertex
$v$ terminates in the execution of algorithm $\mathcal{A}$ on $G$
with ID assignment $I$.
\end{lyxlist}
In \cite{Parter2016}, the underlying network is assumed to be dynamic.
That is, we begin with an initial graph and an initial solution to
some problem. Over a time period that consists of discrete units,
vertices and edges may be added or deleted from the initial graph.
This may require correcting the solution of the problem we wish to
solve. In \cite{Parter2016}, the average time complexity of algorithms
running in this model is analyzed using amortized time analysis.

Some important results were obtained in \cite{Parter2016} in the
dynamic model as well. However, in this paper, we will focus on the
static model which is generally incomparable with the dynamic model.
Therefore, we will not elaborate on the results of \cite{Parter2016}
any further.

In \cite{Feuilloley2017}, the author mainly studies the average time
complexity of algorithms on cycles and other specific sparse graphs.
For the problem of leader election on cycles, the author showed the
following positive result. There is an exponential gap between the
vertex-averaged complexity, which is $O(\log n)$, and the worst-case
time complexity, which is $O(n)$. However, for other problems, such
as 3-coloring a cycle, the author shows that the vertex-averaged complexity
cannot be improved. The author also generalized this lower bound to
a class of sparse graphs he called \emph{q-sparse graphs.} 

This paper employs the first definition of \cite{Feuilloley2017}
with a slight difference. Once a vertex has finished executing an
algorithm and has decided upon a final output, it sends the final
output once to all its neighbors and terminates completely. Afterwards,
the vertex performs no further local computation or communication
in subsequent rounds. We name the measure of time complexity obtained
by taking the mentioned first definition of \cite{Feuilloley2017},
with the mentioned difference, the \emph{vertex-averaged complexity}
of an algorithm.

\section{Goal}

Several different results were obtained by the author of \cite{Feuilloley2017}.
One such result was negative, showing that the vertex-averaged complexity
and worst-case complexity of $3$-coloring rings are asymptotically
the same. Another result, on the other hand, was positive, showing
that the vertex-averaged complexity of leader election is $O(\log n)$
rounds, while in the worst-case, solving the same problem requires
$\Omega\left(n\right)$ rounds. However, the following question remained
open for general graphs, and for sparse graphs not contained in the
class of $q$-sparse graphs studied in \cite{Feuilloley2017}. The
question is, for the above-mentioned symmetry-breaking problems (vertex-coloring,
maximal independent set, edge-coloring, maximal matching), whether
an improvement can be achieved in terms of vertex-averaged complexity.
This is the subject of this paper.

\section{Overview of Our Results and Comparison with Previous Work \label{sec:comparisonWithPreviousWorkSection}}

Our first result is an $O(a)$-forests-decomposition with $O(1)$
vertex-averaged complexity. The\emph{ arboricity} $a$ is the minimum
number of forests that the edge-set of a graph can be partitioned
into. For many important graph families, such as planar graphs, graphs
that exclude any fixed minor, and graphs with constant genus, the
arboricity is bounded by a constant. For graphs with constant arboricity,
we devise deterministic algorithms for $O(a\cdot\log^{*}n)$-coloring,
MIS, maximal matching and $\left(2\Delta-1\right)$-edge coloring
with a vertex-averaged complexity of $O\left(\log^{*}n\right)$. It
was either shown or implied in \cite{Barenboim2008}, that any algorithm
for $O(a)$-forests-decomposition or $O(a\cdot\log^{*}n)$-vertex-coloring,
for a constant arboricity $a$, requires $\Omega(\log n)$ rounds,
or $\Omega\left(\frac{\log n}{\log\left(\log^{*}n\right)}\right)$,
respectively, in the worst-case. Therefore, our results demonstrate
that various problems, including coloring, have vertex-averaged complexity
that is significantly lower than the best possible worst-case complexity.
This is interesting, in view of the result of \cite{Feuilloley2017}
that coloring rings (that have arboricity 2) with a constant number
of colors has the same vertex-averaged and worst-case complexities.
Moreover, prior to our work, the best known vertex-averaged complexities
of the discussed problems were the same as the worst-case ones. Thus,
our results significantly improve the performance of the algorithms.

Our results also apply to non constant arboricity. In this case, for
deterministic algorithms, for the above-mentioned problems, we have
an additive term between $\tilde{O}(\sqrt{a})$ and $O(a)$ in the
vertex-averaged running time. (See tables \ref{tab:vertexColComparisonTable},
\ref{tab:otherProblemsComparisonTable} for a detailed description
of the running times of the various algorithms of ours.) We note,
however, that our algorithms do not incur an additional factor of
$O(\log n)$ in the running time. On the other hand, this factor is
unavoidable in the worst-case analysis in the problems of $O(a)$-forests-decomposition
and $O(a\cdot\log^{*}n)$-coloring. Computing an $O(a)$-forests-decomposition
requires $\Omega\left(\frac{\log n}{\log a}\right)$ rounds in the
worst case. Computing an $O\left(a\cdot\log^{*}n\right)$-vertex-coloring
requires $\Omega\left(\frac{\log n}{\log\left(\log^{*}n\right)+\log a}\right)$
rounds in the worst-case. We also note, that for $a=O(\log^{*}n)$,
$O\left(a\cdot\log^{*}n\right)$ coloring requires $\Omega\left(\frac{\log n}{\log\left(\log^{*}n\right)}\right)$
rounds in the worst-case \cite{Barenboim2008}. (We elaborate further
on the reason the last lower bound holds for $a=O(\log^{*}n)$ in
Section \ref{sec:discussionAndConclusionsSection}).

In addition, we devise deterministic vertex coloring algorithms that
are a function of $a,n$ and possibly also of a parameter $k$. One
such algorithm computes an $O(a^{2}\log n)$-coloring in $O(1)$ vertex-averaged
complexity. Another algorithm computes an $O(ka^{2})$-coloring in
$O\left(\log^{\left(k\right)}n\right)$ vertex-averaged complexity,
where $\log^{(k)}n$ is the iterated logarithmic function for $k$
iterations. For an appropriate value of $k$ (described in Sections
\ref{subsec:O(k * alpha)VertexColoringInImprovedVertexAveragedComplexity},
\ref{subsec:O(ka^2)Coloring}), we obtain $O\left(a^{2}\cdot\log^{*}n\right)$-vertex-coloring
with vertex-averaged complexity $O\left(\log^{*}n\right)$. This last
result is interesting due to the fact that the worst-case time complexity
of the best currently-known algorithm for $O\left(a^{2}\right)$-vertex-coloring
is $O\left(\log n\right)$. Also, the best-currently known algorithm
for $O\left(a^{2}\right)$-vertex-coloring, for constant arboricity,
is the best possible one, in the worst case \cite{Barenboim2008}.
The newly devised algorithm colors a graph with nearly the same number
of colors, with a vertex-averaged complexity that is substantially
smaller.

Yet another algorithm devised in the paper, deterministically colors
an input graph using $O\left(ka\right)$ colors, with vertex-averaged
complexity $O\left(a\log^{\left(k\right)}n\right)$.

Even more generally, for graphs with maximum degree $\Delta$, we
devise a scheme that reduces a worst-case solution with $f\left(\Delta,n\right)$
time to a solution with a vertex-averaged complexity of $O\left(f\left(a,n\right)\right)$.
This works for any problem that satisfies that a partial solution
to the problem can be extended to a complete solution. (See Section
\ref{sec:Solving-Problems-of-Extension-From-Any-Partial-Solution}
for more details.) We note that the aforementioned graph families
with constant arboricity have unbounded maximum degree. Consequently,
the resulting running time that depends on $a$, rather than on $\Delta$,
is significantly better. Moreover, we do not incur additional logarithmic-in-$n$
factors if they are not present in $f\left(\Delta,n\right)$. This
is in contrast to worst-case analysis which necessarily incurs $O(\log n)$
factors in some problems, as explained above. 

While all above-mentioned results are deterministic, with a guaranteed
vertex-averaged complexity bound, we also devise some randomized algorithms
in which the bounds hold with high probability. (This probability
is $1-\frac{1}{n^{c}}$, for an arbitrary large constant $c$.) In
particular, our analysis of $(\Delta+1)$-coloring of general graphs
results in a vertex-averaged complexity of $O(1)$, with high probability.
In addition, we compute $O\left(a\log\log n\right)$-coloring with
$O(1)$ vertex-averaged complexity, which is of interest when the
arboricity is significantly smaller than the maximum degree.

Our results are obtained using structures that eliminate in each round
or phase a constant fraction of vertices from further participation
in the algorithm. These vertices finalize their results, and terminate.
Consequently, the number of active vertices in each round or phase
decays exponentially as the algorithm proceeds. Consequently, the
vertex-averaged number of rounds or phases is $O(1)$. (This is because
the sum of rounds or phases performed by all vertices during the algorithm
is $O(n)$.) In particular, the number of vertices that are active
for a long period of time is very small, roughly $O\left(\frac{n}{2^{i}}\right)$,
where $i$ denotes the number of rounds or phases.

Tables \ref{tab:vertexColComparisonTable}-\ref{tab:otherProblemsComparisonTable}
compare previous work with results obtained in this paper. We note
the following:
\begin{itemize}
\item The time bound of each randomized algorithm holds with high probability.
\item The time complexity of all results of previous work is a worst-case
time complexity measure. However, no algorithms with better vertex-averaged
complexities have been previously devised, to the best of our knowledge. 
\item The time complexity of results of this paper presented in the tables
is a vertex-averaged complexity measure. 
\item The parameters $\delta$,$\epsilon$ and $\eta$ denote an arbitrarily
small positive constant. 
\item ``(Det.)'' denotes ``Deterministic'' and ``(Rand.)'' denotes
``Randomized''.
\end{itemize}
\begin{table}[H]
\begin{centering}
\begin{tabular}{|>{\raggedright}p{3.5cm}|>{\raggedright}p{7cm}|>{\raggedright}p{4.6cm}|}
\hline 
\textbf{Number of colors } & \textbf{Our vertex-averaged time} & \textbf{Previous running time}

\textbf{(worst case)}\tabularnewline
\hline 
$O(ka)$ & $O\left(a\log^{\left(k\right)}n\right)$ (Det.)\\
($2\le k\le\rho\left(n\right)$, $\rho\left(n\right)=O\left(\log^{*}n\right)$
is defined in Section \ref{subsec:O(k * alpha)VertexColoringInImprovedVertexAveragedComplexity})  & $O\left(a\log n\right)$ (Det.) \cite{Barenboim2008}\tabularnewline
\hline 
$O(a\log^{*}n)$ & $O\left(a\log^{*}n\right)$ (Det.) & $O\left(a\log n\right)$ (Det.) \cite{Barenboim2008}\tabularnewline
\hline 
$O(a^{1+\eta})$ & $O(\log a\log\log n)$ (Det.) & $O\left(\log a\log n\right)$ (Det.) \cite{Barenboim2011}\tabularnewline
\hline 
$O(a^{2}\log n)$ & $O(1)$ (Det.) & $O\left(\frac{\log n}{\log a+\log\log n}\right)$ (Det.) \cite{Barenboim2008}\tabularnewline
\hline 
$O(ka^{2})$ & $O\left(\log^{\left(k\right)}n\right)$ (Det.)\\
($2\le k\le\rho\left(n\right)$, $\rho\left(n\right)=O\left(\log^{*}n\right)$
is defined in Section \ref{subsec:O(k * alpha)VertexColoringInImprovedVertexAveragedComplexity})  & $O(\log n)$ (Det.) \cite{Barenboim2008}\tabularnewline
\hline 
$O\left(a^{2}\cdot\log^{*}n\right)$ & $O\left(\log^{*}n\right)$ (Det.) & $O(\log n)$ (Det.) \cite{Barenboim2008}\tabularnewline
\hline 
$\Delta+1$ & $O\left(\sqrt{a}\log^{2.5}a+\log^{*}n\right)$ (Det.) & $O\left(\sqrt{\Delta}\log^{2.5}\Delta+\log^{*}n\right)$\\
(Det.) \cite{Fraigniaud2016}\tabularnewline
\hline 
$\Delta+1$ & $O(1)$ (Rand.) & $2^{O\left(\sqrt{\log\log n}\right)}$ (Rand.) \cite{Chang2018}\tabularnewline
\hline 
$O\left(a\log\log n\right)$ & $O(1)$ (Rand.) & $\Omega\left(\frac{\log n}{\log a+\log\log\log n}\right)$ (Det.)
\cite{Barenboim2008}\\
$O\left(a\log n\right)$ (Det.) \cite{Barenboim2008}\\
$O\left(\log^{2}n\right)$ (Rand.) \cite{Barenboim2008}\cite{Barenboim2013}\tabularnewline
\hline 
\end{tabular}
\par\end{centering}
\caption{Comparison of vertex-coloring algorithms.\label{tab:vertexColComparisonTable}}
\end{table}
\begin{table}[H]
\begin{centering}
\begin{tabular}{|>{\raggedright}p{4.2cm}|>{\raggedright}p{3.7cm}|>{\raggedright}p{6.3cm}|}
\hline 
\textbf{Problem} & \textbf{Our vertex-averaged time} & \textbf{Previous running time}

\textbf{(worst case)}\tabularnewline
\hline 
Maximal Independent Set,\\
$(2\Delta-1)$-edge-coloring,\\
Maximal Matching & $O\left(a+\log^{*}n\right)$ (Det.) & $2^{O(\sqrt{\log n})}$ (Det.) \cite{Panconesi1996}\\
$O\left(\frac{\log n}{\log\log n}\right)$ (for $a\le\log^{\frac{1}{2}-\delta}n$)
(Det.) \cite{Barenboim2010}\\
$O\left(a+\log n\right)$ \\
(for MM and $(2\Delta-1)$-edge-coloring) \\
(Det.) \cite{Barenboim2010,Barenboim2009}\\
$O(a+a^{\epsilon}\log n)$ (for MIS) (Det.) \cite{Barenboim2011,Barenboim2010}\tabularnewline
\hline 
\end{tabular}
\par\end{centering}
\caption{Comparison of algorithms for the problems of MIS, $\left(2\Delta-1\right)$-edge-coloring
and maximal matching.\label{tab:otherProblemsComparisonTable}}
\end{table}

\section{Preliminaries\label{sec:Preliminaries}}

The \emph{k-vertex coloring problem}, sometimes referred to in short
as ``\emph{graph coloring}'', is the problem of assigning each vertex
of an undirected graph a color, such that no two adjacent vertices
share a color. In addition, at most $k$ different colors may be used.
The problem of \emph{k-edge coloring} is the problem of assigning
each edge a color from a set of \emph{k} different colors, such that
each pair of edges which share an endpoint are assigned different
colors. The problem of finding a \emph{maximal independent set} in
a graph is a problem, where given an undirected graph $G=(V,E)$,
one needs to find a subset $I\subseteq V$ such that no edge exists
that connects any two different vertices $v_{1},v_{2}\in I$ and for
each vertex $u\in V\setminus I$, if \emph{u} is added to \emph{I},
then \emph{$I\cup\{u\}$} is no longer an independent set. The problem
of finding a \emph{maximal matching} in a given graph $G=(V,E)$ is
the problem of finding a subset of edges $E'\subseteq E$, such that
for each pair of different edges $e_{1},e_{2}\in E'$, the two edges
have no shared endpoint (vertex), and the addition of any edge $e$
from $E\setminus E'$ to $E'$ will result in the set $E'\cup\{e\}$
no longer being a matching.

We call a vertex that has not yet finished executing a given distributed
algorithm an \emph{active vertex}. Accordingly, we call a vertex that
has finished executing a given distributed algorithm and no longer
takes part in it, an \emph{inactive} \emph{vertex}. For a graph $G=(V,E)$,
we denote $\Delta(G)={\displaystyle \max_{v\in V}\deg(v)}$, where
$\deg(v)$ is the number of edges in $E$ incident on $v$. When the
graph $G$ is clear from context, we simply write $\Delta$. Also,
given a graph $G=(V,E)$ and a subset of vertices $V'\subseteq V$,
we denote by $G(V')$ the subgraph of $G$ induced by $V'$.

Next, we present some definitions regarding edge orientations based
on \cite{Barenboim2013}. An orientation $\mu$ is an assignment of
directions to the edges of a graph $G$, such that every edge $\left\{ u,v\right\} $
of $G$ is directed either towards $u$ or towards $v$. If an orientation
does not contain any consistently oriented cycles (simple cycles in
which each vertex has out-degree 1 and in-degree 1), then we call
it an \emph{acyclic orientation}.

Also, the \emph{out-degree }of an acyclic orientation $\mu$ of $G$
(or, shortly, $\mu$\emph{-out-degree}) is the maximum out-degree
of a vertex in $G$ with respect to $\mu$. In addition, the \emph{length}
of an acyclic orientation $\mu$ of $G$ is the length of the longest
directed path in $G$ with respect to $\mu$. Lastly, for an edge
$e=\left(u,v\right)$ oriented towards $v$ by $\mu$, the vertex
$v$ is called the \emph{parent} of $u$ under $\mu$. Conversely,
The vertex $u$ is called the \emph{child} of $v$ under $\mu$.

\section{Basic Techniques \label{subsec:The-Partition-Algorithm}}

\subsection{Procedure Partition\emph{\label{subsec:Procedure-Partition-Description}}}

In this section we present a basic building block that is used by
many of our algorithms. This is an algorithm devised in \cite{Barenboim2008}
(See also chapter 5 in \cite{Barenboim2013}). The worst-case running
time of the algorithm is $O\left(\log n\right)$. In this section,
however, we demonstrate that its vertex-averaged complexity is significantly
better, specifically, it is $O(1)$. The algorithm is Procedure Partition,
which receives as input an undirected graph $G=(V,E)$, the arboricity
of the graph and a constant $0<\epsilon\le2$ and produces as output
a partition of the graph's vertices into $\ell=\lfloor\frac{2}{\epsilon}\log n\rfloor$
disjoint subsets $H_{1},H_{2},...,H_{\ell}$ such that every vertex
$v\in H_{i}$ has at most $A=(2+\epsilon)\cdot a$ neighbors in the
set $\bigcup_{j=i}^{\ell}H_{j}$. We call each subset $H_{i}$ an
\emph{$H$-set}. Procedure Partition is a subroutine in Procedure
Forest-Decomposition (also presented in \cite{Barenboim2013}), which
partitions the edges of an input graph into $O(a)$ directed forests.
Procedure Forest-Decomposition is in turn used as a subroutine in
additional algorithms presented in \cite{Barenboim2013} for the symmetry-breaking
problems defined in Section \ref{sec:Preliminaries}.

In Procedure Partition, all vertices are active at the start of the
algorithm's execution. Every vertex with at most $A$ neighbors in
the \emph{$i$}-th round of the algorithm's execution joins a subset
of vertices $H_{i}$ and becomes inactive. It is shown in \cite{Barenboim2013}
that each vertex $v\in V$ eventually joins some subset $H_{i}$,
becoming inactive.

It is also shown in \cite{Barenboim2013} that the algorithm has a
worst-case running time of $O(\log_{\frac{2+\epsilon}{2}}n)$ rounds.
Let $n_{i}$ denote the number of active vertices in the input graph
\emph{G} in round \emph{$i$} ($1\le i\le\log_{\frac{2+\epsilon}{2}}n$)
of the algorithm's execution. The following upper bound on $n_{i}$
that is provided in \cite{Barenboim2013} is useful for our analysis.

\begin{lemma}\label{activeVerticesNumberLemma}

\cite{Barenboim2013} For $1\le i\le\ell$, it holds that:
\[
n_{i}\le\left(\frac{2}{2+\epsilon}\right)^{i-1}n
\]
\end{lemma}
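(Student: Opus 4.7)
The plan is to prove the bound by induction on $i$, the base case $i=1$ being immediate since $n_1 = n$. For the inductive step, the key observation is that the subgraph spanned by the still-active vertices at the start of any round inherits the arboricity bound, so on average it is sparse enough that a constant fraction of its vertices sit below the threshold $A = (2+\epsilon)a$ and must therefore leave in the current round.

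Concretely, I would let $G_i$ denote the subgraph of $G$ induced by the vertices active at the beginning of round $i$, so that $|V(G_i)| = n_i$. Because $G_i$ is a subgraph of $G$, its arboricity is at most $a$, and the standard characterization of arboricity yields $|E(G_i)| \le a \cdot n_i$. A handshake argument then gives that the sum of $G_i$-degrees is at most $2a \cdot n_i$, i.e., the average $G_i$-degree is at most $2a$. Applying a Markov-style pigeonhole, the number of vertices whose $G_i$-degree strictly exceeds $A = (2+\epsilon)a$ is at most $\frac{2a \cdot n_i}{(2+\epsilon)a} = \frac{2}{2+\epsilon}\, n_i$, so at least $\frac{\epsilon}{2+\epsilon}\, n_i$ vertices of $G_i$ have $G_i$-degree at most $A$. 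By the rule of Procedure Partition, every such vertex joins $H_i$ and becomes inactive in round $i$, yielding $n_{i+1} \le \frac{2}{2+\epsilon}\, n_i$. Iterating this recursion gives $n_i \le \bigl(\frac{2}{2+\epsilon}\bigr)^{i-1} n$, as desired.

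The only real subtlety, and the step I would be most careful about, is the identification between the threshold check used by the procedure (which counts, for each active vertex $v$, its neighbors in $\bigcup_{j \ge i} H_j$) and the $G_i$-degree used in the counting argument. One must observe that vertices already assigned to $H_1, \ldots, H_{i-1}$ are inactive and therefore excluded, so these two quantities coincide exactly; once this bookkeeping is in place, the arboricity-to-average-degree passage and the pigeonhole estimate are entirely routine.
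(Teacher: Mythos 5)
Your proof is correct and follows essentially the same route as the paper: both establish the recursion $n_{i+1}\le\frac{2}{2+\epsilon}n_i$ from the fact that at least an $\frac{\epsilon}{2+\epsilon}$ fraction of the active vertices have degree at most $A=(2+\epsilon)a$ in the induced subgraph and hence join $H_i$, and then iterate. The only difference is that the paper simply cites this density claim from \cite{Barenboim2013}, whereas you supply its standard proof (arboricity bounds the edge count, hence the average degree is at most $2a$, and a Markov-type count finishes), which is a harmless and welcome amount of extra detail.
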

\begin{proof}

According to \cite{Barenboim2013}, in each round $i$, for $1\le i\le\log_{\frac{2+\epsilon}{2}}n$,
there are at least $\frac{\epsilon}{2+\epsilon}n_{i}$ vertices with
a degree of at most $A=(2+\epsilon)\cdot a$ and in each round \emph{i}
of the algorithm's execution, all vertices of degree at most $A$
join $H_{i}$ simultaneously, subsequently becoming inactive. Therefore,
for round $i=1,2,...,O(\log n)$, the number of active vertices $n_{i}$
in round $i$ satisfies:
\begin{align*}
n_{i} & \le n_{i-1}-\frac{\epsilon}{2+\epsilon}n_{i-1}\\
 & =(1-\frac{\epsilon}{2+\epsilon})n_{i-1}\\
 & =\frac{2}{2+\epsilon}n_{i-1}\\
 & =\left(\frac{2}{2+\epsilon}\right)^{i-1}n
\end{align*}
as required.

\end{proof}

Also, let $RoundSum(V)$ denote the sum of rounds of all vertices
in $V$ in the execution of Procedure Partition. (For each vertex
we count the number of rounds from the start until it terminates,
and $RoundSum(V)$ is the sum of all these values over all vertices
in $V$.) We observe that if a certain vertex $v\in V$ was active
in rounds $1,2,...,i$ for some $1\le i\le\log_{\frac{2+\epsilon}{2}}n$
then it adds 1 to the value of each of the terms $n_{1},n_{2},...,n_{i}$.
It follows that: 
\begin{equation}
RoundSum(V)=\sum_{i=1}^{\log_{\frac{2+\epsilon}{2}}n}n_{i}\label{eq:RoundSumUpperBoundEquation}
\end{equation}

We present and prove an upper bound on $RoundSum(V)$ within the following
lemma.

\begin{lemma}\label{totalActiveRoundsNumLemma}

It holds that $RoundSum(V)=O(n)$.\end{lemma}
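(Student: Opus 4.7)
The plan is to combine equation (\ref{eq:RoundSumUpperBoundEquation}) with the per-round bound from Lemma \ref{activeVerticesNumberLemma} and recognize the resulting sum as a geometric series with common ratio $\frac{2}{2+\epsilon} < 1$.

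More concretely, I would first substitute the bound $n_i \le \left(\frac{2}{2+\epsilon}\right)^{i-1} n$ from Lemma \ref{activeVerticesNumberLemma} directly into the right-hand side of (\ref{eq:RoundSumUpperBoundEquation}), obtaining
\[
RoundSum(V) \;\le\; n \sum_{i=1}^{\log_{(2+\epsilon)/2} n} \left(\frac{2}{2+\epsilon}\right)^{i-1}.
\]
Next, I would upper-bound the finite sum by the corresponding infinite geometric series (which is valid since all terms are positive and the ratio $\frac{2}{2+\epsilon}$ is strictly less than $1$ because $\epsilon > 0$). The closed form of the infinite series is $\frac{1}{1 - 2/(2+\epsilon)} = \frac{2+\epsilon}{\epsilon}$, and since $\epsilon$ is a positive constant fixed by the input to Procedure Partition, this factor is $O(1)$.

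Putting the two steps together yields $RoundSum(V) \le \frac{2+\epsilon}{\epsilon}\cdot n = O(n)$, as claimed. The main (and essentially only) conceptual content of the proof is observing that the sum $\sum_i n_i$ telescopes via the geometric decay guaranteed by Procedure Partition; all the heavy lifting was already done in Lemma \ref{activeVerticesNumberLemma}, so there is no real obstacle here beyond ensuring the constant $\epsilon$ is treated as fixed (otherwise the hidden constant in $O(n)$ would blow up as $\epsilon \to 0$). As a sanity check, dividing by $n$ immediately gives a vertex-averaged complexity of $O(1)$ for Procedure Partition, which is the headline claim motivating the lemma.
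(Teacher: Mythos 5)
Your proof is correct and follows essentially the same route as the paper: substitute the bound of Lemma \ref{activeVerticesNumberLemma} into Equation \ref{eq:RoundSumUpperBoundEquation} and sum the resulting geometric series with ratio $\frac{2}{2+\epsilon}<1$, using that $\epsilon$ is a fixed positive constant. The only (harmless) difference is that you make the upper bound by the infinite series and the resulting constant $\frac{2+\epsilon}{\epsilon}$ explicit, whereas the paper leaves the final step as an $O(n)$ evaluation of the finite sum.
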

\begin{proof}Applying the result of Lemma \ref{activeVerticesNumberLemma} to Equation
\ref{eq:RoundSumUpperBoundEquation}, we obtain:
\begin{align*}
RoundSum(V) & =\sum_{i=1}^{\log_{\frac{2+\epsilon}{2}}n}n_{i}\\
 & \le\sum_{i=1}^{\log_{\frac{2+\epsilon}{2}}n}\left(\frac{2}{2+\epsilon}\right)^{i-1}n\\
 & =\sum_{i=0}^{\log_{\frac{2+\epsilon}{2}}n-1}\left(\frac{2}{2+\epsilon}\right)^{i}n\\
 & \overset{(1)}{=}O(n)
\end{align*}
where transition (1) follows by computing the sum of the respective
geometric series, while recalling that $0<\epsilon\le2$ is a constant,
and therefore $0<\frac{2}{2+\epsilon}<1$.

\end{proof}

If we take the value of $RoundSum(V)$ given in Lemma \ref{totalActiveRoundsNumLemma}
and divide it by $n$, we obtain the following theorem.

\begin{theorem}\label{PartitionAverageTimePerVertex}

Suppose we are given an input graph $G=(V,E)$, where $\left|V\right|=n$.
Then, the vertex-averaged complexity of Procedure Partition on $G$,
denoted $\bar{T}(G)$, is $\bar{T}(G)=O(1)$.

\end{theorem}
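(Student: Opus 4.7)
The plan is to observe that this theorem is essentially a one-line corollary of Lemma \ref{totalActiveRoundsNumLemma}, once we unpack the definition of vertex-averaged complexity. I expect no genuine obstacle here; the real work was already done in establishing the geometric decay of $n_i$ in Lemma \ref{activeVerticesNumberLemma} and summing it up in Lemma \ref{totalActiveRoundsNumLemma}.

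First, I would recall the definition given earlier in Section 2: for a fixed ID assignment, the vertex-averaged complexity is
\[
\bar{T}(G) \;=\; \frac{1}{n} \sum_{v \in V(G)} r_{G,I,\mathcal{A}}(v),
\]
where $r_{G,I,\mathcal{A}}(v)$ is the number of rounds until vertex $v$ terminates. Next, I would identify this sum with $RoundSum(V)$: a vertex $v$ that first becomes inactive in round $i$ contributes exactly $i$ to the numerator, and on the other hand it is counted in each of $n_1, n_2, \ldots, n_i$. Hence
\[
\sum_{v \in V} r_{G,I,\mathcal{A}}(v) \;=\; \sum_{i=1}^{\log_{(2+\epsilon)/2} n} n_i \;=\; RoundSum(V),
\]
which is exactly Equation \ref{eq:RoundSumUpperBoundEquation}.

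Then I would invoke Lemma \ref{totalActiveRoundsNumLemma}, which gives $RoundSum(V) = O(n)$ (the constant hidden in the $O$-notation depending only on $\epsilon$, which is a fixed input constant). Dividing both sides by $n$ yields
\[
\bar{T}(G) \;=\; \frac{RoundSum(V)}{n} \;=\; \frac{O(n)}{n} \;=\; O(1),
\]
as claimed. Since the bound on $RoundSum(V)$ holds uniformly over all ID assignments $I$ (Lemma \ref{activeVerticesNumberLemma} is worst-case over the execution), taking the maximum over $I \in \mathcal{ID}$ does not change the asymptotic conclusion. This completes the plan.
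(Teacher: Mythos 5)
Your proposal is correct and matches the paper's own argument: the paper derives Theorem \ref{PartitionAverageTimePerVertex} in exactly this way, by identifying the sum of per-vertex round counts with $RoundSum(V)$ via Equation \ref{eq:RoundSumUpperBoundEquation}, applying Lemma \ref{totalActiveRoundsNumLemma} to get $O(n)$, and dividing by $n$. Your added remark that the bound is uniform over ID assignments is a small but harmless strengthening of what the paper states explicitly.
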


The result of Theorem \ref{PartitionAverageTimePerVertex}, contrasted
with the\emph{ }$O(\log n)$ rounds worst-case running time of Procedure
Partition,\emph{ }implies that by creating in each round a single
$H$-set using Procedure Partition and then performing further operations
on it internally, one can parallelize existing algorithms for the
symmetry-breaking problems defined in Section \ref{sec:Preliminaries},
possibly with some further modifications, obtaining improved vertex-averaged
complexity. This implication serves as the basis for improved algorithms
presented in this paper.

We note that throughout this section, and the rest of the paper, we
assume that the arboricity of the input graph is known to each vertex.
For graphs whose arboricity is unknown, there are standard reductions
from the case of unknown arboricity to the case of known arboricity,
such as Procedure General-Partition in \cite{Barenboim2008}.

\subsection{Analysis of an Algorithm Composed of Procedure Partition and Another
Distributed Algorithm}

Let us define an algorithm $\mathcal{C}$ consisting of $\ell=O(\log n)$
iterations as follows. In each iteration $1\le i\le\ell$, we perform
two steps. The first step is to execute a single round of Procedure
Partition, producing a new $H$-set $H_{i}$. The second step is to
have each vertex $v\in H_{i}$ and these vertices only execute an
auxiliary algorithm $\mathcal{A}$ on the subgraph $G(H_{i})$ induced
by $H_{i}$. Also, let $T_{\mathcal{A}}$ denote the worst-case running
time of algorithm $\mathcal{A}$. From these definitions and Theorem
\ref{PartitionAverageTimePerVertex} the following corollary follows.

\begin{corollary}\label{partitionBasedAlgAverageTimePerVertex}

For an input graph $G=(V,E)$, let \emph{$\mathcal{C}$ }be an algorithm
as above. Then, algorithm $\mathcal{C}$ has a vertex-averaged complexity
of $O(T_{\mathcal{A}})$ rounds.

\end{corollary}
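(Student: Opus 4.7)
The plan is to decompose each vertex's active lifetime into the time it spends in Procedure Partition before being placed into some $H$-set, plus the time it spends executing the auxiliary algorithm $\mathcal{A}$ afterwards. The first contribution is exactly what is summed in $RoundSum(V)$, and the second is trivially at most $T_{\mathcal{A}}$ per vertex.

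Concretely, I would proceed as follows. For each $v \in V$, let $i(v)$ be the index of the $H$-set that $v$ joins. I would schedule the iterations of $\mathcal{C}$ so that, as soon as $v$ joins $H_{i(v)}$, it immediately starts its execution of $\mathcal{A}$ restricted to $G(H_{i(v)})$, and terminates the moment that local execution of $\mathcal{A}$ finishes — it does \emph{not} wait for Partition or $\mathcal{A}$ to run on any later $H$-set. Under this schedule, the number of rounds $v$ is active is at most $i(v) + T_{\mathcal{A}}$. Summing over all vertices,
\[
\sum_{v \in V} \bigl( i(v) + T_{\mathcal{A}} \bigr) \;=\; RoundSum(V) \;+\; n \cdot T_{\mathcal{A}}.
\]
Applying Lemma \ref{totalActiveRoundsNumLemma} gives $RoundSum(V) = O(n)$, so the total is $O(n) + n \cdot T_{\mathcal{A}} = O\bigl(n \cdot \max\{1, T_{\mathcal{A}}\}\bigr)$. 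Dividing by $n$ yields the claimed vertex-averaged bound $O(T_{\mathcal{A}})$, with the case $T_{\mathcal{A}} = O(1)$ already subsumed by Theorem \ref{PartitionAverageTimePerVertex}.

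I do not expect any substantive obstacle — the argument is essentially additive bookkeeping on top of the $O(1)$ vertex-averaged bound for Partition. The only real modeling point that requires care is the scheduling described above: a vertex $v \in H_{i(v)}$ must be allowed to terminate as soon as \emph{its own} invocation of $\mathcal{A}$ on $G(H_{i(v)})$ ends, rather than waiting for subsequent Partition rounds or $\mathcal{A}$-invocations on later $H$-sets. Without this, one could instead upper-bound $v$'s lifetime by something like $i(v)\cdot(1+T_{\mathcal{A}})$, which still sums to $O(n\cdot T_{\mathcal{A}})$ via Lemma \ref{totalActiveRoundsNumLemma} and hence yields the same $O(T_{\mathcal{A}})$ vertex-average; so the conclusion is robust to the precise interleaving chosen.
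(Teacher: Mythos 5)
Your proof is correct, and in substance it coincides with the paper's. The paper's proof keeps the sequential schedule of $\mathcal{C}$ and charges every vertex still active in iteration $i$ a cost of $O(T_{\mathcal{A}})$ rounds for that iteration (one Partition round plus the run of $\mathcal{A}$ on $H_i$), obtaining $RoundSum(V,T_{\mathcal{A}})=\sum_{i}n_{i}\cdot O(T_{\mathcal{A}})=O(T_{\mathcal{A}})\cdot RoundSum(V)=O(n\cdot T_{\mathcal{A}})$ via Lemma \ref{totalActiveRoundsNumLemma}; this is precisely the multiplicative $i(v)\cdot(1+T_{\mathcal{A}})$ accounting you offer as your fallback. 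Your primary presentation instead reschedules $\mathcal{C}$ so that a vertex starts $\mathcal{A}$ the moment it joins its $H$-set and terminates without waiting for later iterations, yielding the sharper additive per-vertex bound $i(v)+T_{\mathcal{A}}$. Be aware that this is a modification of $\mathcal{C}$ as defined: the paper explicitly notes that the $\ell$ executions of $\mathcal{A}$ are carried out sequentially on the $H$-sets, so for the corollary as literally stated it is your fallback computation that applies. Since $\sum_{v}i(v)=RoundSum(V)=O(n)$, both accountings give $O(n\cdot T_{\mathcal{A}})$ total rounds and hence the same $O(T_{\mathcal{A}})$ vertex-averaged complexity, so the asymptotic conclusion is unaffected; the additive version would only matter if one cared about constants or about decoupling the $O(1)$ Partition cost from $T_{\mathcal{A}}$.
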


\begin{proof}

Let us denote the sum of the number of communication rounds in which
the vertices of the input graph running $\mathcal{C}$ take part by
$RoundSum(V,T_{\mathcal{\mathcal{A}}})$. Similarly to Lemma \ref{totalActiveRoundsNumLemma},
we observe that:

\begin{align*}
RoundSum(V,T_{\mathcal{A}}) & =\sum_{i=1}^{\log_{\frac{2+\epsilon}{2}}n}n_{i}\cdot O(T_{\mathcal{A}})\\
 & =O(T_{\mathcal{A}})\cdot\sum_{i=1}^{\log_{\frac{2+\epsilon}{2}}n}n_{i}\\
 & =O(T_{\mathcal{A}})\cdot RoundSum(V)
\end{align*}

By Lemma \ref{totalActiveRoundsNumLemma} the last expression satisfies:
\[
O(T_{\mathcal{A}})\cdot RoundSum(V)=O(n\cdot T_{\mathcal{A}})
\]
Therefore, the vertex-averaged complexity of algorithm $\mathcal{C}$
is:
\[
\frac{RoundSum(V,T_{\mathcal{A}})}{n}=\frac{O(n\cdot T_{\mathcal{A}})}{n}=O(T_{\mathcal{A}})
\]
As required.

\end{proof}

It is important to note that though the $\ell$ executions of $\mathcal{A}$
are carried out sequentially on each $H$-set $H_{i}$, and not on
all $H$-sets in parallel, the vertex-averaged complexity of $\mathcal{C}$
is still $O(T_{\mathcal{A}}$).

\section{Deterministic Algorithms}

\subsection{$O(a)$-Forests-Decomposition in $O(1)$ Vertex-Averaged Complexity\label{subsec:Parallelized-Forest-Decomposition}}

In this section we present an average-time analysis of Procedure Forest-Decomposition
devised in \cite{Barenboim2008}. The algorithm receives the same
input as Procedure Partition and returns a partition of the graph's
edges into at most $A=O(a)$ disjoint forests. This partition is also
called an \emph{$O(a)$-forests-decomposition} of the graph's edges.
As for the structure of the algorithm, it first invokes Procedure
Partition with its required input. Then, the algorithm orients each
edge that connects vertices in different $H$-sets towards the vertex
in an $H$-set with a higher index. Each edge connecting vertices
in the same $H$-set is oriented towards the vertex with the higher
ID value. Finally, each vertex $v$, in parallel, labels each of its
outgoing edges arbitrarily, each with a different label from the set
$\left\{ 1,2,...,d_{out}(v)\right\} $, $d_{out}(v)$ being the outgoing
degree of $v$. This produces the desired result in $O\left(\log n\right)$
rounds in the worst case, as shown in \cite{Barenboim2008}.

We improve the vertex-averaged complexity of the procedure as follows.
Immediately upon formation of each subset $H_{i}$ in the $i$-th
iteration of Procedure Partition's main loop, we orient each edge
in the subset towards the endpoint with the higher ID. Simultaneously,
we orient each edge between a vertex $v\in H_{i}$ and a vertex $u\ne v$
not yet belonging to any subset, towards $u$ (since $u$ will join
a subset $H_{j}$ for $j>i$). Afterwards, we label the edges the
same way as in Procedure Forest-Decomposition. We denote the devised
algorithm by\emph{ Procedure Parallelized-Forest-Decomposition}. The
following theorem summarizes the properties of Procedure Parallelized-Forest-Decomposition.

\begin{theorem}\label{parFDTheorem}

Procedure Parallelized-Forest-Decomposition, invoked with an input
graph $G=(V,E)$ with arboricity $a(G)=a$ and a parameter $0<\epsilon\le2$
returns an $O(a)$-forests-decomposition of $E$ with vertex-averaged
complexity $O(1)$.\end{theorem}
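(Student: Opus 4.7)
The plan is to split the proof into two independent claims: correctness (the output is a valid $O(a)$-forests-decomposition of $E$) and the vertex-averaged complexity bound of $O(1)$.

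For correctness, I would argue that the final orientation and labeling produced by Procedure Parallelized-Forest-Decomposition coincide with those produced by Procedure Forest-Decomposition of \cite{Barenboim2008} on the same underlying execution of Procedure Partition. Every edge $\{u,v\}$ with $u\in H_i$ and $v\in H_j$ is oriented toward the endpoint in the higher-indexed $H$-set (and, when $i=j$, toward the higher-ID endpoint) in both procedures; only the round in which each orientation is committed differs. In particular, the orientation is acyclic and every vertex $v\in H_i$ has out-degree at most $A=(2+\epsilon)a$, because by the defining property of Procedure Partition its neighbors in $\bigcup_{j\ge i} H_j$ (which are exactly its potential out-neighbors) number at most $A$. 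Labeling the outgoing edges of each vertex with distinct values in $\{1,\ldots,d_{\mathrm{out}}(v)\}\subseteq\{1,\ldots,A\}$ therefore partitions $E$ into $A=O(a)$ forests, with acyclicity of each color class inherited from acyclicity of the orientation.

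For the complexity bound, I would cast Procedure Parallelized-Forest-Decomposition in the framework of Corollary \ref{partitionBasedAlgAverageTimePerVertex}: each iteration consists of a single round of Procedure Partition producing $H_i$, followed by an auxiliary subroutine $\mathcal{A}$ in which each $v\in H_i$ orients its incident edges, assigns labels to its outgoing edges, and announces these labels to its neighbors. At the moment $v$ joins $H_i$, it already knows which of its neighbors lie in previous $H$-sets (those edges are incoming to $v$), which lie in $H_i$ itself (oriented by ID comparison), and which are still active (outgoing to some later $H_j$, $j>i$); this information was revealed by the prior Partition rounds and by the announcement of the $i$-th round's $H$-set membership. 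Hence $\mathcal{A}$ runs in $T_{\mathcal{A}}=O(1)$ rounds per vertex, and Corollary \ref{partitionBasedAlgAverageTimePerVertex} yields vertex-averaged complexity $O(T_{\mathcal{A}})=O(1)$.

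The main point to verify carefully is that $v\in H_i$ can legitimately finalize its outgoing labels without waiting for later $H$-sets to be formed. This holds because the label assigned to an outgoing edge from $v$ is determined solely by $v$'s local choice from $\{1,\ldots,d_{\mathrm{out}}(v)\}$; the identity of the eventual $H$-set containing the other endpoint does not affect either the orientation (already fixed away from $v$) or the label. Consequently, $v$ can broadcast its labels and terminate within $O(1)$ rounds of joining $H_i$, which, combined with Theorem \ref{PartitionAverageTimePerVertex}, gives the claimed $O(1)$ vertex-averaged complexity.
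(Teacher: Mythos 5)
Your proposal is correct and follows essentially the same route as the paper: correctness by observing that the final orientation and labeling coincide with those of Procedure Forest-Decomposition (only the timing of commitments differs), and the $O(1)$ vertex-averaged bound from the exponential decay of active vertices, which you invoke via Corollary \ref{partitionBasedAlgAverageTimePerVertex} while the paper sums $\sum_{i} n_{i}=O(n)$ directly from Lemma \ref{totalActiveRoundsNumLemma} --- the same argument in substance. Your explicit check that a vertex can finalize its outgoing labels without waiting for later $H$-sets is a welcome added detail that the paper leaves implicit.
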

\begin{proof}

First, we prove the correctness of the devised procedure. In each
iteration $i=1,2,...,O(\log n)$ of the procedure, each edge with
at least one endpoint in the subset $H_{i}$ created in the current
step is oriented and labeled in the same manner as in Procedure Forest-Decomposition
in \cite{Barenboim2008}. Also, for each step $j=2,3,...,O(\log n)$
no orientation or labeling of any iteration $i<j$ is overridden by
that of the current step $j$. Therefore, the eventual orientation
and labeling of the edges of $E$ is the same as that produced by
Procedure Forest-Decomposition. The orientation and labeling of edges
carried out by Procedure Forest-Decomposition correctly produces a
partition of the edges of $E$ into $O(a)$ forests. Therefore, the
orientation and labeling carried out by Procedure Parallelized-Forest-Decomposition
also correctly produces such a partition of the edges of $E$.

We now analyze the vertex-averaged complexity of Procedure Parallelized-Forest-Decomposition.
In each iteration $i=1,2,...,O(\log n)$ of the procedure's main loop,
at most 3 operations are carried out by each active vertex, all of
which can be carried out in a single round. The first operation is
to decide whether the vertex should join the $H$-set $H_{i}$. If
a vertex decides to join the $H$-set $H_{i}$, then it also performs
the second and third operations, which are to orient and label the
edges incident on it. It follows that the total number of rounds carried
out by the vertices of the input graph in the execution of the devised
procedure is at most $\sum_{i=1}^{\log_{\frac{2+\epsilon}{2}}n}n_{i}$,
similarly to Equation \ref{eq:RoundSumUpperBoundEquation}. According
to the proof of Lemma \ref{totalActiveRoundsNumLemma}, it holds that
$\sum_{i=1}^{\log_{\frac{2+\epsilon}{2}}n}n_{i}=O(n)$ and therefore
the vertex-averaged complexity of the devised procedure is $O(1)$.

\end{proof}

\subsection{$O(a^{2}\log n)$-Vertex-Coloring in $O(1)$ Vertex-Averaged Complexity\label{subsec:O(a^2logn)-Coloring-in-O(1)-Average-Time}}

In this section we employ Procedure Arb-Linial from \cite{Barenboim2008}
(that is based on \cite{Linial1992}), that colors a graph with a
given $O(a)$-forest-decomposition using $O(a^{2}\log n)$ colors.
In general, the procedure first computes an $O(a)$-forest-decomposition
within $O(\log n)$ time, and then computes this coloring within $O(1)$
time. But, given an $O(a)$-forest-decomposition, computing this coloring
from it requires just $1$ round. The procedure used in the second
step of Procedure Arb-Linial, to compute an $O(a^{2}\log n)$-coloring
from a given $O(a)$-forest-decomposition, will be henceforth referred
to as Procedure Arb-Linial-Coloring. Procedure Arb-Linial-Coloring
takes as its input an $O(a)$-forest-decomposition of a graph containing
$n$ vertices, where each vertex in the graph is colored using its
own ID value and produces as output an O($a^{2}$)-coloring of the
input graph. (Actually, we will execute only one step of the procedure
that transforms an $n$-coloring into an $O(a^{2}\log n)$ coloring.
In general, the procedure consists of $O\left(\log^{*}n\right)$ steps.
Each of the $O(\log^{*}n)$ steps transforms a current $p$-coloring
into an $O(a^{2}\log p)$ coloring, for a positive integer $p$. Then,
another single round is executed to reduce a current $O(a^{2}\log a)$-coloring
into an $O(a^{2})$-coloring.)

Our algorithm proceeds as follows. We execute Procedure Parallelized-Forest-Decomposition.
In each of the $\ell=O(\log n)$ iterations of the main loop of the
procedure, we perform two steps. The first step, as before, is to
form an $H$-set $H_{i}$ and to decompose the edges of $G(H_{i})$
into $O(a)$ forests. The second step is to color the vertices of
the current subset $H_{i}$ using $O(a^{2}\log n)$ colors using a
single round of Procedure Arb-Linial-Coloring.

We present and prove the correctness and vertex-averaged complexity
of the devised algorithm in the following theorem.

\begin{theorem}\label{singleRoundArbLinialColoringTheorem} 

For an input graph $G=(V,E)$ with arboricity $a$, the devised algorithm
achieves an $O(a^{2}\cdot\log n)$-coloring in a vertex-averaged complexity
of $O(1)$ rounds.

\end{theorem}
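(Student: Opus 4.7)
The plan is to establish correctness and the vertex-averaged time bound separately, exploiting the fact that the coloring step is just a single local computation per vertex. I would first observe that Procedure Arb-Linial-Coloring, executed in a single round starting from the ID-coloring, computes each vertex $v$'s new color as a deterministic function of $v$'s ID and the IDs of $v$'s out-neighbors in the forest decomposition. Since IDs are static and, in Procedure Parallelized-Forest-Decomposition, the orientation of every edge is fixed in the iteration in which its lower-index endpoint joins an $H$-set (Theorem~\ref{parFDTheorem}), the data available to $v$ when it computes its new color in iteration $i$ is identical to the data it would have in a monolithic execution: its own ID, plus the identities and IDs of its out-neighbors (those in $H_i$ with higher ID, together with neighbors in $\bigcup_{j>i} H_j$, all of which have out-degree bounded by $A=O(a)$). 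Consequently, the final color assignment produced by the staged algorithm equals that produced by running Parallelized-Forest-Decomposition to completion followed by one round of Arb-Linial-Coloring on the entire graph, which is known to yield a proper $O(a^2\log n)$-coloring.

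For the complexity bound, I would invoke Corollary~\ref{partitionBasedAlgAverageTimePerVertex} directly. The devised algorithm exactly matches the template described there: each iteration consists of one round of Procedure Partition (with simultaneous edge orientation and labeling, as in Parallelized-Forest-Decomposition) followed by the auxiliary procedure $\mathcal{A}$, which here is a single round of Arb-Linial-Coloring restricted to the newly-formed $H$-set. With $T_{\mathcal{A}}=O(1)$, the corollary gives vertex-averaged complexity $O(1)$.

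The main obstacle is justifying correctness across $H$-set boundaries, where the two endpoints of an edge finalize their colors in different iterations. For an edge $\{u,v\}$ with $u\in H_i$, $v\in H_j$, $i<j$, the edge is oriented from $u$ toward $v$; in iteration $i$, $v$ has not yet been colored, so $u$ cannot ``see'' $v$'s final color. However, the Linial-type construction only requires $u$ to know $v$'s \emph{current} color (its ID) in order to pick a signature element avoiding the signature $S_{\mathrm{ID}(v)}$. Since $v$'s eventual new color lies inside $S_{\mathrm{ID}(v)}$ and $u$'s new color lies outside it, the two colors must differ, even though they are finalized in different rounds. Thus the staggering of the iterations does not break the Linial invariant, because that invariant depends only on data (IDs and orientations) that is available and frozen at the moment each vertex makes its choice.
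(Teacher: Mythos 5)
Your proposal is correct and follows essentially the same route as the paper: the time bound is obtained by plugging $T_{\mathcal{A}}=O(1)$ into Corollary \ref{partitionBasedAlgAverageTimePerVertex}, and cross-boundary correctness rests on exactly the observation the paper uses in its induction, namely that a vertex $u$ colored in an earlier iteration picks a color outside $F_{\mathrm{ID}(v)}$ for every parent $v$, while $v$'s eventual color lies inside $F_{\mathrm{ID}(v)}$, so the two must differ. Your additional ``simulation'' framing (the staged execution computes the same coloring as the monolithic one because each vertex's choice depends only on its own ID and its parents' IDs, all frozen at decision time) is a clean repackaging of the same idea rather than a different argument.
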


\begin{proof}

First, we prove the coloring achieved is proper. For each $1\le i\le\ell$
the coloring within subset $H_{i}$ is proper, because it was obtained
from invoking a single round of Procedure Arb-Linial-Coloring on the
vertices of $H_{i}$, a technique which correctly produces a proper
$O(a^{2}\log n)$ coloring in this invocation, as proven within Theorem
5.2 and Lemma 5.4 in \cite{Barenboim2013}.

Let $S$ be the set of vertices on which the devised algorithm is
invoked. As explained in \cite{Barenboim2013} in Section 3.10 and
in \cite{Barenboim2008}, in the first round of Procedure Arb-Linial-Coloring,
each vertex $v\in S$ computes and assigns itself a subset $F_{ID(v)}$,
unique to its ID value, from a collection $\mathcal{J}$ of $n$ subsets
of $\{1,2,...,5\cdot\lceil A^{2}\log n\rceil\}$. The assigned subset
$F_{ID(v)}$ satisfies, that for each other $A$ subsets $F_{j_{1},}F_{j_{2}},...,F_{j_{A}}$
of $\mathcal{J}$, there is a value $x\in F_{ID(v)}$ that satisfies
$\{x\}\nsubseteq\cup_{k=1}^{A}F_{j_{k}}$. The existence of such a
collection $\mathcal{J}$ is guaranteed by Lemma 3.21 in \cite{Barenboim2013}.
The vertex $v$ also computes the sets $F_{ID(u_{1})},F_{ID(u_{2})},...,F_{ID(u_{|\Pi(v)|})}$
for each of its neighbors $u_{1},u_{2},...,u_{|\Pi(v)|}\in\Pi(v)$,
where $\Pi(v)$ is the set of parents of $v$, and then chooses a
value $x$ as described as its new color. 

We shall prove by induction that for every $H$-set $H_{i}$ the coloring
achieved is proper for the vertices of the sub-graph induced by $\cup_{j=1}^{i}H_{j}$.
For $i=1$, we have already shown that the coloring of the vertices
of the sub-graph induced by a specific $H$-set is proper. Therefore
the coloring of the vertices of the sub-graph induced by $H_{1}$
is proper. 

Suppose that the coloring achieved is proper for the vertices of the
sub-graph induced by the vertices of $\cup_{j=1}^{k}H_{j}$ for a
certain $1<k<\ell$. For $i=k+1$, let us inspect the $H$-set $H_{k+1}$
formed in the $\left(k+1\right)$-th iteration of the algorithm. We
already know that the coloring achieved of the vertices of the sub-graph
induced by $H_{k+1}$ is proper. Let us look at 3 vertices $u,v,w$
where $u\in\cup_{j=1}^{k}H_{j},v\in H_{k+1},w\in V\setminus\cup H_{j=1}^{k+1}$.
Since within Procedure Arb-Linial each vertex chooses a different
color than each of its parents in the forests-decomposition created
beforehand, the vertex $v$ chose a different color than $w$ (regarding
the ID value of $w$ as its initial color) in the $\left(k+1\right)$-th
iteration of the devised algorithm. Let us denote the color of each
vertex $y\in V$ in the current iteration $k+1$ as $c_{y}$. In the
previous iteration, iteration $k$, the vertex $u$ chose a color
$c_{u}\in F_{ID(u)}$ that did not belong to any of the sets $F_{ID(v_{1})},F_{ID(v_{2})},...,F_{ID(v_{|\Pi(u)|})}$
of any of the parents $v_{1},v_{2},...,v_{\left|\Pi(u)\right|}$ of
$u$. Specifically, this color chosen by $u$ satisfies $c_{u}\notin F_{ID(v)}$.
Therefore, necessarily it holds that $c_{u}\not\ne c_{v}$. It follows
overall that $c_{u}\ne c_{v}$ and $c_{v}\ne c_{w}$. As this holds
for any three vertices $u\in\cup_{j=1}^{k}H_{j},v\in H_{k+1},w\in V\setminus\cup H_{j=1}^{k+1}$,
it follows that the coloring achieved is proper for the vertices of
the sub-graph induced by the vertices of $\cup_{j=1}^{k+1}H_{j}$.
Therefore, for every $H$-set $H_{i}$ the coloring achieved is proper
for the vertices of the sub-graph induced by $\cup_{j=1}^{i}H_{j}$.
This holds in particular for $\cup_{j=1}^{\ell}H_{j}$, which constitutes
the entire input graph, obtaining that the coloring is proper for
the entire input graph. Since each vertex $v\in V$ is ultimately
colored using one of $O(A^{2}\log n)$ colors, the maximum amount
of colors at the end of the execution of this algorithm is $O(A^{2}\log n)=O(a^{2}\log n)$.

We now analyze the vertex-averaged complexity of the devised algorithm.
Since for each $H$-set $H_{i}$, for $1\le i\le\ell$, we execute
a single round of Procedure Arb-Linial-Coloring in the second step
of the devised algorithm, according to Corollary \ref{partitionBasedAlgAverageTimePerVertex},
the vertex-averaged complexity of the devised algorithm is $O(1)$
rounds.

\end{proof}

We note that the best currently-known algorithm requires $\Omega\left(\frac{\log n}{\log a+\log\log n}\right)$
time in the worst case for computing an $O(a^{2}\log n)$ coloring
from scratch. Theorem \ref{singleRoundArbLinialColoringTheorem} implies
the following corollary.

\begin{corollary}\label{singleRoundArbLinialColoringCorollary}

For an input graph $G=(V,E)$ with a constant arboricity $a$, our
algorithm computes an $O(\log n)$-coloring with a vertex-averaged
complexity of $O(1)$.

\end{corollary}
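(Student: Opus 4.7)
The plan is to derive the corollary as an immediate specialization of Theorem \ref{singleRoundArbLinialColoringTheorem} by substituting the hypothesis that the arboricity $a$ is constant into both the color-count bound and the vertex-averaged complexity bound, and observing that no other step of the argument depends on $a$ in a way that would change asymptotics.

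First, I would invoke Theorem \ref{singleRoundArbLinialColoringTheorem} directly on the given input graph $G=(V,E)$. That theorem guarantees that the devised algorithm (one iteration of Procedure Parallelized-Forest-Decomposition combined, in each round, with a single round of Procedure Arb-Linial-Coloring applied to the freshly formed $H$-set) produces a proper coloring of $G$ using $O(a^2 \log n)$ colors, with vertex-averaged complexity $O(1)$.

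Second, I would plug in $a = O(1)$. The color bound becomes $O(a^2 \log n) = O(\log n)$, which is exactly the bound claimed in the corollary. The vertex-averaged complexity bound is already $O(1)$ and, being independent of $a$, is unaffected by the assumption that $a$ is constant. Hence both conclusions of the corollary follow at once.

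There is essentially no obstacle here; the corollary is a syntactic specialization of Theorem \ref{singleRoundArbLinialColoringTheorem}. The only point worth remarking on, for completeness, is that the constant factors hidden in the $O(\cdot)$ notation for the color count depend on $a$, so the substitution $a = O(1)$ is legitimate precisely because $a$ is treated as a constant, not merely a parameter that happens to be small for particular instances. Once this observation is made, the proof is a single line.
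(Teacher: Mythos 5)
Your proposal is correct and matches the paper exactly: the paper states this corollary as an immediate consequence of Theorem \ref{singleRoundArbLinialColoringTheorem}, obtained by substituting $a=O(1)$ into the $O(a^{2}\log n)$ color bound while the $O(1)$ vertex-averaged complexity is unaffected. No further argument is given or needed.
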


\subsection{$O(a^{2})$-Coloring in $O(\log\log n)$ Vertex-Averaged Complexity\label{subsec:O(a^2)ColoringInO(loglogn)AverageTime}}

In this section we devise an algorithm for $O(a^{2})$-coloring that
consists of two phases. The first phase lasts for $O(\log\log n)$
rounds, while the second phase lasts for $O(\log n)$ rounds. However,
most vertices of the input graph terminate within the first phase,
and so the average running time per vertex of the algorithm is only
$O(\log\log n)$. The algorithm proceeds as follows.

In the first phase, we execute Procedure Parallelized-Forest-Decomposition
for $t=\left\lfloor c'\log\log n\right\rfloor $ iterations of its
main loop, for $c'=\log_{\frac{2+\epsilon}{2}}2$. This invocation
constructs $H$-sets $H_{1},H_{2},...,H_{t}$, and orients and labels
the edges with at least one endpoint in subset $H_{i}$, once $H_{i}$
has formed, for iterations $i=1,2,...,t$. Once $t$ sets have been
formed, we run Procedure Arb-Linial-Coloring for $O(\log^{*}n)$ rounds
on them. This results in $O(a^{2})$ colors, rather than $O(a^{2}\log n)$
when the procedure is executed just for a single round. Then we assign
each vertex in the $t$ first $H$-sets formed so far a color $\left\langle c,1\right\rangle $
where $c$ is the color assigned by Procedure Arb-Linial-Coloring\emph{.}
This completes the first phase of our algorithm.

In the second phase, we continue running Procedure Parallelized-Forest-Decomposition
until every vertex has joined some $H$-set. Then, we run Procedure
Arb-Linial-Coloring for $O(\log^{*}n)$ rounds again on subgraph induced
by the sets $H_{t+1}\cup H_{t+2}\cup...\cup H_{\ell}$. Finally, we
assign $\left\langle c,2\right\rangle $ as the final coloring for
each vertex in the subsets $H_{t+1},H_{t+2},...,H_{\ell}$.

We analyze the devised algorithm in the following two lemmas.

\begin{lemma}\label{O(a^2)ColoringO(loglogn)TimeCorrectnessNumColorsLemma}

The devised algorithm properly colors the input graph using $O(a^{2})$
colors.

\end{lemma}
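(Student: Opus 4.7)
The plan is to decompose the proof into three independent verifications: properness of the coloring inside phase~1, properness inside phase~2, and properness on edges that bridge the two phases. For the color-count, I would simply tally the number of first-components produced in each phase and observe that the two second-components merely double the total.

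To handle phase~1, I would first note that after $t=\lfloor c'\log\log n\rfloor$ iterations of Procedure Parallelized-Forest-Decomposition, the $H$-sets $H_{1},\ldots,H_{t}$ and an orientation of every edge incident to $\bigcup_{j\le t}H_{j}$ are in place. Because of how the orientation is defined (by $H$-set index, breaking ties by ID), every vertex $v\in H_{i}$ has at most $A=(2+\epsilon)a$ parents in the whole graph, and hence also at most $A$ parents \emph{inside} the subgraph $G\bigl(\bigcup_{j\le t}H_{j}\bigr)$. This is exactly the hypothesis under which Procedure Arb-Linial-Coloring is proven correct in Theorem~5.2 and Lemma~5.4 of~\cite{Barenboim2013}: $O(\log^{*}n)$ of its rounds reduce the $n$-coloring into an $O(A^{2})=O(a^{2})$-coloring that is proper on $G\bigl(\bigcup_{j\le t}H_{j}\bigr)$. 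Appending the tag $1$ leaves properness intact. The argument for phase~2 is identical, applied to the subgraph $G\bigl(\bigcup_{j>t}H_{j}\bigr)$, and the resulting tags are $\langle c,2\rangle$. For any edge $\{u,v\}$ with $u\in\bigcup_{j\le t}H_{j}$ and $v\in\bigcup_{j>t}H_{j}$, the final colors of $u$ and $v$ differ in their second coordinate, so no cross-phase conflict can occur.

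The color count then follows immediately: phase~1 produces $O(a^{2})$ distinct first-components, phase~2 produces another $O(a^{2})$, and each is paired with one of two tag values, for a total of $2\cdot O(a^{2})=O(a^{2})$ colors. The one spot that deserves care, and which I expect to be the main subtlety, is the claim that Arb-Linial-Coloring remains correct when it is invoked on only a subset of the vertices while some of their orientation-neighbors sit in the other subset: I would address this by pointing out that an inactive (or not-yet-active) vertex plays the role only of a potential parent or child in the \emph{full} graph, but once we consider the induced subgraph, every vertex's parent set shrinks (or stays the same), so the out-degree bound $A$ required by Arb-Linial-Coloring is preserved. With that observation, the three properness checks and the color count combine to give the lemma.
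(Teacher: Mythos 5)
Your proposal is correct and follows essentially the same route as the paper's proof: properness within each of the two phases via the correctness of Procedure Arb-Linial-Coloring on the respective induced subgraphs, properness across phases via the disjoint second components of the color pairs, and a factor-of-two color count. Your additional remark that passing to an induced subgraph can only shrink each vertex's parent set (so the out-degree hypothesis needed by Arb-Linial-Coloring is preserved) is a point the paper leaves implicit, but it does not change the structure of the argument.
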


\begin{proof}

First, we prove that the algorithm produces a proper coloring. Let
$c_{v}=\left\langle c'_{v},i_{v}\right\rangle $ be the color assigned
by the devised algorithm to a vertex $v\in V$ where $i_{v}\in\{1,2\}$.
Procedure Arb-Linial-Coloring was invoked on the sub-graph of the
input graph induced by the $t=\left\lfloor c'\log\log n\right\rfloor $
first $H$-sets $H_{1},H_{2},...,H_{t}$. Also, Procedure Arb-Linial-Coloring
correctly colors the vertices of an input graph with arboricity $a$,
given an $O(a)$-forests-decomposition of the input graph's edges,
using $O(a^{2})$ colors, according to \cite{Barenboim2008}. Therefore,
for the $t=\left\lfloor c'\log\log n\right\rfloor $ first $H$-sets
$H_{1},H_{2},...,H_{t}$, for each two vertices $u,v\in\cup_{j=1}^{t}H_{j},u\ne v$
it holds that $c'_{v}\ne c'_{u}$. Analogously, for each two vertices
$u,v\in\cup_{j=t+1}^{O(\log n)}H_{j},u\ne v$, it also holds that
$c'_{u}\ne c'_{v}$. In addition, for each two vertices $u,v,u\ne v$
for which $u\in\cup_{j=1}^{t}H_{j},v\in\cup_{j=t+1}^{O(\log n)}H_{j}$,
it holds that $i_{u}=1\ne2=i_{v}$. Therefore, for any pair of different
vertices $u,v\in V$ in the input graph $G=(V,E)$, it holds that
$c_{v}\ne c_{u}$, and therefore the coloring produced by the devised
algorithm is proper.

As for the number of colors employed by the devised algorithm, each
vertex is eventually assigned an ordered pair $\left\langle c,i\right\rangle $
as its final color, where $c$ is the color assigned to the vertex
in the invocation of Procedure Arb-Linial-Coloring, which produces
a coloring using $O(a^{2})$ different colors and $i\in\{1,2\}$.
Therefore, the number of colors employed by the devised algorithm
is at most twice the maximum number of colors employed by\emph{ }Procedure
Arb-Linial-Coloring. Therefore, the devised algorithm employs $O(a^{2})$
colors, as required.

\end{proof}

\begin{lemma}\label{O(a^2)ColoringO(loglogn)TimeComplexityLemma}

The vertex-averaged complexity of the devised algorithm is $O(\log\log n)$.

\end{lemma}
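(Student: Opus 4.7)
My plan is to split the sum of rounds over all vertices into two parts, one for each phase, and bound each part separately. The key observation that makes the argument work is that after $t=\lfloor c'\log\log n\rfloor$ iterations of Procedure Parallelized-Forest-Decomposition (with $c'=\log_{(2+\epsilon)/2} 2$), the number of remaining active vertices is only about $n/\log n$; hence the expensive phase 2 is paid for by very few vertices.

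More precisely, I would first invoke Lemma \ref{activeVerticesNumberLemma} on the choice of $t$. Since $c'$ is chosen so that $((2+\epsilon)/2)^{c'}=2$, we get
\[
n_{t+1} \;\le\; \Bigl(\tfrac{2}{2+\epsilon}\Bigr)^{t} n \;=\; \frac{n}{2^{c'\log\log n / c'}} \;=\; \frac{n}{\log n}.
\]
Thus if I let $V_1=H_1\cup\cdots\cup H_t$ and $V_2=V\setminus V_1$, I have $|V_1|\le n$ and $|V_2|\le n/\log n$.

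Next I bound the per-vertex running time in each phase. A vertex in $V_1$ joins some $H_i$ during the first $t$ rounds of Partition and then participates in the $O(\log^* n)$ rounds of Procedure Arb-Linial-Coloring, so its total running time is at most $t+O(\log^* n)=O(\log\log n)$. A vertex in $V_2$ runs through all $t$ phase-1 Partition rounds plus the remaining $O(\log n)$ Partition rounds needed for it to join an $H$-set, plus a final $O(\log^* n)$ rounds of Arb-Linial-Coloring in phase 2, for a total of $O(\log n)$ rounds per vertex. Summing,
\[
\mathit{RoundSum}(V) \;\le\; |V_1|\cdot O(\log\log n) \;+\; |V_2|\cdot O(\log n) \;\le\; n\cdot O(\log\log n) \;+\; \tfrac{n}{\log n}\cdot O(\log n) \;=\; O(n\log\log n).
\]
Dividing by $n$ yields the claimed $O(\log\log n)$ vertex-averaged complexity.

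The only mildly delicate point, and the place I would be most careful, is the bookkeeping of what ``active'' means in phase 1: a vertex joining $H_i$ for $i<t$ does not finalize its output until after the Arb-Linial-Coloring step, so it must be charged for all $t+O(\log^* n)$ rounds rather than just $i$ rounds. The argument still goes through because even this pessimistic per-vertex bound is $O(\log\log n)$, but I would explicitly note this when stating the bound for $V_1$ so that the counting is unambiguous. A minor variant, tightening the $V_1$ term using Lemma \ref{totalActiveRoundsNumLemma} to replace the trivial bound $|V_1|\cdot t$ by $O(n)$, is not needed for the $O(\log\log n)$ conclusion but could be mentioned for completeness.
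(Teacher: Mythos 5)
Your proposal is correct and follows essentially the same route as the paper's proof: bound the phase-1 cost by $n\cdot O(\log\log n + \log^* n)$, use Lemma \ref{activeVerticesNumberLemma} with the choice $c'=\log_{\frac{2+\epsilon}{2}}2$ to show only $O(n/\log n)$ vertices survive into phase 2, charge each of those $O(\log n)$ rounds, and divide by $n$. The only cosmetic difference is that the paper carries the floor in $t=\lfloor c'\log\log n\rfloor$ through the geometric computation explicitly, which you elide, but this affects the bound only by a constant factor.
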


\begin{proof}

In the first phase of the devised algorithm, $t=\left\lfloor c'\log\log n\right\rfloor $
$H$-sets are formed, the edges in the graph induced by them subsequently
oriented and labeled and then Procedure Arb-Linial-Coloring\emph{
}invoked on them. The worst-case time complexity of Procedure Arb-Linial-Coloring\emph{
}is $O(\log^{*}n)$ and the orienting and labeling steps occur in
constant time. Therefore, the total number of rounds of all vertices
until this point is: 
\[
O\left(n\cdot\left(\log\log n+\log^{*}n\right)\right)=O(n\log\log n)
\]
In the second phase, we partition the remaining vertices not yet belonging
to any subset to subsets $H_{i}$ for$\left\lfloor c'\log\log n\right\rfloor +1\le i\le\ell$.
According to Lemma \ref{activeVerticesNumberLemma}, the number of
active vertices at this point of the algorithm, that are in $\cup_{i=\left\lfloor c'\log\log n\right\rfloor +1}^{O(\log n)}H_{i}$,
for $0<\epsilon\le2$, is at most:
\begin{align*}
\left(\frac{2}{2+\epsilon}\right)^{\left\lfloor c'\log\log n\right\rfloor }n & \le\left(\frac{2}{2+\epsilon}\right)^{c'\log\log n-1}n\\
 & \le\left(\frac{2+\epsilon}{2}\right)\cdot\left(\frac{2}{2+\epsilon}\right)^{\log_{\frac{2+\epsilon}{2}}2\log\log n}n\\
 & =\left(\frac{2+\epsilon}{2}\right)\cdot\left(\frac{2+\epsilon}{2}\right)^{\log_{\frac{2+\epsilon}{2}}\frac{1}{2}\log\log n}n\\
 & =\left(\frac{2+\epsilon}{2}\right)\cdot\left(\frac{1}{2}\right)^{\log\log n}n\\
 & =O\left(\frac{n}{\log n}\right)
\end{align*}
Each of these vertices performs $O(\log n+\log^{*}n)=O(\log n)$ rounds
of computation. It follows that the vertex-averaged complexity of
the devised algorithm is:
\begin{align*}
\frac{O(n\log\log n+\frac{n}{\log n}\cdot\log n)}{n} & =\frac{O(n\log\log n)}{n}=O(\log\log n)
\end{align*}
as required.

\end{proof}

Lemmas \ref{O(a^2)ColoringO(loglogn)TimeCorrectnessNumColorsLemma}, \ref{O(a^2)ColoringO(loglogn)TimeComplexityLemma} 
imply the following theorem.

\begin{theorem}\label{O(a^2)ColoringO(loglogn)TimeTheorem}

The devised algorithm properly colors an input graph using $O(a^{2})$
colors with a vertex-averaged complexity of $O(\log\log n)$ rounds.

\end{theorem}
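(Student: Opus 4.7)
The plan is very short, because the theorem is a direct packaging of the two preceding lemmas. Lemma \ref{O(a^2)ColoringO(loglogn)TimeCorrectnessNumColorsLemma} already gives both correctness (that the assignment is a proper coloring) and the color bound $O(a^{2})$, while Lemma \ref{O(a^2)ColoringO(loglogn)TimeComplexityLemma} gives the vertex-averaged running time $O(\log\log n)$. Hence the proof is essentially a single sentence: the three clauses of Theorem \ref{O(a^2)ColoringO(loglogn)TimeTheorem} are exactly the conjunction of these two lemmas applied to the algorithm described at the start of Section \ref{subsec:O(a^2)ColoringInO(loglogn)AverageTime}.

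If I want to make the proof slightly more self-contained, I would briefly reprise the two-phase structure to remind the reader why the lemmas can be invoked jointly. In phase one the first $t=\lfloor c'\log\log n\rfloor$ $H$-sets are produced by Procedure Parallelized-Forest-Decomposition and then colored by $O(\log^{*}n)$ rounds of Procedure Arb-Linial-Coloring, with the second coordinate of each color fixed to $1$. In phase two the remaining vertices are placed in $H_{t+1},\ldots,H_{\ell}$, colored similarly, and tagged with second coordinate $2$. The tag prevents any phase-one/phase-two color conflict, so Lemma \ref{O(a^2)ColoringO(loglogn)TimeCorrectnessNumColorsLemma} yields a proper $O(a^{2})$ coloring of all of $V$. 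For the time bound, Lemma \ref{activeVerticesNumberLemma} ensures only $O(n/\log n)$ vertices survive into phase two, so phase one contributes $O(n\log\log n)$ total vertex-rounds and phase two contributes $O((n/\log n)\cdot\log n)=O(n)$, and dividing by $n$ gives $O(\log\log n)$, as asserted by Lemma \ref{O(a^2)ColoringO(loglogn)TimeComplexityLemma}.

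There is no real obstacle at this final step; the substantive work, namely the choice of threshold $t=\lfloor c'\log\log n\rfloor$ that makes the geometric decay of $n_{i}$ reach $n/\log n$, and the verification that Procedure Arb-Linial-Coloring run for $O(\log^{*}n)$ rounds produces an $O(a^{2})$-coloring on each side, was already carried out inside the two lemmas. The only care needed is to note that both lemmas refer to the same object (the algorithm of this subsection), so the conclusions may be combined without further argument. Accordingly my proof would simply read: ``The claim is immediate from Lemma \ref{O(a^2)ColoringO(loglogn)TimeCorrectnessNumColorsLemma} and Lemma \ref{O(a^2)ColoringO(loglogn)TimeComplexityLemma}.''
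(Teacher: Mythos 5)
Your proposal matches the paper exactly: the paper states this theorem immediately after Lemmas \ref{O(a^2)ColoringO(loglogn)TimeCorrectnessNumColorsLemma} and \ref{O(a^2)ColoringO(loglogn)TimeComplexityLemma} and justifies it as their direct conjunction, with no additional argument. Your optional recap of the two-phase structure is consistent with the content of those lemmas, so nothing is missing.
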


\subsection{$O\left(a\right)$-Vertex-Coloring in $O\left(a\log\log n\right)$
Vertex-Averaged Complexity\label{subsec:O(a)VertexColoringSectionWithImprovedVAC}}

In this section we devise an algorithm that colors the vertices of
an input graph using $O\left(a\right)$ colors with a vertex-averaged
complexity of $O\left(a\log\log n\right)$ rounds.

The algorithm consists of two phases. The first phase consists of
$t=\left\lfloor \log\log n\right\rfloor $ rounds and proceeds as
follows. To begin with, we execute Procedure Partition. In each iteration
of Procedure Partition, once an $H$-set $H_{i}$ is formed, we do
the following. First, we color the vertices of $H_{i}$ using the
$\left(\Delta+1\right)$-coloring algorithm of \cite{Barenboim2009}.
Next, we compute an acyclic orientation of the edges of the sub-graph
$G\left(H_{i}\right)$, by orienting each edge $e=\left\{ u,v\right\} $
in the set of edges of $G\left(H_{i}\right)$ towards the vertex with
a higher color value. Subsequently, we orient each edge $e=\left\{ u,v\right\} $
connecting a vertex $u\in H_{i}$ with a vertex $v\in H_{j}$ for
$j>i$, towards $v$. Once the the first phase has completed, we recolor
the vertices of $\cup_{i=1}^{t}H_{i}$ using the following method,
used in chapter 5 of \cite{Barenboim2013}. We start the recoloring
at the vertices of $H_{t}$. Each vertex $v$ which has not yet assigned
itself a color, within the recoloring stage, first waits for all of
its parents, with respect to the orientation created in the previous
steps, to first choose a color, and then chooses a new color $c_{v}$
for itself from the palette $\left\{ 1,2,...,A+1\right\} $. Then,
the vertex $v$ assigns itself the color $c'_{v}=\left\langle c_{v},1\right\rangle $.

In the second phase, we perform the same operations as in the first
phase, only that in the recoloring stage, we start the recoloring
at the $H$-set $H_{\log\log n}$ and stop it once we have completed
the recoloring of the vertices of $H_{t+1}$. Also, once a vertex
$v$ chooses for itself a new color $c_{v}\in\left\{ 1,2,...,A+1\right\} $
different than that of each its parents, it then assigns itself the
color $c'_{v}=\left\langle c_{v},2\right\rangle $. This completes
the description of the devised algorithm.

\begin{lemma}\label{O(a)ColoringWithImprovedVACCorrectnessLemma}

The devised algorithm properly colors an input graph's vertices using
$O\left(a\right)$ colors.

\end{lemma}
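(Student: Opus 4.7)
The plan is to verify two claims: (i) the algorithm outputs at most $O(a)$ distinct colors, and (ii) adjacent vertices receive distinct final colors. Claim (i) is immediate: each vertex's final color is a pair $\langle c_v, i\rangle$ with $c_v\in\{1,\ldots,A+1\}$ and $i\in\{1,2\}$, so the palette has size $2(A+1)=O(a)$, since $A=(2+\epsilon)a$.

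For (ii), the key structural claim is that within each phase the algorithm constructs an acyclic orientation of the induced subgraph on that phase's $H$-sets whose maximum out-degree is at most $A$. Once this is in place, the recoloring step (parents before children, each child picking a color in $\{1,\ldots,A+1\}$ that avoids its parents' new colors) is well-defined and yields a proper coloring on the subgraph by the standard argument for coloring via bounded-out-degree acyclic orientations.

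To establish the out-degree bound, I would consider a vertex $v\in H_i$. Its outgoing edges consist of (a) all edges to neighbors in later $H$-sets $H_j$ with $j>i$, and (b) edges within $H_i$ oriented toward a neighbor with larger $(\Delta+1)$-color. These together form exactly the set of neighbors of $v$ in $\bigcup_{j\ge i}H_j$, which by the defining property of Procedure Partition has cardinality at most $A$. For acyclicity, cross-set edges respect the index order of the $H$-sets, while within-$H_i$ edges respect the $(\Delta+1)$-coloring of $G(H_i)$; neither kind can contribute to a directed cycle, and edges of the two kinds cannot mix in a cycle since the cross-set orientation strictly increases the set index. Hence every vertex has at most $A$ parents already colored when it is its turn, and the palette $\{1,\ldots,A+1\}$ suffices.

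Properness on the whole graph then follows by a case analysis on an adjacent pair $(u,v)$. If both lie in $\bigcup_{i\le t}H_i$ (phase 1) or both in $\bigcup_{i>t}H_i$ (phase 2), the parent–child relation along the edge together with the recoloring rule forces $c_u\neq c_v$, so the final colors differ. If one lies in phase 1 and the other in phase 2, their second coordinates ($1$ versus $2$) already differ, so the final colors are distinct regardless of the $c$-coordinates. The main step, and the one I expect to require the most care, is the out-degree and acyclicity verification for the composite orientation built from Procedure Partition together with the within-$H_i$ $(\Delta+1)$-coloring; once that structural lemma is in hand, the rest of the argument is the standard acyclic-orientation coloring argument.
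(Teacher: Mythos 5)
Your proof is correct and follows essentially the same route as the paper: bound the palette by the pair structure $\langle c_v,i\rangle$ with $c_v\in\{1,\dots,A+1\}$, establish that the combined orientation (within-$H_i$ by color, across $H$-sets by index) is acyclic with out-degree at most $A$ via the Procedure Partition guarantee on neighbors in $\cup_{j\ge i}H_j$, and then conclude properness from the parent-first recoloring plus the disjoint second coordinates of the two phases. Your explicit argument that cross-set and within-set edges cannot mix in a directed cycle is in fact slightly more careful than the paper's, which asserts the acyclicity of the cross-set orientation by reference to an earlier section.
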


\begin{proof}

We first prove the devised algorithm computes a proper vertex-coloring
of the input graph using $O\left(a\right)$ colors. We first observe,
that according to Section \ref{subsec:Procedure-Partition-Description},
the maximum degree of each sub-graph $G\left(H_{i}\right)$, induced
by an $H$-set $H_{i}$, is $O\left(a\right)$. Therefore, each invocation
of the $\left(\Delta+1\right)$-coloring algorithm of \cite{Barenboim2009}
on an $H$-set $H_{i}$ colors it using $O\left(a\right)$ colors.

Next, we prove, that in the step of orienting the edges $e=\left\{ u,v\right\} $
where $u,v\in H_{i}$, for some $H$-set $H_{i}$, we have computed
an acyclic orientation of the edges of $G\left(H_{i}\right)$. The
proof we present is based on Property 3.4 in \cite{Barenboim2013}.
We remind that in this step of the algorithm, we oriented each edge
$e=\left\{ u,v\right\} $ towards the vertex with the larger color
value. Therefore, following this operation, each directed path in
$G\left(H_{i}\right)$ consists of vertices whose colors appear in
a strictly ascending order. Therefore, the produced orientation of
the edges of $G\left(H_{i}\right)$ is acyclic.

In addition, the orientation created by the algorithm for edges $e=\left\{ u,v\right\} $
for $u\in H_{i},v\in H_{j}$, is also acyclic, according to Section
\ref{subsec:Parallelized-Forest-Decomposition}. Moreover, each vertex
$v\in H_{i}$ has at most $A$ neighbors in $\cup_{j=i}^{\log n}H_{j}$,
according to Section \ref{subsec:Procedure-Partition-Description}.

Therefore, in the recoloring step of the devised algorithm, each vertex
properly re-assigns itself once a color different than that of each
of its parents, with respect to the orientation created by the devised
algorithm, if such a color exists. Also, since each vertex $v\in H_{i}$
has at most $A$ neighbors in $\cup_{j=i}^{\log n}H_{j}$, such a
color necessarily exists. Therefore, the devised algorithm properly
colors each of the sub-graphs $G\left(\cup_{i=1}^{t}H_{i}\right)$,
$G\left(\cup_{i=t+1}^{\log\log n}H_{i}\right)$ of $G$. In addition,
the color assigned to each vertex $v$, where $v\in G\left(\cup_{i=1}^{t}H_{i}\right)$,
or $v\in G\left(\cup_{i=t+1}^{\log\log n}H_{i}\right)$, is of the
form $\left\langle c_{v},i\right\rangle $. The color $c_{v}$ is
the color $v$ assigned itself in the recoloring step of the devised
algorithm from the palette $\left\{ 1,2,...,A+1\right\} $, and for
each two vertices $u,v$, with colors $\left\langle c_{u},i\right\rangle ,\left\langle c_{v},j\right\rangle $,
it holds that $i\ne j$. Therefore, the devised algorithm properly
colors the input graph using at most $O\left(a\right)$ colors, as
required.

\end{proof}

\begin{lemma}\label{O(a)VertexColoringImprovedVACTimeComplexityLemma}

The devised algorithm's vertex averaged complexity is $O\left(a\log\log n\right)$
rounds.

\end{lemma}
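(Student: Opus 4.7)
The plan is to bound the total number of rounds summed over all vertices and then divide by $n$. I would partition the vertices into two groups according to the phase in which they become inactive: the phase-1 group consists of vertices in $\bigcup_{i=1}^{t} H_i$, and the phase-2 group consists of vertices in $\bigcup_{i=t+1}^{\ell} H_i$, where $t = \lfloor c' \log \log n \rfloor$ with $c' = \log_{(2+\epsilon)/2} 2$.

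For a vertex $v \in H_i$ with $i \leq t$, I would decompose its round count into three pieces. First, it participates in at most $i \leq t$ rounds of Procedure Partition in order to be identified as a member of $H_i$. Second, once $H_i$ is formed, $v$ participates in the $(\Delta+1)$-coloring of $G(H_i)$, which is executed on a graph of maximum degree at most $A = O(a)$ and therefore takes $T_{col}(a)$ rounds, where $T_{col}(a)$ is a polynomial-in-$a$ quantity (plus $\log^* n$) from \cite{Barenboim2009}; this can proceed in parallel with later partition rounds. Third, $v$ waits through the recoloring stage, and the key observation is that the time until $v$ fixes its recoloring equals the length of the longest directed path starting at $v$ in the acyclic DAG induced on $\bigcup_{j=i}^{t} H_j$. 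Within each $G(H_j)$, the orientation is from lower to higher colors of a palette of size $O(a)$, so every directed path inside a single $G(H_j)$ has length $O(a)$; a path may cross from $H_j$ to $H_{j+1}, \ldots, H_t$ via the between-$H$-set orientation, contributing at most $t - i + 1$ such segments. Thus the recoloring wait is $O(a(t-i+1)) = O(at)$, giving a total of $O(at + T_{col}(a)) = O(a \log \log n)$ rounds per phase-1 vertex.

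For a phase-2 vertex the analogous accounting gives at most $O(a \log n + T_{col}(a)) = O(a \log n)$ rounds, since the second phase runs the same type of recoloring but on as many as $\ell = O(\log n)$ H-sets. The decisive point is that by Lemma \ref{activeVerticesNumberLemma} together with the choice of $t$, the number of phase-2 vertices is at most
\[
n_{t+1} \leq \left(\tfrac{2}{2+\epsilon}\right)^{t} n \leq \left(\tfrac{1}{2}\right)^{\log \log n} n = O\!\left(\tfrac{n}{\log n}\right),
\]
exactly as computed in the proof of Lemma \ref{O(a^2)ColoringO(loglogn)TimeComplexityLemma}.

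Summing the contributions: phase 1 contributes at most $n \cdot O(a \log \log n) = O(na \log \log n)$ to the total round count, and phase 2 contributes at most $O(n/\log n) \cdot O(a \log n) = O(na)$. Adding these and dividing by $n$ yields the desired bound of $O(a \log \log n)$. The main obstacle I expect is justifying the longest-path bound in the DAG rigorously, because one must combine the acyclic orientation inside each $G(H_j)$ (governed by the $O(a)$-size color palette) with the between-$H$-set edges (all directed toward higher-indexed sets), and confirm that no path in the DAG can exceed $O(a(t-i+1))$ starting from a vertex of $H_i$; once this is established, the remainder of the analysis consists of straightforward summation against Lemma \ref{activeVerticesNumberLemma}.
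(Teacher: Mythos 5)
Your proof is correct and follows essentially the same route as the paper's: bound the per-vertex cost in phase one by $O(a\log\log n)$ (dominated by the recoloring wait, which equals the length of the acyclic orientation across the $O(\log\log n)$ $H$-sets, each contributing $O(a)$), bound the phase-two cost by $O(a\log n)$, invoke Lemma \ref{activeVerticesNumberLemma} to show only $O(n/\log n)$ vertices reach phase two, and divide the total by $n$. Your version is in fact slightly more careful than the paper's on two points the paper glosses over — the explicit longest-directed-path decomposition justifying the $O(a(t-i+1))$ recoloring wait, and the choice $t=\lfloor c'\log\log n\rfloor$ with $c'=\log_{(2+\epsilon)/2}2$, which is needed for the $(2/(2+\epsilon))^{t}n=O(n/\log n)$ step to hold for general $\epsilon$.
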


\begin{proof}

In each of the two phases, executing all rounds of Procedure Partition
requires $O\left(n\right)$ rounds in the worst case, according to
Section \ref{subsec:Procedure-Partition-Description}. Each invocation
of the $\left(\Delta+1\right)$-coloring algorithm of \cite{Barenboim2009}
requires $O\left(\Delta+\log^{*}n\right)$ rounds, where $\Delta$
in this context is the maximum degree in a subgraph induced by a certain
$H$-set $H_{i}$. Therefore, each invocation of the $\left(\Delta+1\right)$-coloring
algorithm of \cite{Barenboim2009} requires $O\left(a+\log^{*}n\right)$
rounds. Each of the steps of orienting edges requires a constant number
of rounds.

In the first phase, the recoloring step requires a number of rounds
that is asymptotically equal to the maximum length of the orientation
created by the previous steps that included the orienting of edges.
This orientation is acyclic, as shown in the proof of Lemma \ref{O(a)ColoringWithImprovedVACCorrectnessLemma}.
Therefore, the length of the orientation within a certain $H$-set
$H_{i}$ is $O(a)$. In addition, since $t=O\left(\log\log n\right)$
are created and handled in the first phase, the length of the orientation
of the edges of $G\left(\cup_{i=1}^{t}H_{i}\right)$ is $O\left(a\log\log n\right)$.
Therefore, the worst-case number of rounds required by the recoloring
step in the first phase is $O\left(a\log\log n\right)$. 

In the second phase, following a similar analysis to that of the worst-case
running time of the first phase of the devised algorithm, the worst-case
running time of the second phase is $O\left(a\log n\right)$.

According to Lemma \ref{activeVerticesNumberLemma}, in iterations
$t+1\le i\le O\left(\log n\right)$ of the devised algorithm, that
is, in the second phase of the algorithm, the number of active vertices
is $O\left(\frac{n}{2^{\log\log n}}\right)=O\left(\frac{n}{\log n}\right)$.
In the first phase, trivially at most $n$ vertices are active. Therefore,
the vertex-averaged complexity of the devised algorithm is:
\[
O\left(\frac{na\log\log n+\frac{n}{\log n}\cdot a\log n}{n}\right)=a\log\log n
\]

as required.

\end{proof}

We summarize the properties of the devised algorithm in the following
theorem.

\begin{theorem}\label{O(ak)DeterministicVertexColoringTheorem}

The devised algorithm properly colors the vertices of an input graph
using $O\left(a\right)$ colors, with a vertex-averaged complexity
of $O\left(a\log\log n\right)$ rounds.

\end{theorem}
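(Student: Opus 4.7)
The plan is to obtain this theorem as a direct corollary of the two preceding lemmas, which between them have already done all of the real work. Specifically, Lemma \ref{O(a)ColoringWithImprovedVACCorrectnessLemma} establishes that the two-phase algorithm described above produces a proper vertex coloring of $G$ using $O(a)$ colors, and Lemma \ref{O(a)VertexColoringImprovedVACTimeComplexityLemma} establishes that its vertex-averaged complexity is $O(a\log\log n)$. Since both lemmas refer to the same algorithm defined in this subsection, the theorem follows by simply conjoining their conclusions, so the body of the proof should be one or two lines.

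If I were proving the two underlying lemmas from scratch, the order of the key steps would be as follows. First, I would use the property from Section \ref{subsec:Procedure-Partition-Description} that each subgraph $G(H_i)$ has maximum degree $O(a)$, so the $(\Delta+1)$-coloring algorithm of \cite{Barenboim2009} yields an $O(a)$-coloring of $H_i$ that can be used to orient each intra-$H_i$ edge toward the larger color; this orientation is acyclic because colors strictly increase along directed paths. Second, I would combine this with the inter-$H$-set orientation (from vertices in $H_i$ to vertices in $H_j$, $j>i$) to obtain an acyclic orientation of out-degree $O(a)$ on all of $\bigcup_i H_i$, which guarantees that the sequential recoloring from the palette $\{1,\ldots,A+1\}$ always finds an available color at each vertex. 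Third, the disjoint second coordinates $\{1,2\}$ attached in the two phases keep the two sub-colorings from clashing, so that the final output uses $O(a)$ colors overall.

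For the complexity claim, I would analyze the two phases separately and then average. In the first phase, $t=\lfloor\log\log n\rfloor$ $H$-sets are built and recolored along directed paths of total length $O(a\log\log n)$; since trivially all $n$ vertices may participate, this contributes $O(n\cdot a\log\log n)$ to the sum of rounds. In the second phase, the analogous bound on the recoloring chain is $O(a\log n)$, but by Lemma \ref{activeVerticesNumberLemma} only $O(n/\log n)$ vertices reach this phase, contributing $O((n/\log n)\cdot a\log n)=O(na)$. Dividing the combined total $O(na\log\log n)$ by $n$ gives the bound.

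The main obstacle I anticipate is calibrating the threshold $t$ that separates the two phases: the recoloring step runs in time proportional to the length of the longest directed path through the accumulated $H$-sets, so choosing $t$ too large wastes time in the first phase, while choosing $t$ too small leaves too many vertices to pay the full $O(a\log n)$ cost in the second phase. The choice $t=\lfloor\log\log n\rfloor$ is exactly the point at which, by Lemma \ref{activeVerticesNumberLemma}, the survivor count drops to $O(n/\log n)$, matching the two phase contributions and producing the claimed $O(a\log\log n)$ average.
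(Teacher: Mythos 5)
Your proposal matches the paper exactly: the paper states this theorem with no separate proof, presenting it as a summary that follows immediately from Lemma \ref{O(a)ColoringWithImprovedVACCorrectnessLemma} (properness and $O(a)$ colors) and Lemma \ref{O(a)VertexColoringImprovedVACTimeComplexityLemma} (the $O(a\log\log n)$ vertex-averaged bound). Your sketch of how those two lemmas are proved --- degree bound $O(a)$ on each $G(H_i)$, acyclic orientation by increasing color combined with the inter-$H$-set orientation, recoloring from a palette of size $A+1$, and the two-phase accounting with $O(n/\log n)$ survivors in the second phase --- is also the same argument the paper uses.
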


\subsection{Segmentation: General $O\left(k\alpha\right)$-Vertex-Coloring Scheme\label{subsec:O(k * alpha)VertexColoringInImprovedVertexAveragedComplexity}}

In this section we devise a general scheme, which we call \emph{segmentation,
}that we will use to generalize the algorithms of Sections \ref{subsec:O(a^2)ColoringInO(loglogn)AverageTime},
\ref{subsec:O(a)VertexColoringSectionWithImprovedVAC}. 

First, we present an intuition regarding the devised scheme. Roughly
speaking, we divide the input graph's vertices into $k$ subsets.
The value $k$ belongs to the set $\left\{ 2,3,...,\rho\left(n\right)\right\} $,
where $\rho\left(n\right)$ is the largest integer, such that $\log^{\left(\rho\left(n\right)-1\right)}n\ge\log^{*}n$
. We chose $\rho\left(n\right)$ as the maximum possible value $k$
can take, rather than $\log^{*}n$, for instance, to help obtain useful
results, as will become clearer in a later analysis in this section.
For each $i=1,2,...,k$ the $i$-th subset consists of roughly $\log^{(i)}n$
$H$-sets. Each of these $k$ subsets will be henceforth denoted as
a \emph{segment}. The first segment to be created during the execution
of the scheme is segment $k$, consisting of $\log^{\left(k\right)}n$
subsets, then segment $k-1$, consisting of $\log^{\left(k-1\right)}n$
subsets and so on, until all vertices of the graph belong to some
$H$-set, and also to some segment.

On each of these segments, we execute a slightly modified version
of each of the algorithms of Sections \ref{subsec:O(a^2)ColoringInO(loglogn)AverageTime},
\ref{subsec:O(a)VertexColoringSectionWithImprovedVAC}. The main modification
is that instead of executing the algorithm of each of these mentioned
sections over only 2 phases, we use $k$ phases. We count the phases,
from this point onward, from $k$ down to $1$. Each phase $i$ is
executed on segment $i$. That is, we execute phase $k$ on segment
$k$, then phase $k-1$ on segment $k-1$, and so on. In each phase,
a different palette of colors is used to internally color each segment,
which results in a total number of colors of $O\left(k\cdot\alpha\right)$,
where $\alpha=O\left(a^{2}\right)$, in the case of the algorithm
of Section \ref{subsec:O(a^2)ColoringInO(loglogn)AverageTime}, and
$\alpha=O\left(a\right)$, in the case of the algorithm of Section
\ref{subsec:O(a)VertexColoringSectionWithImprovedVAC}.

We now explain the effect of the use of the scheme on the vertex-averaged
complexity of the algorithms obtained from using it. Based on the
exponential decay in the number of active vertices throughout the
execution of Procedure Partition, as explained in Section \ref{subsec:Procedure-Partition-Description},
in each iteration $2\le i\le k$, the number of active vertices is
$O\left(\frac{n}{\log^{(i)}n}\right)$ . Also, following an analysis
of the worst-case running time of each phase, similarly to the proofs
of Lemmas \ref{O(a^2)ColoringO(loglogn)TimeComplexityLemma}, \ref{O(a)VertexColoringImprovedVACTimeComplexityLemma},
the worst-case time complexity of phase $i$, is $O\left(\log^{\left(i\right)}n\right)$,
in the case of the algorithm of Section \ref{subsec:O(a^2)ColoringInO(loglogn)AverageTime},
or $O\left(a\log^{\left(i\right)}n\right)$, in the case of the algorithm
of Section \ref{subsec:O(a)VertexColoringSectionWithImprovedVAC}.

Figure \ref{fig:segmentationFigure} provides an illustration of an
execution of the general scheme, when applied to the algorithm of
Section \ref{subsec:O(a)VertexColoringSectionWithImprovedVAC}.

\begin{figure}
\includegraphics[width=1\columnwidth]{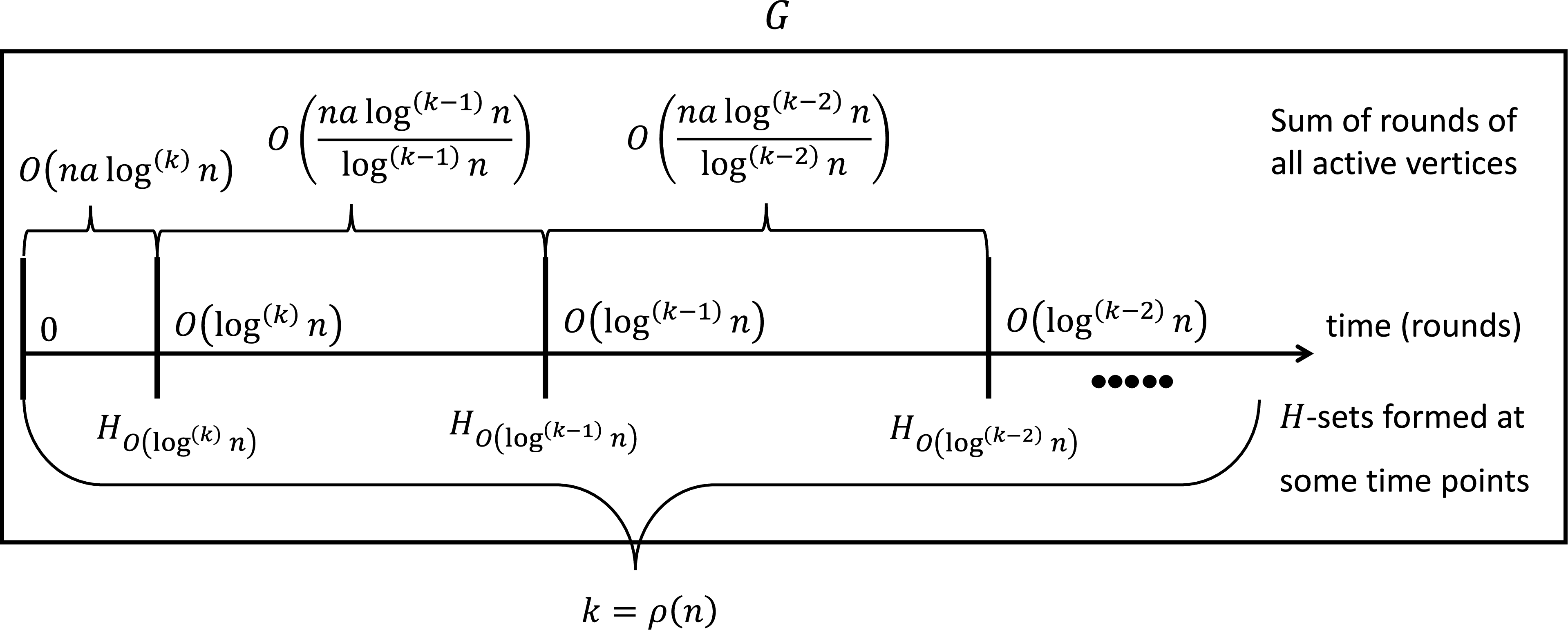}

\caption{An illustration of an execution of the general scheme for $k=\rho\left(n\right)$.\label{fig:segmentationFigure}}
\end{figure}

We now present a formal description of the general scheme. For the
invocation of this general scheme, we assume we are given algorithms
$\mathcal{A}$,$\mathcal{B}$ and $\mathcal{C}$, whose properties
we shall describe shortly. The scheme devised in this section generally
proceeds as follows. In a loop from $i=k$ down to $1$, in each iteration
$i$, we first produce up to $O\left(\log^{\left(i\right)}n\right)$
subsets $H_{j}$ using Procedure Partition, which we also call a segment,
as mentioned above. Upon the formation of each subset $H_{j}$, we
invoke algorithm $\mathcal{A}$, then algorithm $\mathcal{B}$, and
then orient all edges which have one endpoint in $H_{j}$, and the
other endpoint outside any $H$-set, towards the latter endpoint.
In parallel to the execution of algorithms $\mathcal{A},\mathcal{B}$,
and the orientation of edges on the last $H$-set $H_{j}$ formed
by the invocation of Procedure Partition, we do the following. We
continue to execute Procedure Partition to form new $H$-sets, and
run algorithms $\mathcal{A}$ and $\mathcal{B}$, and the orientation
of edges, on these new $H$-sets, until $O\left(\log^{\left(i\right)}n\right)$
$H$-sets have been formed in the current iteration $i$, and algorithms
$\mathcal{A},\mathcal{B}$, and the orientation of edges have finished
execution on each $H$-set formed in the current iteration $i$. 

No general constraints are imposed on the selection of an algorithm
$\mathcal{A}$. The purpose of algorithm $\mathcal{A}$, if it is
necessary to invoke such an algorithm, is to produce output required
by algorithms $\mathcal{B}$, $\mathcal{C}$ when they are later executed.
Algorithm $\mathcal{B}$ takes as input the sub-graph $G\left(H_{j}\right)$,
induced on an $H$-set $H_{j}$, and produces as output an acyclic
orientation of the edges of $G\left(H_{j}\right)$ with length $\lambda$
and out-degree $O\left(a\right)$, for an appropriate parameter $\lambda$.

Suppose that in the current iteration $i$ of the devised scheme,
the $H$-sets $H_{i_{1}},H_{i_{2}},...,H_{i_{l}}$ have been formed.
Algorithm $\mathcal{C}$ begins execution when algorithms $\mathcal{A},\mathcal{B}$,
and the step of the orientation of edges finish their execution on
$H_{i_{1}},H_{i_{2}},...,H_{i_{l}}$. Let us denote $\mathcal{H}=\left\{ H_{i_{1}},H_{i_{2}},...,H_{i_{l}}\right\} $.
Algorithm $\mathcal{C}$ takes as input the sub-graph $G\left(\cup_{H\in\mathcal{H}}H\right)$,
together with the orientation $\mu$ computed by algorithm $\mathcal{B}$
on the sub-graph $G\left(\cup_{H\in\mathcal{H}}H\right)$, where the
length of $\mu$ is $O\left(\lambda\log^{\left(i\right)}n\right)$
and its out-degree is $O\left(a\right)$. Then, algorithm $\mathcal{C}$
colors $G\left(\cup_{H\in\mathcal{H}}H\right)$ using $O\left(\alpha\right)$
colors. 

In addition, we denote the worst-case time complexity of algorithms
$\mathcal{A}$, $\mathcal{B}$ by $T_{\mathcal{A}},T_{\mathcal{B}}$,
respectively. We also denote the worst-case time complexity of algorithm
$\mathcal{C}$ on a subgraph of the form $G\left(\cup_{H\in\mathcal{H}}H\right)$
by $T_{\mathcal{C},l}$. We now formally detail the steps of the devised
scheme. The devised scheme proceeds as follows:
\begin{enumerate}
\item For $i=k$ down to $1$:
\begin{enumerate}
\item Let $c=\frac{2}{\epsilon}$.
\item For $j=1$ to $c\cdot\log^{\left(i\right)}n$:
\begin{enumerate}
\item Perform a round of Procedure Partition.
\item Upon the formation of the $H$-set created in the current iteration
$j$ of the loop of step 1(b), invoke algorithm $\mathcal{A}$ on
it.
\item Once algorithm $\mathcal{A}$ completes execution on the $H$-set
computed in step 1(b)i, execute algorithm $\mathcal{B}$ on it.
\item Suppose the $H$-set formed in the current iteration of the loop of
step 1(b) is $H_{l}$. For each vertex $u\in H_{l}$, for each edge
$e=\left\{ u,v\right\} \in E$, such that $v$ does not yet belong
to any $H$-set, orient $e$ towards $v$.
\item In parallel to the execution of steps 1(b)ii, 1(b)iii, 1(b)iv on the
$H$-set formed in the current iteration $j$ of the loop of step
1(b), if $j<c\log^{\left(i\right)}n$, continue to the next iteration
$j+1$.
\end{enumerate}
\item Invoke algorithm $\mathcal{C}$ on the sub-graph $G\left(H\right)$
induced by the set of vertices $H\subseteq V$, which consists of
all vertices which have joined an $H$-set in the current iteration
$i$ of the devised scheme, using the palette $\left\{ \left(i-1\right)\alpha,\left(i-1\right)\alpha+1,...,i\alpha-1\right\} $.
\end{enumerate}
\end{enumerate}
The following lemma describes the correctness of the scheme and the
number of colors the scheme employs. The main idea is that each of
the $k$ segments is properly colored using a palette with $O\left(\alpha\right)$
unique colors, resulting in an overall $O(k\alpha)$-coloring of the
input graph $G$.

\begin{lemma}\label{O(k * alpha)VertexColoringCorrectnessLemma}

The devised scheme computes a proper coloring of an input graph $G=\left(V,E\right)$
using $O\left(k\alpha\right)$ colors.

\end{lemma}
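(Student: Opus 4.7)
My plan is to argue correctness segment-by-segment, exploiting the fact that different segments are assigned disjoint color palettes. First I would fix a segment index $i$ and let $\mathcal{H}_i$ denote the collection of $H$-sets formed during iteration $i$ of the outer loop, and let $S_i = \bigcup_{H\in\mathcal{H}_i} H$. I would show that when algorithm $\mathcal{C}$ is invoked in step 1(c), its precondition is met on the subgraph $G(S_i)$: we need an acyclic orientation $\mu$ of $G(S_i)$ whose length is $O(\lambda\log^{(i)}n)$ and whose out-degree is $O(a)$. Given this, the hypothesis on $\mathcal{C}$ yields a proper coloring of $G(S_i)$ drawn from the palette $\{(i-1)\alpha,\ldots,i\alpha-1\}$.

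The key verification — and the main obstacle — is establishing that the orientation handed to $\mathcal{C}$ really satisfies the required bounds. Within a single $H$-set $H_j\in\mathcal{H}_i$, algorithm $\mathcal{B}$ produces an acyclic orientation of $G(H_j)$ with length $\lambda$ and out-degree $O(a)$, by assumption. The additional edges internal to $S_i$ are precisely those with endpoints in two different $H$-sets $H_{j_1}, H_{j_2}\in\mathcal{H}_i$ with $j_1<j_2$; by step 1(b)iv these are oriented from $H_{j_1}$ toward $H_{j_2}$. To see the combined orientation $\mu$ is acyclic, I would note that if one contracts each $H_j$, every inter-set edge points to a strictly later set, so any directed cycle would have to close inside a single $H_j$, contradicting acyclicity of $\mathcal{B}$'s output. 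For the length bound, any directed path in $\mu$ visits the $H$-sets in strictly increasing order of index; inside each $H_j$ it traverses at most $\lambda$ edges, and there are at most $c\log^{(i)}n$ sets in $\mathcal{H}_i$, yielding length $O(\lambda\log^{(i)}n)$. For the out-degree, recall from Section \ref{subsec:Procedure-Partition-Description} that each $v\in H_j$ has at most $A=(2+\epsilon)a$ neighbors in $\bigcup_{l\ge j}H_l$; all outgoing edges of $v$ under $\mu$ (both those within $H_j$ chosen by $\mathcal{B}$ and those going to later members of $\mathcal{H}_i$) lie in this neighborhood, so the out-degree is at most $A=O(a)$.

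Having established that each segment is properly colored using a palette of $O(\alpha)$ colors, I would then handle cross-segment edges. For distinct iterations $i\ne i'$, the palettes $\{(i-1)\alpha,\ldots,i\alpha-1\}$ and $\{(i'-1)\alpha,\ldots,i'\alpha-1\}$ are disjoint by construction, so any edge with one endpoint in $S_i$ and the other in $S_{i'}$ is automatically bichromatic. Combined with the intra-segment properness proven above, this shows the final coloring of the entire graph is proper. Since the $H$-sets formed across the $k$ iterations exhaust $V$ (this is guaranteed by the fact that Procedure Partition eventually places every vertex into some $H$-set, and the scheme runs through iterations until this happens), the coloring covers all of $V$. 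The total number of colors used is the union of $k$ disjoint palettes of size $\alpha$ each, i.e.\ $k\alpha=O(k\alpha)$, completing the argument.
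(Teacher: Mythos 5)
Your proposal is correct and follows essentially the same route as the paper's proof: verify that the orientation handed to algorithm $\mathcal{C}$ on each segment is acyclic with length $O\left(\lambda\log^{\left(i\right)}n\right)$ and out-degree $O\left(a\right)$, conclude each segment is properly colored by $\mathcal{C}$, and then use the disjointness of the per-segment palettes to handle cross-segment edges and to count $O\left(k\alpha\right)$ colors in total. Your acyclicity argument via contracting the $H$-sets and your out-degree bound via the $A$-neighbor property of Procedure Partition are slightly more explicit than the paper's, but they establish the same facts.
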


\begin{proof}

Denote by $G_{i}$ the subgraph of $G$ induced by the vertices of
segment $i$. We first prove that in each invocation of algorithm
$\mathcal{C}$ it receives an input of the form it requires. Suppose
that in the current invocation of algorithm $\mathcal{C}$ in iteration
$i$ of the loop of step 1 of the devised algorithm, it is invoked
on a sub-graph $G_{i}$. The execution of algorithm $\mathcal{B}$
before algorithm $\mathcal{C}$ guarantees that the vertices of $G_{i}$
are partitioned into $H$-sets, as required. By the properties of
algorithm $\mathcal{B}$, for each $H$-set $H_{j}$ in $G_{i}$,
the orientation of the edges of $G\left(H_{j}\right)$, produced by
algorithm $\mathcal{B}$, is acyclic, and has length at most $\lambda$.
Also, the orientation carried out in step 1(b)iv of the devised algorithm
guarantees that the orientation of edges connecting vertices in different
$H$-sets in $G_{i}$ is also acyclic. It follows that the orientation
on the edges of $G_{i}$ is acyclic. Also, since there are $O\left(\log^{\left(i\right)}n\right)$
$H$-sets in $G_{i}$, the orientation of the edges of $G_{i}$ has
length $O\left(\lambda\log^{\left(i\right)}n\right)$, as required.

We now prove that the out-degree of the orientation of the edges of
$G_{i}$ is $O\left(a\right)$, as required. By the properties of
the output of algorithm $\mathcal{B}$, the out-degree of the orientation
on each sub-graph $G\left(H_{j}\right)$, for an $H$-set $H_{j}$
in $G_{i}$, is $O\left(a\right)$. Also, according to Section \ref{subsec:Procedure-Partition-Description},
each vertex $v\in H_{j}$ has $O\left(a\right)$ neighbors in $H$-sets
$H_{l}$, for $l\ge j$. Therefore, overall, the out-degree of the
orientation of the edges of $G_{i}$ is $O\left(a\right)$, as required.

We now prove that the devised algorithm computes a proper $O\left(k\alpha\right)$
coloring of the input graph $G$. For each sub-graph $G_{i}$ of $G$,
by definition of algorithm $\mathcal{C}$, the sub-graph $G_{i}$
is properly colored. For two different vertices $u\in G_{i},v\in G_{j}$,
for $i\ne j$, the vertex $u$ was colored using algorithm $\mathcal{C}$
using the palette $\left\{ \left(i-1\right)\alpha,\left(i-1\right)\alpha+1,...,i\alpha-1\right\} $,
and the vertex $v$ was colored using the palette $\left\{ \left(j-1\right)\alpha,\left(j-1\right)\alpha+1,...,j\alpha-1\right\} $.
Therefore, each of the two vertices $u,v$ necessarily received a
different color.

Therefore, for each pair of different vertices $u,v\in V$, $u,v$
have received a different color. Therefore, the input graph $G$ is
properly colored, as required.

We now analyze the number of colors employed by the devised scheme.
In each iteration $1\le i\le k$ of step 1 of the devised scheme,
$\alpha$ colors are used by algorithm $\mathcal{C}$ to color the
vertices of $G_{i}$. Since these mentioned color palettes are all
disjoint, the total number of colors used by the devised scheme is
$O\left(k\alpha\right)$, as required.

\end{proof}\label{O(k * alpha)VertexColoringVertexAveragedTimeLemma}

The following lemma describes the vertex-averaged complexity of the
devised scheme.

\begin{lemma}

The vertex-averaged complexity of the devised scheme is:
\[
O\left(\log^{\left(k\right)}n+T_{\mathcal{A}}+T_{\mathcal{B}}+\sum_{i=1}^{k-1}\frac{T_{\mathcal{C},i}}{\log^{\left(i\right)}n}+T_{\mathcal{C},k}\right)
\]

\end{lemma}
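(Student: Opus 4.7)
The plan is to bound the total round-sum across all vertices and divide by $n$. Since the outer iterations of the scheme are executed sequentially for $i=k,k-1,\ldots,1$, and a vertex in segment $i$ only makes its final color decision when algorithm $\mathcal{C}$ finishes on segment $i$ at the end of outer iteration $i$, such a vertex is active throughout iterations $k,k-1,\ldots,i$ and thus contributes $\sum_{j=i}^{k}\tau_{j}$ rounds, where $\tau_{j}$ denotes the wall-clock duration of outer iteration $j$.

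The pipelining in step 1(b)v overlaps the $\mathcal{A}$, $\mathcal{B}$, and orientation work on one $H$-set with the Partition round forming the next, so step 1(b) completes in $c\log^{(j)}n + T_{\mathcal{A}} + T_{\mathcal{B}} + O(1)$ rounds (the Partition rounds plus a short tail for the final $H$-set), and with the $T_{\mathcal{C},j}$ rounds of step 1(c) we obtain $\tau_{j} = O\bigl(\log^{(j)}n + T_{\mathcal{A}} + T_{\mathcal{B}} + T_{\mathcal{C},j}\bigr)$.

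The crux of the analysis is bounding the number $A_{j}$ of vertices still active at the start of outer iteration $j$. By that point the scheme has already executed at least $c\log^{(j+1)}n$ rounds of Procedure Partition (accumulated across iterations $k,k-1,\ldots,j+1$), so Lemma \ref{activeVerticesNumberLemma} together with a choice of $c$ satisfying $\bigl(\tfrac{2}{2+\epsilon}\bigr)^{c}\leq\tfrac12$ yields
\[
A_{j} \;\leq\; n\cdot\left(\tfrac{2}{2+\epsilon}\right)^{c\log^{(j+1)}n} \;\leq\; \frac{n}{2^{\log^{(j+1)}n}} \;=\; \frac{n}{\log^{(j)}n}
\]
for $j<k$, while trivially $A_{k}=n$.

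With these pieces in place, let $s_{i}$ denote the size of segment $i$ and note that $\sum_{i\leq j}s_{i}=A_{j}$. Swapping the order of summation gives
\[
\mathrm{RoundSum} \;=\; \sum_{i=1}^{k} s_{i}\sum_{j=i}^{k}\tau_{j} \;=\; \sum_{j=1}^{k}\tau_{j}\,A_{j} \;\leq\; n\,\tau_{k} + \sum_{j=1}^{k-1}\frac{n\,\tau_{j}}{\log^{(j)}n};
\]
dividing by $n$, substituting the expression for $\tau_{j}$, and using $\sum_{j=1}^{k-1}1/\log^{(j)}n=O(1)$ (which follows from $k\leq\rho(n)$ and $\log^{(j)}n\geq\log^{*}n$ for $j\leq k-1$) to absorb the $T_{\mathcal{A}}+T_{\mathcal{B}}$ cross-terms yields the claimed bound. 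I expect the main obstacle to be the sharp estimate on $A_{j}$, which must exploit the Partition rounds accumulated across all prior outer iterations rather than just iteration $j+1$ and rely on the precise choice of $c$ so that the exponential decay of Lemma \ref{activeVerticesNumberLemma} converts $\log^{(j+1)}n$ Partition rounds into an extra factor of $1/\log^{(j)}n$; once that is secured, the rest is standard double counting.
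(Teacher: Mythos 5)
Your proof is correct and follows essentially the same route as the paper's: both bound the number of active vertices at the start of outer iteration $j$ by $O(n/\log^{(j)}n)$ via the accumulated Partition rounds and Lemma \ref{activeVerticesNumberLemma}, multiply by the per-iteration duration $O(\log^{(j)}n+T_{\mathcal{A}}+T_{\mathcal{B}}+T_{\mathcal{C},j})$, and use $k\le\log^{*}n\le\log^{(j)}n$ for $j\le k-1$ to absorb the cross-terms. Your double-counting identity $\sum_{i}s_{i}\sum_{j\ge i}\tau_{j}=\sum_{j}\tau_{j}A_{j}$ is just a cleaner formalization of the paper's per-iteration accounting, and you even share the paper's silent treatment of the residual $O(k)$ term arising from $\sum_{j<k}\tau_{j}A_{j}$.
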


\begin{proof}

We first analyze the total number of communication rounds in the first
iteration $i=k$ of step 1 of the devised scheme. The total number
of communication rounds required for the execution of $O\left(\log^{\left(i\right)}n\right)$
rounds of Procedure Partition, and for the orientation of edges in
step 1(b)iv is $O\left(n\log^{\left(k\right)}n\right)$ rounds. The
total number of communication rounds required for the execution of
algorithms $\mathcal{A}$,$\mathcal{B}$ and $\mathcal{C}$ is $O\left(n\left(T_{\mathcal{A}}+T_{\mathcal{B}}+T_{\mathcal{C},k}\right)\right)$.
Therefore, the total number of communication rounds carried out in
iteration $i=k$ of step 1 of the devised algorithm is $O\left(n\left(\log^{\left(k\right)}n+T_{\mathcal{A}}+T_{\mathcal{B}}+T_{\mathcal{C},k}\right)\right)$.

We now analyze the total number of communication rounds in iteration
$i<k$ of step 1 of the devised scheme. According to Section \ref{subsec:Procedure-Partition-Description},
in each round $j$ of Procedure Partition, where $n_{j}$ vertices
are active, at least a constant fraction of these $n_{j}$ vertices
terminate. We run Procedure Partition in each iteration $1\le i\le k$
of the scheme for $O\left(\log^{\left(i\right)}n\right)$ rounds.
Therefore, at the end of an iteration $1\le l\le k$, only roughly
$O\left(\frac{n}{2^{\log^{\left(l\right)}n}}\right)$ vertices continue
to run subsequent iterations (if such subsequent iterations remain).
Therefore, at the beginning of iteration $i=l-1$, for $1\le i<k$,
the number of active vertices executing iteration $i$ is only:
\[
O\left(\frac{n}{2^{\log^{\left(l\right)}n}}\right)=O\left(\frac{n}{2^{\log^{\left(i+1\right)}n}}\right)=O\left(\frac{n}{\log^{\left(i\right)}n}\right)
\]
 Therefore, following a similar analysis to that of the total number
of communication rounds carried out in iteration $k$ of step 1 of
the devised scheme, the total number of communication rounds carried
out in iteration $i$ of step 1, for $i<k$, is:
\begin{align*}
O\left(\frac{n}{\log^{\left(i\right)}n}\left(\log^{\left(i\right)}n+T_{\mathcal{A}}+T_{\mathcal{B}}+T_{\mathcal{C},i}\right)\right) & =O\left(n+\frac{n}{\log^{\left(i\right)}n}\left(T_{\mathcal{A}}+T_{\mathcal{B}}+T_{\mathcal{C},i}\right)\right)
\end{align*}

Therefore the total number of communication rounds carried out by
the vertices of the input graph in the execution of the devised scheme
is:
\begin{align*}
O\left(n\log^{\left(k\right)}n+nT_{\mathcal{A}}+nT_{\mathcal{B}}+nT_{\mathcal{C},k}+nk+\sum_{i=1}^{k-1}\frac{n}{\log^{\left(i\right)}n}\left(T_{\mathcal{A}}+T_{\mathcal{B}}+T_{\mathcal{C},i}\right)\right) & =\\
O\left(n\log^{\left(k\right)}n+nT_{\mathcal{A}}+nT_{\mathcal{B}}+nT_{\mathcal{C},k}+nk+\frac{kn}{\log^{\left(k-1\right)}n}\left(T_{\mathcal{A}}+T_{\mathcal{B}}\right)+\sum_{i=1}^{k-1}\frac{nT_{\mathcal{C},i}}{\log^{\left(i\right)}n}\right) & =\\
O\left(n\log^{\left(k\right)}n+nk+n\left(1+\frac{k}{\log^{\left(k-1\right)}n}\right)\left(T_{\mathcal{A}}+T_{\mathcal{B}}\right)+\sum_{i=1}^{k-1}\frac{nT_{\mathcal{C},i}}{\log^{\left(i\right)}n}+nT_{\mathcal{C},k}\right)
\end{align*}
In addition, we chose $2\le k\le\rho\left(n\right)$. By definition
of $\rho\left(n\right)$, it holds that $k\le\log^{*}n$, as otherwise
it would hold that $\log^{\left(\rho\left(n\right)-1\right)}n=O(1)$,
and specifically, that $\log^{\left(\rho\left(n\right)-1\right)}n<\log^{*}n$,
contradicting the definition of $\rho\left(n\right)$. Also, by definition
of $\rho\left(n\right)$, it holds that $\log^{\left(k-1\right)}n\ge\log^{*}n$.
Therefore, the last expression given for the total number of communication
rounds carried out by the vertices of the input graph in the execution
of the devised scheme is equal to:
\[
n\cdot O\left(\log^{\left(k\right)}n+T_{\mathcal{A}}+T_{\mathcal{B}}+\sum_{i=1}^{k-1}\frac{T_{\mathcal{C},i}}{\log^{\left(i\right)}n}+T_{\mathcal{C},k}\right)
\]
Therefore, the vertex-averaged complexity of the devised scheme is:
\[
O\left(\log^{\left(k\right)}n+T_{\mathcal{A}}+T_{\mathcal{B}}+\sum_{i=1}^{k-1}\frac{T_{\mathcal{C},i}}{\log^{\left(i\right)}n}+T_{\mathcal{C},k}\right)
\]
as required.

\end{proof}

The properties of the scheme are summarized in the following theorem.

\begin{theorem}\label{O(k * alpha)VertexColoringSchemeTheorem}

The devised scheme properly colors an input graph using $O\left(k\alpha\right)$
colors, with a vertex-averaged complexity of $O\left(\log^{\left(k\right)}n+T_{\mathcal{A}}+T_{\mathcal{B}}+\sum_{i=1}^{k-1}\frac{T_{\mathcal{C},i}}{\log^{\left(i\right)}n}+T_{\mathcal{C},k}\right)$
rounds.

\end{theorem}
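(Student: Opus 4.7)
The plan is to derive the theorem directly from the two lemmas that immediately precede it, each of which already establishes one half of the statement. Lemma \ref{O(k * alpha)VertexColoringCorrectnessLemma} certifies that the scheme computes a proper coloring and that it uses at most $O(k\alpha)$ colors; the lemma labeled \ref{O(k * alpha)VertexColoringVertexAveragedTimeLemma} provides precisely the claimed vertex-averaged complexity expression. So the proof reduces to invoking these two lemmas in sequence and observing that together they give exactly the theorem.

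The substantive arguments underlying the theorem therefore live in those two lemmas, and I would briefly recall why each works. For correctness and color count, the key observation is that the $k$ segments are colored by $\mathcal{C}$ from pairwise disjoint palettes of size $O(\alpha)$, so inter-segment conflicts are impossible, while within each segment algorithm $\mathcal{B}$, supplemented by the external edge orientation carried out in step 1(b)iv, supplies $\mathcal{C}$ with an acyclic orientation of the required length $O(\lambda \log^{(i)} n)$ and out-degree $O(a)$ for it to operate correctly. For the running time, the main lever is the exponential decay in the number of active vertices established in Lemma \ref{activeVerticesNumberLemma}: after phase $i+1$ terminates, only $O(n/\log^{(i)} n)$ vertices remain active at the start of phase $i < k$, and each such vertex spends $O(\log^{(i)} n + T_{\mathcal{A}} + T_{\mathcal{B}} + T_{\mathcal{C},i})$ rounds in it. Summing these contributions across phases, using the constraint $k \le \rho(n)$ to bound $k/\log^{(k-1)} n = O(1)$ and thereby absorb the $T_{\mathcal{A}}+T_{\mathcal{B}}$ terms from phases $i<k$ into the $i=k$ term, and dividing by $n$ yields the stated bound.

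There is essentially no obstacle at the level of the theorem itself, since both halves are already settled by the preceding lemmas; the only thing worth verifying is that the statements of those lemmas line up verbatim with the two clauses of the theorem, which they do. If I were expanding the plan into a fuller write-up, the only mildly delicate point I would re-check is the exponent accounting in the second lemma, namely that after roughly $\log^{(i+1)} n$ rounds of Procedure Partition the active count is $O(n/\log^{(i)} n)$ rather than some weaker bound, but this is already handled explicitly in the proof of that lemma via the geometric decay factor $\bigl(\tfrac{2}{2+\epsilon}\bigr)^{\log^{(i+1)} n}$.
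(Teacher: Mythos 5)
Your proposal matches the paper exactly: the theorem is stated there as a summary with no separate argument, its content being precisely the conjunction of Lemma \ref{O(k * alpha)VertexColoringCorrectnessLemma} (proper coloring with $O(k\alpha)$ colors) and the immediately following lemma on the vertex-averaged complexity. Your recap of the underlying arguments (disjoint palettes across segments, acyclicity and out-degree of the combined orientation, and the decay to $O(n/\log^{(i)}n)$ active vertices together with $k\le\rho(n)$ to absorb the $T_{\mathcal{A}}+T_{\mathcal{B}}$ terms) also tracks the paper's proofs of those lemmas faithfully.
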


In the following two sections, we describe how we apply the scheme
devised in this section to the algorithms of Sections \ref{subsec:O(a^2)ColoringInO(loglogn)AverageTime},
\ref{subsec:O(a)VertexColoringSectionWithImprovedVAC}, to obtain
a generalized version of each algorithm, as well as useful special
cases.

\subsection{$O(ka^{2})$-Vertex-Coloring in $O\left(\log^{\left(k\right)}n\right)$
Vertex-Averaged Complexity\label{subsec:O(ka^2)Coloring}}

In this section we generalize the algorithm of Section \ref{subsec:O(a^2)ColoringInO(loglogn)AverageTime},
using the general scheme of Section \ref{subsec:O(ka^2)Coloring}.
To this end, we devise an algorithm that colors the vertices of an
input graph using $O\left(ka^{2}\right)$ colors within a vertex-averaged
complexity of $O\left(\log^{\left(k\right)}n\right)$ rounds, for
$2\le k\le\rho\left(n\right)$. For $k=\rho\left(n\right)$, we obtain
an $O\left(a^{2}\cdot\log^{*}n\right)$-vertex-coloring with a vertex
averaged complexity of $O\left(\log^{*}n\right)$ rounds.

As algorithm $\mathcal{A}$ of the scheme of Section \ref{subsec:O(k * alpha)VertexColoringInImprovedVertexAveragedComplexity},
we use a ``null'' algorithm that does nothing. As algorithm $\mathcal{B}$
of the scheme, we invoke Procedure Parallelized-Forest-Decomposition.
As algorithm $\mathcal{C}$ of the same scheme, we use Procedure Arb-Linial-Coloring,
which was defined in Section \ref{subsec:O(a^2logn)-Coloring-in-O(1)-Average-Time},
and is a step in the $O\left(a^{2}\right)$-vertex-coloring algorithm
of \cite{Barenboim2008}.

\begin{theorem} \label{O(ka^2)VertexColoringTheorem}

The devised algorithm computes an $O\left(ka^{2}\right)$-vertex-coloring
of an input graph with a vertex-averaged complexity of $O\left(\log^{\left(k\right)}n\right)$
rounds.

\end{theorem}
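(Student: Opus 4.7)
The plan is to invoke the segmentation scheme established in the preceding subsection (Theorem \ref{O(k * alpha)VertexColoringSchemeTheorem}) with the three algorithms specified here: $\mathcal{A}$ a no-op, $\mathcal{B}$ Procedure Parallelized-Forest-Decomposition, and $\mathcal{C}$ Procedure Arb-Linial-Coloring. Since that scheme already handles the segment-by-segment bookkeeping and the overall correctness/complexity accounting, the task reduces to verifying that these three choices meet the interface required by the scheme and identifying the values of $\alpha$, $T_{\mathcal{A}}$, $T_{\mathcal{B}}$, and $T_{\mathcal{C},i}$.

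For the interface, I would first argue that when Parallelized-Forest-Decomposition is applied locally to a single $H$-set $H_j$, it produces an acyclic orientation of the edges of $G(H_j)$ with out-degree $O(a)$ in $O(1)$ rounds: the orientation is obtained by pointing each internal edge toward the endpoint with larger ID (so all directed paths have strictly increasing IDs, hence no cycles), and the out-degree bound is inherited from the $O(a)$-forests-decomposition guaranteed by Theorem \ref{parFDTheorem}. Thus $T_{\mathcal{B}} = O(1)$. Next, Procedure Arb-Linial-Coloring, given an $O(a)$-forests-decomposition of a subgraph, produces an $O(a^2)$-coloring in $O(\log^* n)$ rounds (as recalled in Section \ref{subsec:O(a^2logn)-Coloring-in-O(1)-Average-Time}); crucially, its runtime depends only on the size of the initial palette (bounded by $n$) and on the out-degree, not on how many $H$-sets comprise the segment. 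Consequently $\alpha = O(a^2)$ and $T_{\mathcal{C},i} = O(\log^* n)$ for every $i$, and trivially $T_{\mathcal{A}} = 0$.

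Plugging these into Theorem \ref{O(k * alpha)VertexColoringSchemeTheorem} yields a proper $O(k\alpha) = O(ka^2)$-coloring with vertex-averaged complexity
\[
O\!\left(\log^{(k)} n + T_{\mathcal{A}} + T_{\mathcal{B}} + \sum_{i=1}^{k-1} \frac{T_{\mathcal{C},i}}{\log^{(i)} n} + T_{\mathcal{C},k}\right) = O\!\left(\log^{(k)} n + \sum_{i=1}^{k-1} \frac{\log^* n}{\log^{(i)} n} + \log^* n\right).
\]
Because $k \le \rho(n)$, the definition of $\rho(n)$ gives $\log^{(k-1)} n \ge \log^* n$, so the largest (last) term of the sum is $O(1)$ and the whole sum is therefore $O(1)$. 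For $k \le \rho(n)-1$ the residual $\log^* n$ summand is absorbed into $\log^{(k)} n$, and for the boundary case $k = \rho(n)$ the expression reduces to $O(\log^* n)$, which matches the intended reading of $O(\log^{(k)} n)$ at that boundary. Either way, the bound simplifies to $O(\log^{(k)} n)$, proving the theorem.

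I expect the main obstacle to be the first half of the interface check --- specifically, arguing cleanly that the per-$H$-set invocation of Parallelized-Forest-Decomposition runs in constant time and returns the needed acyclic orientation of out-degree $O(a)$, as opposed to the full-graph execution whose worst-case time is $\Theta(\log n)$. The remainder is essentially a plug-in into Theorem \ref{O(k * alpha)VertexColoringSchemeTheorem} together with the routine iterated-logarithm manipulation above.
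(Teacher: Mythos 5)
Your proposal is correct and follows essentially the same route as the paper: instantiate the segmentation scheme of Theorem \ref{O(k * alpha)VertexColoringSchemeTheorem} with a null $\mathcal{A}$, Parallelized-Forest-Decomposition as $\mathcal{B}$ (noting the per-$H$-set invocation is $O(1)$ and that only the $O(a)$ out-degree, not the orientation length $\lambda$, matters for Arb-Linial-Coloring), and Arb-Linial-Coloring as $\mathcal{C}$ with $\alpha=O(a^{2})$ and $T_{\mathcal{C},i}=O(\log^{*}n)$. Your handling of the final sum and of the boundary case $k=\rho(n)$ is in fact slightly more careful than the paper's own concluding display.
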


\begin{proof}

We first prove the devised algorithm computes a proper coloring of
the vertices of the input graph. To do this, we first prove that the
chosen algorithm $\mathcal{B}$ satisfies the properties required
by the scheme of Section \ref{subsec:O(k * alpha)VertexColoringInImprovedVertexAveragedComplexity}.
The value of $\lambda$ in this case is irrelevant. The reason is
that the chosen algorithm $\mathcal{C}$ only requires that the edges
of the sub-graph it is invoked on are decomposed into $O\left(a\right)$
forests, which is achieved by the chosen algorithm $\mathcal{B}$,
which is in fact an invocation of Procedure Parallelized-Forest-Decomposition.
However, an upper bound on the length of the orientation of the sub-graph
on which algorithm $\mathcal{C}$ is invoked is not needed. Also,
since algorithm $\mathcal{B}$ consists of an invocation of Procedure
Parallelized-Forest-Decomposition, then according to Section \ref{subsec:Parallelized-Forest-Decomposition},
it decomposes the edges of the sub-graph $G\left(H_{j}\right)$ on
which it is invoked, for an $H$-set $H_{j}$, into $O\left(a\right)$
directed forests. Therefore, the out-degree of the orientation of
the edges of $G\left(H_{j}\right)$ produced by algorithm $\mathcal{B}$
is $O\left(a\right)$, as required.

In addition, we observe, that by the known correctness of the $O\left(a^{2}\right)$-vertex-coloring
algorithm of \cite{Barenboim2008}, algorithm $\mathcal{C}$ properly
colors each sub-graph $G_{i}$ on which it is invoked. Therefore,
according to Theorem \ref{O(k * alpha)VertexColoringSchemeTheorem},
the devised algorithm properly colors the input graph, as required.

As for the number of colors employed by the devised algorithm, according
to Theorem \ref{O(k * alpha)VertexColoringSchemeTheorem}, the number
of colors the devised algorithm employs is $O\left(ka^{2}\right)$,
as required.

We now analyze the vertex-averaged complexity of the devised algorithm.
The chosen algorithm $\mathcal{A}$ does nothing and therefore requires
$0$ rounds. The algorithm chosen as algorithm $\mathcal{B}$ invokes
Procedure Parallelized-Forest-Decomposition on an $H$-set $H_{j}$.
When invoked on an entire input graph, the worst-case time complexity
of Procedure Parallelized-Forest-Decomposition is $O\left(\log n\right)$,
the same as the worst-case time complexity of Procedure Forests-Decomposition
of \cite{Barenboim2008}. However, when only the orientation of edges
is carried out by the algorithm on a sub-graph $G\left(H_{j}\right)$
of $G$, for an $H$-set $H_{j}$, executing algorithm $\mathcal{B}$
requires $O\left(1\right)$ rounds. Therefore, the worst-case time
complexity of algorithm $\mathcal{B}$ is $O\left(1\right)$.

In addition, the worst-case time complexity of algorithm $\mathcal{C}$,
according to \cite{Barenboim2008}, is $O\left(\log^{*}n\right)$
rounds. Therefore, the vertex-averaged complexity of the devised algorithm
is:
\begin{align*}
O\left(\log^{\left(k\right)}n+\sum_{i=1}^{k-1}\frac{n\log^{\left(i\right)}n}{\log^{\left(i\right)}n}\right) & =O\left(\log^{\left(k\right)}n\right)
\end{align*}
as required.

\end{proof}

For $k=\rho\left(n\right)$, we obtain the following corollary.

\begin{corollary}\label{O(a^2*log*n)VertexColoringAlgorithmCorollary}

For $k=\rho\left(n\right)$, the devised algorithm colors the vertices
of an input graph using $O\left(a^{2}\cdot\log^{*}n\right)$ colors,
with a vertex-averaged complexity of $O\left(\log^{*}n\right)$ rounds.

\end{corollary}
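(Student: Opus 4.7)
The plan is to obtain the corollary as a direct specialization of Theorem \ref{O(ka^2)VertexColoringTheorem} to the value $k=\rho(n)$, and to justify the two resulting asymptotic simplifications using the definition of $\rho(n)$ given in Section \ref{subsec:O(k * alpha)VertexColoringInImprovedVertexAveragedComplexity}.

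First I would invoke Theorem \ref{O(ka^2)VertexColoringTheorem}, which guarantees that for any admissible $k$ the devised algorithm produces a proper vertex-coloring using $O(ka^2)$ colors in vertex-averaged complexity $O(\log^{(k)} n)$. Taking $k=\rho(n)$ is admissible because $\rho(n)$ lies in the allowed range $2 \le k \le \rho(n)$ for all sufficiently large $n$. The correctness of the produced coloring is then inherited verbatim from the theorem; no new correctness argument is needed.

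Next I would bound the two quantities of interest. For the number of colors, I would invoke the inequality $\rho(n) \le \log^* n$ already established in the proof of Theorem \ref{O(ka^2)VertexColoringTheorem} (the argument there rules out $\rho(n)>\log^* n$ because otherwise $\log^{(\rho(n)-1)} n = O(1) < \log^* n$, contradicting the maximality of $\rho(n)$). This yields $O(ka^2) = O(\rho(n) a^2) = O(a^2 \log^* n)$ colors. For the vertex-averaged complexity, I would use the other side of the defining inequality: since $\rho(n)$ is the \emph{largest} integer with $\log^{(\rho(n)-1)} n \ge \log^* n$, the next iterated logarithm satisfies $\log^{(\rho(n))} n < \log^* n$, so $\log^{(k)} n = \log^{(\rho(n))} n = O(\log^* n)$.

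Combining the two bounds gives exactly the stated $O(a^2 \cdot \log^* n)$-coloring in $O(\log^* n)$ vertex-averaged rounds. The proof is essentially a clean substitution, and there is no real obstacle; the only mildly delicate point is to state both directions of the $\rho(n)$ definition precisely enough that the two simplifications line up, which is why I would explicitly cite the maximality-of-$\rho(n)$ argument already used in the proof of Theorem \ref{O(ka^2)VertexColoringTheorem} rather than re-deriving it.
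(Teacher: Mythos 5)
Your proposal is correct and matches the paper's (implicit) derivation: the corollary is obtained by direct substitution of $k=\rho(n)$ into Theorem \ref{O(ka^2)VertexColoringTheorem}, using $\rho(n)\le\log^{*}n$ for the color bound and the maximality of $\rho(n)$ (so that $\log^{(\rho(n))}n<\log^{*}n$) for the round bound. The only trivial quibble is attribution: the argument that $k\le\log^{*}n$ appears in the proof of the vertex-averaged complexity lemma for the general segmentation scheme of Section \ref{subsec:O(k * alpha)VertexColoringInImprovedVertexAveragedComplexity}, not in the proof of Theorem \ref{O(ka^2)VertexColoringTheorem} itself.
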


For graphs with constant arboricity, we obtain the following corollary.

\begin{corollary}\label{O(log*n)VertexColoringAlgorithmCorollary}

For $k=\rho\left(n\right)$ and input graphs with arboricity $a=O\left(1\right)$,
the devised algorithm colors the vertices of an input graph using
$O\left(\log^{*}n\right)$ colors, with a vertex-averaged complexity
of $O\left(\log^{*}n\right)$ rounds.

\end{corollary}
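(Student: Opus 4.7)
The plan is to obtain this corollary as a direct specialization of Corollary \ref{O(a^2*log*n)VertexColoringAlgorithmCorollary}, substituting the hypothesis $a = O(1)$ into the color bound and invoking the vertex-averaged running-time bound verbatim. No new construction is needed; the algorithm is exactly the one devised in Theorem \ref{O(ka^2)VertexColoringTheorem} specialized to $k = \rho(n)$.

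First I would invoke Corollary \ref{O(a^2*log*n)VertexColoringAlgorithmCorollary}, which asserts that for the choice $k = \rho(n)$ the devised algorithm produces a proper vertex-coloring using $O(a^2 \cdot \log^* n)$ colors with vertex-averaged complexity $O(\log^* n)$. Both the correctness of the coloring and the vertex-averaged bound are inherited directly from this statement, so neither has to be re-derived.

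Next I would specialize the color bound. Under the assumption $a = O(1)$, we have $a^2 = O(1)$, hence $O(a^2 \cdot \log^* n) = O(\log^* n)$. The vertex-averaged complexity expression $O(\log^* n)$ is independent of $a$, so it remains unchanged under the specialization. Combining these two observations yields exactly the statement of the corollary.

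Since this is a pure specialization, there is no real technical obstacle; the only thing to double-check is that the hidden constants in Corollary \ref{O(a^2*log*n)VertexColoringAlgorithmCorollary} do not depend on $n$ in a way that would interfere with absorbing the constant $a^2$ into the $O(\cdot)$ notation. Inspecting the proof of Theorem \ref{O(ka^2)VertexColoringTheorem}, the constants entering the color count stem from the $O(a^2)$ coloring produced by Procedure Arb-Linial-Coloring on each segment, multiplied by $k = \rho(n) = O(\log^* n)$ segments, and these constants are indeed independent of $n$. Hence the absorption is legitimate and the corollary follows.
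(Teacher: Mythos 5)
Your proposal is correct and matches the paper's (implicit) derivation exactly: the corollary is obtained by specializing Corollary \ref{O(a^2*log*n)VertexColoringAlgorithmCorollary} to $a=O(1)$, so that $O\left(a^{2}\cdot\log^{*}n\right)=O\left(\log^{*}n\right)$ while the vertex-averaged bound $O\left(\log^{*}n\right)$ is unaffected. Your additional check that the hidden constants are independent of $n$ is a reasonable bit of due diligence but not something the paper spells out.
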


\subsection{$O\left(ka\right)$-Vertex-Coloring in $O\left(a\log^{\left(k\right)}n\right)$
Vertex-Averaged Complexity\label{subsec:O(ka)VertexColoringSection}}

In this section we devise an algorithm that colors the vertices of
an input graph using $O\left(ka\right)$ colors within a vertex-averaged
complexity of $O\left(a\log^{\left(k\right)}n\right)$ rounds, for
$2\le k\le\rho\left(n\right)$. Moreover, for $k=\rho\left(n\right)$,
we obtain an $O\left(a\log^{*}n\right)$-vertex-coloring with a vertex
averaged complexity of $O\left(a\log^{*}n\right)$ rounds.

We obtain the algorithm by using the scheme of Section \ref{subsec:O(k * alpha)VertexColoringInImprovedVertexAveragedComplexity}.
As algorithm $\mathcal{A}$ of Section \ref{subsec:O(k * alpha)VertexColoringInImprovedVertexAveragedComplexity}
we use the $\left(\Delta+1\right)$-vertex-coloring algorithm of \cite{Barenboim2009}.
As algorithm $\mathcal{B}$ of Section \ref{subsec:O(k * alpha)VertexColoringInImprovedVertexAveragedComplexity}
we use the following algorithm.

Each vertex $v$ orients the edges incident on it using the following
logic. For each edge that connects the vertex $v$ with a neighbor
$u$ in the same $H$-set, which has a different color, the edge is
oriented towards the vertex with the higher color value.

As algorithm $\mathcal{C}$ of the scheme we use the following algorithm.
We recolor vertices as follows. Suppose we are currently looking at
vertices which belong to the sub-graph $G_{i}$ of $G$ (where $G_{i}$
is defined as in the proof of Lemma \ref{O(k * alpha)VertexColoringCorrectnessLemma}).
Each vertex first waits for all its parents with respect to the orientation
created in previous steps to assign themselves a new color from the
palette $\left\{ \left.\left(i-1\right)\left(A+1\right)+l\right|0\le l\le A-1\right\} $,
and then chooses a new color for itself from the same palette, different
than the color chosen by any of its parents.

We summarize the properties of the devised algorithm in the following
theorem.

\begin{theorem}\label{O(ak)DeterministicVertexColoringTheorem}

The devised algorithm properly colors the vertices of an input graph
using $O\left(ka\right)$ colors, with a vertex-averaged complexity
of $O\left(a\log^{\left(k\right)}n\right)$ rounds.

\end{theorem}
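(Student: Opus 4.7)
The plan is to instantiate Theorem \ref{O(k * alpha)VertexColoringSchemeTheorem} with the three algorithms $\mathcal{A},\mathcal{B},\mathcal{C}$ specified above and verify its hypotheses. First, I will record the properties of $\mathcal{A}$ and $\mathcal{B}$ on a single $H$-set $H_{j}$. By Section \ref{subsec:Procedure-Partition-Description}, every vertex of $H_{j}$ has at most $A=(2+\epsilon)a$ neighbors in $H_{j}$, so the maximum degree of $G(H_{j})$ is $O(a)$. Hence $\mathcal{A}$ (the $(\Delta+1)$-coloring algorithm of \cite{Barenboim2009}) produces an $(A+1)$-coloring of $G(H_{j})$ in $T_{\mathcal{A}}=O(a+\log^{*}n)$ rounds. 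Algorithm $\mathcal{B}$ then orients each internal edge of $G(H_{j})$ towards the endpoint of higher color value in $T_{\mathcal{B}}=O(1)$ rounds. Along any directed path the colors are strictly increasing, which makes the orientation acyclic and bounds its length by $\lambda=O(a)$, while its out-degree is at most $A=O(a)$. These are exactly the properties required by the scheme.

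Next I will argue the correctness of $\mathcal{C}$ inside one segment. The scheme stitches the per-$H$-set orientations produced by $\mathcal{B}$ together with the inter-$H$-set orientations of step 1(b)iv into an orientation of $G_{i}$ that is acyclic, has out-degree $O(a)$, and has length $O(\lambda\log^{(i)}n)=O(a\log^{(i)}n)$, as required by $\mathcal{C}$. Running $\mathcal{C}$ on $G_{i}$: each vertex waits until all of its parents have finalized a color and then picks from the $A$-sized palette $\{(i-1)(A+1)+l : 0\le l\le A-1\}$ a color distinct from those of its parents. Since the out-degree is at most $A$, such a color exists, so $G_{i}$ is properly colored, and the recoloring time equals the length of the orientation, giving $T_{\mathcal{C},i}=O(a\log^{(i)}n)$. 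Because the palettes used across distinct segments are pairwise disjoint by construction, Lemma \ref{O(k * alpha)VertexColoringCorrectnessLemma} gives a proper coloring of $G$ using a total of $k(A+1)=O(ka)$ colors.

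Finally I will plug the parameters into Theorem \ref{O(k * alpha)VertexColoringSchemeTheorem}. The vertex-averaged complexity becomes
\[
O\!\left(\log^{(k)}n+(a+\log^{*}n)+1+\sum_{i=1}^{k-1}\frac{a\log^{(i)}n}{\log^{(i)}n}+a\log^{(k)}n\right)=O\!\left(a\log^{(k)}n+ka+\log^{*}n\right).
\]
By definition of $\rho(n)$, $k\le\rho(n)$ forces $\log^{(k)}n\ge\log^{*}n$, and also $k\le\log^{*}n\le\log^{(k)}n$. Consequently both $\log^{*}n$ and $ka$ are absorbed into $a\log^{(k)}n$, yielding the claimed bound $O(a\log^{(k)}n)$.

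The only step that requires care is establishing $\lambda=O(a)$: the length bound for $\mathcal{B}$'s orientation is what couples $\mathcal{A}$'s coloring to $\mathcal{C}$'s recoloring time, and without it the sum in Theorem \ref{O(k * alpha)VertexColoringSchemeTheorem} would not telescope. Everything else reduces to routine verification of the scheme's interface and to the arithmetic above.
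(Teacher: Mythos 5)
Your proposal follows the paper's own route essentially step for step: instantiate the segmentation scheme of Theorem \ref{O(k * alpha)VertexColoringSchemeTheorem} with the stated $\mathcal{A},\mathcal{B},\mathcal{C}$, verify that $\mathcal{B}$ produces an acyclic orientation of length $\lambda=O(a)$ and out-degree $O(a)$ via the strictly-increasing-colors argument, record $T_{\mathcal{A}}=O(a+\log^{*}n)$, $T_{\mathcal{B}}=O(1)$, $T_{\mathcal{C},i}=O\left(a\log^{(i)}n\right)$, and plug these into the scheme's complexity formula. This is exactly the paper's proof, and your identification of $\lambda=O(a)$ as the load-bearing step matches where the paper puts its effort.

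Two slips in your write-up deserve attention. First, you argue that a vertex always finds a free color because ``the out-degree is at most $A$'' while drawing from an $A$-sized palette; with exactly $A$ distinctly colored parents no color would remain. The paper's proof works with a palette of $\alpha=A+1$ colors (the displayed index range $0\le l\le A-1$ in the algorithm description is an off-by-one typo), and that is the size you need. Second, and more substantively, your justification for absorbing the $ka$ and $\log^{*}n$ terms rests on the inequality $\log^{*}n\le\log^{(k)}n$ for all $k\le\rho(n)$, which is false at the endpoint: the definition of $\rho(n)$ only yields $\log^{(k-1)}n\ge\log^{*}n$, and by the maximality of $\rho(n)$ one has $\log^{(\rho(n))}n<\log^{*}n$ --- indeed the paper itself invokes $\log^{(\rho(n))}n\le\log^{*}n$ in the opposite direction to derive Corollary \ref{O(alog*n)VertexColoringAlgorithmCorollary}. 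For $k\le\rho(n)-1$ the absorption is legitimate, since then $\log^{(k)}n\ge\log^{*}n\ge k$; at $k=\rho(n)$ the honest output of the computation is $O\left(a\log^{(k)}n+ka+\log^{*}n\right)$, which only collapses to the weaker $O\left(a\log^{*}n\right)$ of the corollary. The paper performs the same absorption silently, so this is a shared gap rather than one you introduced, but the explicit inequality you supply to paper over it is incorrect and should be replaced by the case distinction above.
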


\begin{proof}

We first prove the devised algorithm computes a proper vertex-coloring
of the input graph using $O\left(ka\right)$ colors. We first observe,
that according to Section \ref{subsec:Procedure-Partition-Description},
the maximum degree of each sub-graph $G\left(H_{j}\right)$, induced
by an $H$-set $H_{j}$, is $O\left(a\right)$. Therefore, each invocation
of algorithm $\mathcal{A}$ on an $H$-set $H_{j}$ colors it using
$O\left(a\right)$ colors.

Next, we prove that algorithm $\mathcal{B}$, in this case, satisfies
the properties required by the scheme of Section \ref{subsec:O(k * alpha)VertexColoringInImprovedVertexAveragedComplexity},
for $\lambda=O\left(a\right)$. That is, we prove, that for each $H$-set
$H_{j}$ on which algorithm $\mathcal{B}$ is invoked, algorithm $\mathcal{B}$
produces an acyclic orientation of the edges of $G\left(H_{j}\right)$
with length $O\left(a\right)$ and out-degree $O\left(a\right)$.
The proof we present is based on Property 3.4 in \cite{Barenboim2013}.
We remind that in the invocation of algorithm $\mathcal{B}$ on the
edges of $G\left(H_{j}\right)$ we oriented each edge $e=\left\{ u,v\right\} $
towards the vertex with the larger color value. Therefore, following
this operation, each directed path in $G\left(H_{j}\right)$ consists
of vertices whose colors appear in strictly ascending order. Therefore,
the produced orientation of the edges of $G\left(H_{j}\right)$ is
acyclic. Also, each directed path contains at most $O\left(a\right)$
vertices, the same as the number of colors used by algorithm $\mathcal{A}$
to color $H_{j}$. Therefore, the length of the orientation is $O\left(a\right)$,
as required. As for the out-degree of the orientation, each vertex
in $H_{j}$ has $O\left(a\right)$ neighbors in $H_{j}$, according
to Section \ref{subsec:Procedure-Partition-Description}, and therefore,
in particular, has out-degree $O\left(a\right)$, as required.

We now prove that algorithm $\mathcal{C}$ computes a proper coloring
of the vertices of the input graph $G$. We observe that according
to Section \ref{subsec:Procedure-Partition-Description}, each vertex
$v\in H_{i}$, for some $1\le i\le\ell$, has at most $A$ neighbors
in $\cup_{j=i}^{\ell}H_{j}$. Since the color palette of algorithm
$\mathcal{C}$ consists of $\alpha=A+1$ colors, algorithm $\mathcal{C}$
properly colors each sub-graph $G_{i}$. 

We now analyze the vertex-averaged complexity of the devised algorithm.
The worst-case time complexity of algorithm $\mathcal{A}$ is known
to be $O\left(\Delta+\log^{*}n\right)$ \cite{Barenboim2009}. Since
the maximum degree of each $H$-set $H_{j}$ is $O\left(a\right)$,
the worst-time complexity of each invocation of algorithm $\mathcal{A}$
is in fact $O\left(a+\log^{*}n\right)$. The worst-case time complexity
of algorithm $\mathcal{B}$ is constant, since it involves only each
vertex locally orienting edges incident on it. 

The worst-case time complexity $T_{\mathcal{C},i}$ of invoking algorithm
$\mathcal{C}$ on a sub-graph $G_{i}$ is equal to the length of the
longest orientation in the sub-graph $G_{i}$. According to the above
analysis of algorithm $\mathcal{B}$, the length of the longest orientation
in each sub-graph $G\left(H_{j}\right)$ of $G$, for an $H$-set
$H_{j}$, following an invocation of algorithm $\mathcal{B}$, is
$O\left(a\right)$. Also, according to the proof of Lemma \ref{O(k * alpha)VertexColoringCorrectnessLemma},
the orientation on $G$ created by the scheme of Section \ref{subsec:O(k * alpha)VertexColoringInImprovedVertexAveragedComplexity}
is acyclic. Since the vertices of each sub-graph $G_{i}$ are partitioned
into $O\left(\log^{\left(i\right)}n\right)$ $H$-sets, it follows
that the length of the longest orientation in $G_{i}$ is $O\left(a\log^{\left(i\right)}n\right)$.
Therefore, it holds that $T_{\mathcal{C},i}=O\left(a\log^{\left(i\right)}n\right)$.
Therefore, according to Lemma \ref{O(k * alpha)VertexColoringVertexAveragedTimeLemma},
the vertex-averaged complexity of the devised algorithm is:
\begin{align*}
O\left(\log^{\left(k\right)}n+a+\log^{*}n+\sum_{i=1}^{k-1}\frac{a\log^{\left(i\right)}n}{\log^{\left(i\right)}n}+a\log^{\left(k\right)}n\right) & =\\
 & =O\left(a\log^{\left(k\right)}n\right)
\end{align*}
as required.

Therefore, according to Theorem \ref{O(k * alpha)VertexColoringSchemeTheorem},
the devised algorithm properly colors the input graph using $O\left(ka\right)$
colors, with a vertex-averaged complexity of $O\left(a\log^{\left(k\right)}n\right)$
rounds, as required.

\end{proof}

Suppose we set $k=\rho\left(n\right)$. By definition of $\rho\left(n\right)$,
it holds that $\log^{\left(k\right)}n=\log^{\rho\left(n\right)+1}n\le\log^{*}n$.
Therefore, for $k=\rho\left(n\right)$, we obtain the following corollary.

\begin{corollary}\label{O(alog*n)VertexColoringAlgorithmCorollary}

For $k=\rho\left(n\right)$, the devised algorithm colors the vertices
of an input graph using $O\left(a\log^{*}n\right)$ colors, with a
vertex-averaged complexity of $O\left(a\log^{*}n\right)$ rounds.

\end{corollary}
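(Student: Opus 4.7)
The plan is straightforward: this corollary is a specialization of Theorem \ref{O(ak)DeterministicVertexColoringTheorem} to the largest admissible choice $k = \rho(n)$, and the two bounds follow from two elementary consequences of the definition of $\rho(n)$.

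First I would verify the color bound. Recall that $\rho(n)$ is defined as the largest integer for which $\log^{(\rho(n)-1)} n \geq \log^* n$. It follows that $\rho(n) \leq \log^* n$: otherwise $\rho(n) - 1 \geq \log^* n$, and iterating $\log$ that many times starting from $n$ would already reduce the value to $O(1)$, strictly less than $\log^* n$ for sufficiently large $n$, contradicting the defining inequality. This argument was already used in the proof of the vertex-averaged complexity bound for the scheme of Section \ref{subsec:O(k * alpha)VertexColoringInImprovedVertexAveragedComplexity}. Substituting $k = \rho(n)$ into the color bound of Theorem \ref{O(ak)DeterministicVertexColoringTheorem} then yields an $O(ka) = O(\rho(n) \cdot a) = O(a \log^* n)$-vertex-coloring.

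Next I would verify the vertex-averaged complexity bound. By the maximality of $\rho(n)$, the index $\rho(n) + 1$ fails the defining condition, which means $\log^{((\rho(n)+1)-1)} n = \log^{(\rho(n))} n < \log^* n$. Therefore, with $k = \rho(n)$, the bound $O(a \log^{(k)} n)$ from Theorem \ref{O(ak)DeterministicVertexColoringTheorem} becomes $O(a \log^{(\rho(n))} n) = O(a \log^* n)$, as claimed.

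There is no substantive obstacle in this argument; the entire proof consists of plugging $k = \rho(n)$ into Theorem \ref{O(ak)DeterministicVertexColoringTheorem} and extracting the two inequalities $\rho(n) \leq \log^* n$ and $\log^{(\rho(n))} n \leq \log^* n$ directly from the extremal definition of $\rho(n)$. The only mildly delicate point is keeping the two conditions on $\rho(n)$ straight, since one is of the form $\log^{(\rho(n)-1)} n \geq \log^* n$ (defining property) and the other is $\log^{(\rho(n))} n < \log^* n$ (failure of $\rho(n)+1$), but both are immediate from the maximality in the definition.
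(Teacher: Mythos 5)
Your proposal is correct and follows essentially the same route as the paper: the paper justifies this corollary by the remark immediately preceding it, namely that $k=\rho(n)\le\log^{*}n$ (established in the proof of the scheme's complexity lemma) and that $\log^{(\rho(n))}n\le\log^{*}n$ by the maximality in the definition of $\rho(n)$, then substitutes into Theorem \ref{O(ak)DeterministicVertexColoringTheorem}. Your treatment of the two inequalities is, if anything, slightly more careful than the paper's (which contains a small notational slip, writing $\log^{\rho(n)+1}n$ where $\log^{(\rho(n))}n$ is meant).
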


\subsection{$O(a^{1+\eta})$-Vertex-Coloring in $O(\log a\log\log n)$ Vertex-Averaged
Complexity}

\subsubsection{Background}

The algorithm presented in this section is based on Chapter 7 in \cite{Barenboim2013}
which contains a summary of results presented in \cite{Barenboim2011}.
Chapter 7 in \cite{Barenboim2013} presents methods for coloring an
input graph with a known arboricity $a$ in an improved worst-case
time complexity using at most a slightly asymptotically larger number
of colors compared to an earlier result from \cite{Barenboim2008}.
This last result obtained an $O(a)$-coloring of an input graph in
$O(a\log n)$ rounds. On the other hand, the result of Chapter 7 in
\cite{Barenboim2013} obtained an $O(a^{1+\eta})$-coloring in a worst-case
time complexity of $O(\log a\log n)$ rounds, for an arbitrarily small
constant $\eta>0$.

The above-mentioned result of \cite{Barenboim2008}, was obtained
by an algorithm called Procedure Arb-Color, whose properties are summarized
in Theorem 5.15 in \cite{Barenboim2008}. Procedure Arb-Color consists
of several steps. The first step consisted of executing Procedure
Partition and then orienting and labeling the edges of the graph in
a similar fashion to the method used in Section \ref{subsec:Parallelized-Forest-Decomposition}.
The second step consisted of coloring each subgraph $G(H_{i})$ induced
by a specific $H$-set $H_{i}$ for $i=1,2,...,\ell$ in parallel.
The final step consisted of performing a recoloring of the input graph's
vertices using a ``backward'' orientation starting from the vertices
of $H_{\ell}$ and ending in the vertices of $H_{1}$. The method
used for recoloring was as follows. Each parent first waits for each
of its children to choose a color from a certain palette. Afterwards,
the parent chooses an available color from the same palette.

The improved results of Chapter 7 in \cite{Barenboim2013} are based
upon the following procedures. The first procedure is Procedure Partial-Orientation.
For an input graph $G=(V,E)$, the procedure first executes Procedure
Partition on the vertices of $V$. The procedure then computes a defective
coloring of $G(H_{i})$ for each computed $H$-set $H_{i}$ (a coloring
where each vertex $v\in H_{i}$ may have up to a certain bounded number
of neighbors $d$ in $G(H_{i})$ with the same color as $v$). Subsequently,
the procedure uses the defective coloring to orient only edges connecting
vertices with differing colors, or which belong to different $H$-sets.
In addition to the input graph, this last procedure takes an integer
$t$ as an argument used to determine the parameters of the defective
coloring produced and used by the procedure. Ultimately, the procedure
produces an acyclic orientation of all edges of the input graph that
connect vertices in different $H$-sets, or that belong to the same
$H$-set but connect vertices with different colors. However, edges
connecting vertices that belong to the same $H$-set and have the
same color remain unoriented. For appropriate values of $t$, the
last procedure can produce orientations much shorter than the $O(a\log n)$
orientation of \cite{Barenboim2008}.

The second procedure is Procedure Arbdefective-Coloring. The procedure
takes three parameters as input. The first is the input graph, and
the latter two are two integers $k,t$. Procedure Arbdefective-Coloring
invokes Procedure Partial-Orientation and uses the produced partial
orientation to recolor the vertices, in order to produce an\emph{
$\left\lfloor \frac{a}{t}+\left(2+\epsilon\right)\frac{a}{k}\right\rfloor $}-\emph{arbdefective}
$k$-\emph{coloring} of $V$. 

A $b$-arbdefective $c$-coloring of a graph's vertices is an assignment
of one of $c$ colors to each vertex, such that the subgraph induced
by the vertices with a certain color has arboricity at most $b$.

Procedure Arbdefective-Coloring does this as follows. Each vertex
first waits for all of its parents in the orientation to choose a
color and then chooses a color used by the minimum number of parents.

The final procedure is Procedure Legal-Coloring which iteratively
partitions the input graph $G=(V,E)$ into smaller subgraphs $G_{i}$,
each induced by the set of vertices with a certain color $i$ and
each with a gradually smaller bounded arboricity. The number of times
that the graph is partitioned is decided based on a parameter $p$
passed to the procedure as an argument upon its invocation. Once the
arboricity of each subgraph created by the procedure becomes small
enough, all such subgraphs are colored in parallel using the recoloring
scheme used in the final step of the above-mentioned algorithm from
\cite{Barenboim2008}. A different unique color palette is used for
each subgraph based on a unique index given to each subgraph during
the execution of the procedure, ensuring a proper coloring. For completeness,
we present pseudo-code for Procedure Partial-Orientation, Procedure
Arbdefective-Coloring and Procedure Legal-Coloring as Algorithms \ref{alg:Procedure-Partial-Orientation}-\ref{alg:Procedure-Legal-Coloring}.

Finally, according to Corollary 4.6 in \cite{Barenboim2011}, for
an arbitrarily small constant $\eta>0$, invoking Procedure Legal-Coloring
with $p=2^{O\left(\frac{1}{\eta}\right)}$ produces an $O(a^{1+\eta})$-coloring
in $O(\log a\log n)$ deterministic time. We denote the last mentioned
algorithm, which produces an $a^{1+c}$-vertex-coloring for an arbitrarily
small constant $c>0$, as Procedure One-Plus-Eta-Legal-Coloring$(G,c)$.

\begin{algorithm}[H]
\begin{algorithmic}[1]
\STATE{$H_1, H_2,..., H_\ell\coloneqq$ an $H$-partition of $G$.}
\FOR{$i = 1,2,...,\ell$ in parallel}
\STATE{compute an $ \left\lfloor \frac{a}{t} \right\rfloor $-defective $O\left(t^2 \right)$-coloring $\psi_{i}$ of $G\left(H_{i}\right)$}
\ENDFOR
\FOR{each edge $e = (u,v)$ in $E$ in parallel} 
\IF{$u$ and $v$ belong to different $H$-sets} 
\STATE{orient $e$ towards the endpoint with greater $H$-index.} 
\ELSE 
\STATE{/* $u,v \in H_{i}$ for some $ i, 1 \le i \le \ell $ */} 
\IF{$u$ and $v$ have different colors} 
\STATE{orient $e$ towards the endpoint with greater $\psi_{i}$-color.} 
\ENDIF 
\ENDIF 
\ENDFOR
\end{algorithmic}

\caption{Procedure Partial-Orientation$\left(G,t\right)$ \cite{Barenboim2013}\label{alg:Procedure-Partial-Orientation}}
\end{algorithm}
\begin{algorithm}[H]
\begin{algorithmic}[1]
\STATE{$\mu = $ Partial-Orientation$\left( G,t \right)$}
\STATE{\textbf{once} all the parents $u$ of $v$ with respect to $\mu$ have selected a color $\phi\left(u\right)$, $v$ selects a color $\phi(v)$ from the palette \{1,2,...,$k$\} which is used by the minimum number of parents of $v$.}
\end{algorithmic}

\caption{Procedure Arbdefective-Coloring$\left(G,k,t\right)$ \cite{Barenboim2013}\label{alg:Procedure-Arbdefective-Coloring}}
\end{algorithm}
\begin{algorithm}[H]
\begin{algorithmic}[1]
\STATE{$G_{1} \coloneqq G$}
\STATE{$\alpha \coloneqq \alpha\left(G_{1}\right)$ /* $\alpha\left(G\right)$ is assumed to be known to all vertices. */}
\STATE{$\mathcal{G} \coloneqq {G_{1}}$ /* The set of subgraphs */}
\WHILE{$\alpha > p$}
\STATE{$\mathcal{\hat{G}} \coloneqq \emptyset$ /* Temporary variable for storing refinements of the set $\mathcal{G}$ */}
\FOR{each $G_{i} \in \mathcal{G}$ in parallel}
\STATE{$G_{1}^{'},G_{2}^{'},...,G_{p}^{'}$ $\coloneqq$ Arbdefective-Coloring$\left( G_{i}, k \coloneqq p, t \coloneqq p \right)$ 
\\
/* $G_{j}^{'}$ is the subgraph of $G_{i}$ induced by all the vertices that are assigned the color $j$ by the arbdefective coloring.*/}
\FOR{$j \coloneqq 1,2,...,p$ in parallel}
\STATE{$z \coloneqq \left( i-1 \right) \cdot p + j$ /* Computing a unique index for each subgraph. */}
\STATE{$\hat{G}_{z} \coloneqq G_{j}^{'}$}
\STATE{$\mathcal{\hat{G}} \coloneqq \mathcal{\hat{G}} \cup \left\{ \hat{G}_{z} \right  \}$}
\ENDFOR
\ENDFOR
\STATE{$\mathcal{G} \coloneqq \hat{\mathcal{G}}$}
\STATE{$\alpha \coloneqq \left\lfloor \frac{\alpha}{p} + \left( 2+\epsilon \right) \cdot \frac{\alpha}{p} \right\rfloor$
\\
/* The new upper bound for the arboricity of each of the subgraphs. */}
\ENDWHILE
\STATE{$A \coloneqq \left\lfloor \left( 2 + \epsilon \right) \alpha \right\rfloor + 1$}
\FOR{each $G_{i} \in \mathcal{G}$ in parallel}
\STATE{color $G_{i}$ legally using the palette $\left\{ \left( i-1 \right) \cdot A + 1, \left( i-1 \right ) \cdot A + 2,...,i \cdot A \right\}$
\\
/* Using Theorem 5.15 from \cite{Barenboim2013} */}
\ENDFOR
\end{algorithmic}

\caption{Procedure Legal-Coloring$\left(G,p\right)$ \cite{Barenboim2013}\label{alg:Procedure-Legal-Coloring}}
\end{algorithm}

\subsubsection{Improved Results}

In order to obtain an $O\left(a^{1+\eta}\right)$-vertex-coloring
algorithm with an improved vertex-averaged complexity, we devise a
new algorithm based on Procedure Legal-Coloring from \cite{Barenboim2011}.

We define a slightly modified version of Procedure Arbdefective-Coloring
from \cite{Barenboim2011}. Our version receives as input two integers
$k,t$, as in the original procedure, and an additional argument,
which is a partition of the input graph's vertices into $h$ $H$-sets
$\mathcal{H}=\left\{ H_{1},H_{2},...,H_{h}\right\} $. We henceforth
refer to this modified version of Procedure Arbdefective-Coloring
as Procedure H-Arbdefective-Coloring. The procedure computes an $\left\lfloor \frac{a}{t}+(2+\epsilon)\cdot\frac{a}{k}\right\rfloor $-arbdefective
$O(k)$-coloring of $\mathcal{H}$. The running time depends on the
number of $H$-sets in $\mathcal{H}$, rather than on $\log n$, as
in the original procedure.

We now present the algorithm for the main result of this section.
First, we define some notation. We denote the devised algorithm as
Procedure One-Plus-Eta-Arb-Col$(G,i,a)$, where the arguments of Procedure
One-Plus-Eta-Arb-Col are as follows. The argument $G$ is a graph
the algorithm receives as input, $i$ is an index given to the graph,
to track the current depth of the recursion, and $a$ is the arboricity
of $G$. The algorithm begins by invoking One-Plus-Eta-Arb-Col$(G,1,a)$,
where $G$ is the original input graph, and $a$ is the arboricity
of $G$.

Also, every vertex $v\in V$ begins with an ``empty'' color string,
consisting of 0 characters, and to which additional characters can
be appended. We now describe the steps of Procedure One-Plus-Eta-Arb-Col.\\
\\
One-Plus-Eta-Arb-Col($G,i,a$)
\begin{enumerate}
\item Let $C$ be a sufficiently large constant.
\item If $a<C$ then compute an $O(1)=O(a^{2})$-coloring of $G$ using
our $O(a^{2})$-coloring algorithm from Section \ref{subsec:O(ka^2)Coloring},
Theorem \ref{O(ka^2)VertexColoringTheorem}, using $k=2$.\label{enu:recursionBaseCaseColorStep}
\item Else:\label{enu:recursionNonBaseCaseStep}
\begin{enumerate}
\item Compute a partition of $V$ into the $H$-sets\\
$\mathcal{H}=\left\{ H_{1},H_{2},...,H_{r}\right\} $, for $r=\left\lceil 2\log\log n\right\rceil $.
Denote $H=\cup_{j=1}^{r}H_{j}$.\label{enu:createLoglognHsetsStep}
\item Invoke One-Plus-Eta-Legal-Coloring$\left(G(V\setminus H),\frac{1}{\log C}\right)$
of \cite{Barenboim2011} and add the prefix '1' to each resulting
color.\label{enu:rightHandSubgraphColoringStep}
\item In parallel to 3(a), 3(b), do \ref{enu:leftHandSubgraphArbdefectiveColStep}
followed by \ref{enu:parallelArbColInvocationStep}:\label{enu:leftHandSubgraphColoringStep}
\begin{enumerate}
\item Compute an $\frac{a}{C}$-arbdefective $O(C)$-coloring of $H$ by
invoking Procedure H-Arbdefective-Coloring with parameters $G(H)$,
$k=t=(3+\epsilon)C$, for $\epsilon=2$, and $\mathcal{H}$.\\
Let $G_{1},G_{2},....,G_{q}$ denote the resulting subgraphs induced
by colors $1,2,...,q=O(C)$, respectively. \label{enu:leftHandSubgraphArbdefectiveColStep}
\item For $j=1,2,...,q$ in parallel: Invoke One-Plus-Eta-Arb-Col($G_{j},i+1,\lfloor a/C\rfloor$),
and add the prefix $'2j'$ to each resulting color.\label{enu:parallelArbColInvocationStep}
\end{enumerate}
\end{enumerate}
\end{enumerate}
We prove the correctness of the procedure in the following lemma.

\begin{lemma}\label{arbColCorrectnessLemma}

For an input graph $G=(V,E)$, Procedure One-Plus-Eta-Arb-Col produces
a proper coloring of the vertices of $V$.

\end{lemma}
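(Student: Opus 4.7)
The plan is to proceed by strong induction on the arboricity parameter $a$ passed as the third argument to Procedure One-Plus-Eta-Arb-Col. In the base case $a<C$, Step \ref{enu:recursionBaseCaseColorStep} invokes the $O(a^2)$-vertex-coloring algorithm of Theorem \ref{O(ka^2)VertexColoringTheorem}, whose correctness is already established, so the base case is immediate. For the inductive step, I will assume that every recursive invocation of Procedure One-Plus-Eta-Arb-Col with arboricity strictly less than $a$ produces a proper coloring of its input graph.

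For the inductive step, I would fix an arbitrary edge $e=\{u,v\}$ of $G$ and analyze three exhaustive cases. If $u,v\in V\setminus H$, both endpoints receive their non-prefix colors from Step \ref{enu:rightHandSubgraphColoringStep}, which invokes Procedure One-Plus-Eta-Legal-Coloring on $G(V\setminus H)$; this procedure is known to produce a proper coloring \cite{Barenboim2011}, and prepending the common prefix $1$ preserves distinctness. If exactly one of $u,v$ lies in $V\setminus H$ and the other in $H$, then the former color string begins with $1$ while the latter begins with $2$, so the two final strings differ in their first character.

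The substantive case is $u,v\in H$. Step \ref{enu:leftHandSubgraphArbdefectiveColStep} assigns each vertex of $H$ an arbdefective color in $\{1,\ldots,q\}$ via Procedure H-Arbdefective-Coloring with $k=t=(3+\epsilon)C$, producing subgraphs $G_1,\ldots,G_q$. If $u$ and $v$ receive distinct arbdefective colors $j_u\neq j_v$, then Step \ref{enu:parallelArbColInvocationStep} prepends the distinct prefixes $2j_u$ and $2j_v$, so the colors differ. Otherwise both endpoints lie in the same $G_j$, whose arboricity is at most $\bigl\lfloor a/t+(2+\epsilon)a/k \bigr\rfloor = \lfloor a/C\rfloor < a$ by the arbdefective guarantee. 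The recursive call One-Plus-Eta-Arb-Col$(G_j, i+1, \lfloor a/C\rfloor)$ then produces a proper coloring of $G_j$ by the inductive hypothesis, and prepending the common prefix $2j$ preserves propriety. Since the arboricity parameter strictly decreases at each recursion level and drops below $C$ in finitely many steps, the recursion terminates.

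I expect the main obstacle to be simply the bookkeeping of the prefix strings: once it is established that the first character separates the Step \ref{enu:rightHandSubgraphColoringStep} branch from the Step \ref{enu:leftHandSubgraphColoringStep} branch, and that the arbdefective color digit separates distinct $G_j$'s within the latter, correctness reduces to the inductive hypothesis together with the correctness of the building blocks (Procedure One-Plus-Eta-Legal-Coloring, Procedure H-Arbdefective-Coloring, and the $O(a^2)$-coloring of Theorem \ref{O(ka^2)VertexColoringTheorem}). A secondary point that must be checked is that the arboricity value $\lfloor a/C\rfloor$ actually passed as the third argument of the recursive call is indeed a valid upper bound on $a(G_j)$, which follows directly from the parameters of Procedure H-Arbdefective-Coloring chosen in Step \ref{enu:leftHandSubgraphArbdefectiveColStep}.
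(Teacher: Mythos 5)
Your proposal is correct and follows essentially the same route as the paper's proof: an induction over the recursion (the paper inducts on the distance $l=b-i$ from the deepest recursion level, you induct on the arboricity parameter, which is equivalent since the parameter strictly decreases from $a$ to $\lfloor a/C\rfloor$ at each level), with the same three-way edge case analysis distinguishing endpoints in $V\setminus H$, split endpoints, and endpoints within $H$ separated either by distinct arbdefective-color prefixes or by the inductive hypothesis inside a single $G_j$. Your explicit check that $\lfloor a/t+(2+\epsilon)a/k\rfloor=\lfloor a/C\rfloor$ bounds $a(G_j)$ is a small addition the paper leaves implicit, but it does not change the argument.
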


\begin{proof}

Let us denote by $b$ the maximum value the argument $i$ of the invocation
of Procedure One-Plus-Eta-Arb-Col takes in any invocation of Procedure
One-Plus-Eta-Arb-Col. We prove Procedure One-Plus-Eta-Arb-Col produces
a proper coloring of $G$ by induction on the value $l=b-i$.

In the base case, it holds that $l=b-b=0$. This means that for the
graph $G$ given as input to the current invocation of $A$, it holds
that $a<C$. Therefore, step \ref{enu:recursionBaseCaseColorStep}
is executed, which colors $G$ using $O(a^{2})$ colors, using the
algorithm from Section \ref{subsec:O(ka^2)Coloring}, Theorem \ref{O(ka^2)VertexColoringTheorem}.
By the correctness of the algorithm of Theorem \ref{O(ka^2)VertexColoringTheorem},
the coloring produced by Procedure One-Plus-Eta-Arb-Col in this case
is proper, as required.

Now, suppose that Procedure One-Plus-Eta-Arb-Col properly colors the
input graph $G$ for $l=k$, where $0<k<b$. We shall now prove that
Procedure One-Plus-Eta-Arb-Col properly colors the input graph $G$
for $l=k+1$.

Since $l=k+1>1$, in the current invocation of Procedure One-Plus-Eta-Arb-Col,
it holds that $a\ge C$. Therefore, in this case, step \ref{enu:recursionNonBaseCaseStep}
is executed. In particular , step 3(b) is executed. Following the
execution of step 3(b), by the correctness of Procedure One-Plus-Eta-Legal-Coloring,
shown in \cite{Barenboim2011}, the subgraph $G(V\setminus H)$\textbf{
}of $G$\textbf{ }is properly colored. Also, at the end of step 3(b),
we add the prefix $'1'$\textbf{ }to each resulting color, and at
the end of step \ref{enu:parallelArbColInvocationStep}, each vertex
$v\in H$ is assigned some color with a prefix of $'2'$. Therefore,
necessarily, for each edge $\{u,v\}\in E$, such that $u\in H$ and
$v\in V\setminus H$, the vertices $u,v$ have been assigned different
colors at the end of the execution of step \ref{enu:recursionNonBaseCaseStep}.

We now prove that for each edge $\{u,v\}\in E$, such that $u,v\in H$,
the vertices $u,v$ have different colors. In step \ref{enu:leftHandSubgraphArbdefectiveColStep}
we compute an $\frac{a}{C}$-arbdefective $O(C)$-coloring of the
vertices of $H$. Then, in step \ref{enu:parallelArbColInvocationStep},
for each subgraph $G_{j}$, for $j=1,2,...,q=O(C)$, induced by the
arbdefective coloring computed in step \ref{enu:leftHandSubgraphArbdefectiveColStep},
we invoke One-Plus-Eta-Arb-Col($G_{j},i+1,\lfloor a/C\rfloor$). For
each invocation of the form One-Plus-Eta-Arb-Col($G_{j},i+1,\lfloor a/C\rfloor$)
it holds that $l=b-(i+1)=k$. Therefore, by the inductive hypothesis,
each invocation One-Plus-Eta-Arb-Col($G_{j},i+1,\lfloor a/C\rfloor$),
carried out in step \ref{enu:parallelArbColInvocationStep}, produces
a proper coloring of the vertices of the subgraph $G_{j}$, induced
by the vertices of $H$ that were assigned the color $j$ by Procedure
H-Arbdefective-Coloring in step \ref{enu:leftHandSubgraphArbdefectiveColStep}.
Therefore, for each edge $\{u,v\}$, such that $u,v$ are both vertices
of some subgraph $G_{j}$, the vertices $u,v$ have different colors
at the end of step \ref{enu:parallelArbColInvocationStep}.

In addition, in step \ref{enu:parallelArbColInvocationStep}, for
each $j=1,2,...,q=O(C)$, for each vertex $v\in G_{j}$, after the
invocation One-Plus-Eta-Arb-Col($G_{j},i+1,\lfloor a/C\rfloor$) assigned
$v$ some color $c$, we added the prefix $'2j'$ to $c$. Therefore,
for each edge $\{u,v\}\in E$, where $u\in G_{j_{1}},v\in G_{j_{2}}$,
for $1\le j_{1}<j_{2}\le h$, it holds that $u,v$ have different
colors.

Overall, for each edge $\{u,v\}\in E$, it holds that $u,v$ have
different colors at the end of the current invocation of Procedure
One-Plus-Eta-Arb-Col. Therefore, for $l=k+1$, an invocation of the
form One-Plus-Eta-Arb-Col$(G,b-k-1,\lfloor a/C\rfloor)$ produces
a proper coloring of $G$. Therefore, for any $1\le i\le b$, the
invocation One-Plus-Eta-Arb-Col$(G,i,$$\lfloor a/C\rfloor$) produces
a proper coloring of $G$. Therefore, Procedure One-Plus-Eta-Arb-Col
produces a proper coloring of $G$, as required.

\end{proof}

Next, we analyze the number of colors used by Procedure One-Plus-Eta-Arb-Col.
We first present an intuitive explanation for computing the number
of colors used by Procedure One-Plus-Eta-Arb-Col.

Let us examine a recursive invocation of Procedure One-Plus-Eta-Arb-Col,
for which $a\ge C$. We observe that $G(V\setminus H)$ is colored
using $a^{1+\eta}$ colors, for a sufficiently small constant $\eta>0$.
In addition, we divide $G(H)$ into $q=O(C)$ subgraphs $G_{j}$,
each with arboricity at most $\left\lfloor \frac{a}{C}\right\rfloor $.
The number of colors we use to color $G(H)$ is $q$, multiplied by
the number of colors used by the recursive invocation of Procedure
One-Plus-Eta-Arb-Col on every subgraph $G_{j}$.

Therefore, the number of colors used by Procedure One-Plus-Eta-Arb-Col
is:
\[
f(a)=a^{1+\eta}+q\cdot f\left(\left\lfloor \frac{a}{C}\right\rfloor \right)
\]
Also, if $a<C$ in the current invocation of Procedure One-Plus-Eta-Arb-Col,
we color the current subgraph using $O(C^{2})$ colors. Thus, for
a sufficiently large value of $C$, the solution of the recursive
formula is:
\[
O\left(\left(\frac{q}{C}\right)^{\log_{C}a}\cdot a^{1+\eta}\right)
\]
If $\frac{q}{C}$ is constant and $C$ is sufficiently larger than
$\frac{q}{C}$, then:
\[
O\left(\left(\frac{q}{C}\right)^{\log_{C}a}\cdot a^{1+\eta}\right)=O\left(a^{1+\eta'}\right)
\]
for an arbitrarily small constant $\eta'>0$. Indeed, in our algorithm
$\frac{q}{C}=\frac{(3+\epsilon)C}{C}=3+\epsilon$, that is, $\frac{q}{C}$
is constant. Also, we can choose $C$ to be arbitrarily larger than
$3+\epsilon$, thus obtaining the required result. The analysis of
the number of colors employed by Procedure One-Plus-Eta-Arb-Col is
presented in the following lemma.

\begin{lemma}\label{arbColNumOfColsLemma}

For an input graph $G=(V,E)$, Procedure One-Plus-Eta-Arb-Col colors
the vertices of $V$ using $O\left(a^{1+\eta}\right)$ colors, for
an arbitrarily small constant $\eta>0$.

\end{lemma}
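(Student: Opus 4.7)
The plan is to prove this by induction on the depth of recursion of Procedure One-Plus-Eta-Arb-Col, or equivalently by solving the recurrence it induces on the number of colors. Let $f(a)$ denote the maximum number of distinct colors that any invocation of One-Plus-Eta-Arb-Col produces on a graph of arboricity at most $a$. The base case is step 2, which is reached when $a < C$: here the procedure calls the $O(a^2)$-coloring algorithm of Theorem \ref{O(ka^2)VertexColoringTheorem}, giving $f(a) = O(C^2)$, a constant depending only on $C$.

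For the recursive case I would argue as follows. In step 3(b), Procedure One-Plus-Eta-Legal-Coloring is invoked on $G(V \setminus H)$ with parameter $\tfrac{1}{\log C}$, so by the bound cited from \cite{Barenboim2011} it uses $O(a^{1 + 1/\log C})$ colors; prepending the prefix $1$ keeps them disjoint from later colors but does not change the count. In step 3(c)i, Procedure H-Arbdefective-Coloring partitions $H$ into $q = (3+\epsilon)C$ subgraphs, each of arboricity at most $\lfloor a/C \rfloor$. In step 3(c)ii each subgraph is colored recursively using at most $f(\lfloor a/C \rfloor)$ colors, and the $q$ distinct prefixes $2j$ make the $q$ color sets mutually disjoint and disjoint from those of step 3(b). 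Together these observations give the recurrence
\[
f(a) \;\le\; c_1 \cdot a^{1 + 1/\log C} \;+\; q \cdot f(\lfloor a/C \rfloor), \qquad f(a) = O(C^2) \text{ for } a < C.
\]

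Next, I would unroll this recurrence to depth $L = \Theta(\log_C a)$, obtaining
\[
f(a) \;\le\; c_1 \cdot a^{1 + 1/\log C} \sum_{i=0}^{L-1} \left(\frac{q}{C^{1 + 1/\log C}}\right)^{i} \;+\; q^{L} \cdot O(C^2).
\]
With $q = (3+\epsilon)C$ and $C$ large, the ratio $q / C^{1+1/\log C}$ exceeds $1$, so the geometric sum is dominated by its final term, and the whole expression is dominated by $q^{L} = \Theta(a^{\log_C q}) = \Theta(a^{1 + \log_C(3+\epsilon)})$. Given any target $\eta > 0$, choosing $C \ge (3+\epsilon)^{1/\eta}$ forces $\log_C(3+\epsilon) \le \eta$, yielding $f(a) = O(a^{1+\eta})$, as claimed.

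The main obstacle I expect is the tension between the two exponents appearing in the recurrence: the additive contribution $a^{1 + 1/\log C}$ from each One-Plus-Eta-Legal-Coloring subproblem, and the effective growth $a^{1 + \log_C(3+\epsilon)}$ produced by balancing the branching factor $q$ against the shrinking factor $C$. The analysis must fix $C$ first (in terms of the target $\eta$), and only then instantiate the inner parameter as $1/\log C$, so that the branching term $q\,f(a/C)$ dominates per level and the per-level additive cost is absorbed. A secondary check is that the base-case contribution $q^{L} \cdot O(C^2)$ also fits inside the same $O(a^{1+\eta})$ bound, which follows from $C^2/q = O(1)$ once $C$ is fixed.
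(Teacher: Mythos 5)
Your proposal is correct and follows essentially the same route as the paper: the paper states exactly your recurrence $f(a)=a^{1+\eta}+q\cdot f(\lfloor a/C\rfloor)$ as its intuitive explanation and then, in the formal proof, evaluates the same quantity by directly multiplying the prefix contributions ($2^{\log_C a}\cdot(C_1C)^{\log_C a}$ times the base palettes) rather than unrolling the recurrence, arriving at the exponent $1+6/\log C$ versus your $1+\log_C(3+\epsilon)$ --- both of which are driven below any target $\eta$ by taking $C$ large. Your observation about fixing $C$ before instantiating the Legal-Coloring parameter as $1/\log C$ matches the paper's choice of $\eta=6/\log C$.
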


\begin{proof}

We first consider the maximum number of colors required by all graphs
$G(H)$. Let us denote $C_{1}=3+\epsilon$. In each execution of step
\ref{enu:leftHandSubgraphArbdefectiveColStep}, we further divide
the currently processed subgraph $G$, into $q=O(C)$ subgraphs $G_{1},G_{2},...,G_{q}$,
where each subgraph $G_{j}$, for $1\le j\le q$, has an arboricity
of at most $\frac{a}{C}$. After at most $\log_{C}a$ executions of
step \ref{enu:leftHandSubgraphArbdefectiveColStep}, we will reach
the state where $a<C$. At this point, we have divided the original
input graph given as argument to the invocation One-Plus-Eta-Arb-Col$(G,1,a)$
into at most $\left(C_{1}C\right)^{\log_{C}a}$ subgraphs. Each of
these subgraphs is colored using at most $C_{2}C^{2}$ colors in step
\ref{enu:recursionBaseCaseColorStep}, for an appropriate constant
$C_{2}>0$. Therefore, the executions of steps \ref{enu:recursionBaseCaseColorStep},3(c),
without taking into account yet the adding of the prefix $'2j'$ to
each color produced by the recursive invocation of Procedure One-Plus-Eta-Arb-Col
in step \ref{enu:parallelArbColInvocationStep}, increase the number
of colors used to color the subgraphs $G(H)$ by a factor of at most
$C_{2}C^{2}$.

Also, in each execution of step \ref{enu:parallelArbColInvocationStep},
to each subgraph $G_{j}$, for $j=1,2,..,q$, produced in step \ref{enu:leftHandSubgraphArbdefectiveColStep},
we append the prefix $'2j'$ to the color assigned to each vertex
of $G_{j}$ by invoking One-Plus-Eta-Arb-Col($G_{j},i+1,\lfloor a/C\rfloor$).
Again, after at most $\log_{C}a$ invocations of step 3(c), we will
reach the state where $a<C$ and execute step \ref{enu:recursionBaseCaseColorStep}
once. Therefore, the maximum number of times a prefix of the form
$'2j'$ is added to a color is $\log_{C}a$ times, for $j=1,2,...,q=C_{1}C$.
In particular, appending the character $'2'$ as a prefix to some
color at most $\log_{C}a$ times increases the number of colors used
to color the subgraphs $G(H)$ by a factor of at most $2^{\log_{C}a}$.
In addition, adding the prefix $'j'$ to some color at most $\log_{C}a$
times increases the number of colors used to color the subgraphs $G(H)$
by a factor of at most $\left(C_{1}C\right)^{\log_{C}a}$.

Therefore, the adding of the prefix $'2j'$ to the color produced
by the recursive invocation of Procedure One-Plus-Eta-Arb-Col increases
the number of colors used to color the subgraphs $G(H)$ by a factor
of at most:
\[
2^{\log_{C}a}\cdot\left(C_{1}C\right)^{\log_{C}a}
\]
Overall, the number of colors used to color the subgraphs $G(H)$
is at most:
\begin{align*}
C_{2}C^{2}\cdot2^{\log_{C}a}\cdot\left(C_{1}C\right)^{\log_{C}a} & =C_{2}C^{2}\cdot a^{\frac{1}{\log C}}\cdot2^{\log_{C}a\log\left(C_{1}C\right)}\\
 & =C_{2}C^{2}\cdot a^{\frac{1}{\log C}}\cdot a^{\frac{\log\left(C_{1}C\right)}{\log C}}\\
 & =C_{2}C^{2}\cdot a^{\frac{1}{\log C}}\cdot a^{\frac{\log C_{1}}{\log C}+1}\\
 & =C_{2}C^{2}\cdot a^{1+\frac{\log C_{1}+1}{\log C}}
\end{align*}
We now look at the maximum number of colors used to color the subgraphs
$G(V\setminus H)$. These subgraphs are colored in the execution of
step 3(b). In this step, the subgraph $G\left(V\setminus H\right)$
is colored using the invocation One-Plus-Eta-Legal-Coloring$\left(G(V\setminus H),\frac{1}{\log C}\right)$,
using at most $C_{3}a^{1+\frac{1}{\log C}}$, for some sufficiently
large constant $C_{3}$. Then, to each resulting color the prefix
$'1'$ is added. Therefore, the maximum number of colors used to color
the subgraphs $G\left(V\setminus H\right)$ is at most:
\begin{align*}
2^{\log_{C}a}\cdot C_{3}a^{1+\frac{1}{\log C}} & =a^{\frac{1}{\log C}}\cdot C_{3}a^{1+\frac{1}{\log C}}\\
 & =C_{3}a^{1+\frac{2}{\log C}}
\end{align*}
Let us set $C_{4}=\max\left\{ C_{2}C^{2},C_{3}\right\} $. Overall,
the maximum number of colors used by Procedure One-Plus-Eta-Arb-Col
to color an input graph $G$ is at most:
\begin{align*}
2C_{4}a^{1+\frac{\log C_{1}+3}{\log C}} & =2C_{4}a^{1+\frac{\log\left(3+\epsilon\right)+3}{\log C}}\\
 & <2C_{4}a^{1+\frac{\log8+3}{\log C}}\\
 & =2C_{4}a^{1+\frac{6}{\log C}}
\end{align*}
Let us denote $\eta=\frac{6}{\log C}$. It follows that for a sufficiently
large constant $C>0$, the maximum number of colors used by Procedure
One-Plus-Eta-Arb-Col is $O\left(a^{1+\eta}\right)$, as required.

\end{proof}

\begin{lemma}\label{arbColVertexAveragedComplexityLemma}

For an input graph $G=(V,E)$, Procedure One-Plus-Eta-Arb-Col colors
the vertices of $V$ with a vertex-averaged complexity of $O(\log a\log\log n)$
rounds.

\end{lemma}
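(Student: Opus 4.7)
The plan is to bound the vertex-averaged complexity by decomposing each vertex's total work across the recursion tree of One-Plus-Eta-Arb-Col. Since step 3(c)(ii) recurses with arboricity $\lfloor a/C\rfloor$ and the recursion halts once the arboricity drops below the constant $C$, the recursion has depth $d=O(\log_{C}a)=O(\log a)$. For each vertex I would track three contributions at every recursion level at which the vertex is active: (i) the Partition rounds from step 3(a), (ii) the H-Arbdefective-Coloring rounds from step 3(c)(i), and (iii) a single terminal contribution, either from the One-Plus-Eta-Legal-Coloring of step 3(b) in the unique level at which the vertex falls into $V\setminus H$, or from the base-case coloring of step 2 for vertices that recurse all the way down.

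Next I would establish the per-level worst-case costs. Step 3(a) runs for at most $r=\lceil 2\log\log n\rceil$ rounds. Step 3(c)(i) invokes H-Arbdefective-Coloring with the constant parameters $k=t=(3+\epsilon)C$; the partial orientation produced by Partial-Orientation has length $O(r\cdot t^{2})=O(\log\log n)$, so the subsequent topological recoloring also terminates in $O(\log\log n)$ rounds. Step 3(b) costs $O(\log a\log n)$ rounds in the worst case by \cite{Barenboim2011}. Letting $n_k$ denote the total number of vertices active across all subgraphs at recursion level $k$ (so $n_1=n$), applying Lemma \ref{activeVerticesNumberLemma} inside each subgraph independently yields $n_{k+1}\le(1-f)\,n_k$, where $f=(2/(2+\epsilon))^{r}=\Theta(1/(\log n)^{2})$. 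In particular $n_k\le n$ for every $k$, and at most $f\cdot n_k$ vertices are handed off to Legal-Coloring at level $k$.

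Summing the contributions over all vertices and all levels, steps 3(a) and 3(c)(i) together add at most $O(\log\log n)\cdot\sum_{k=1}^{d}n_k\le O(nd\log\log n)=O(n\log a\log\log n)$ rounds. Step 3(b) adds at most $\bigl(\sum_{k=1}^{d}f\,n_k\bigr)\cdot O(\log a\log n)\le f\cdot nd\cdot O(\log a\log n)=O\bigl(n(\log a)^{2}/\log n\bigr)$, which, since $\log a\le\log n$ gives $(\log a)/\log n\le 1\le\log\log n$, is itself $O(n\log a\log\log n)$. The base case (step 2), entered at most once per vertex, contributes only $O(n)$ in total. Dividing the sum by $n$ yields the claimed vertex-averaged complexity of $O(\log a\log\log n)$.

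The main obstacle I anticipate is the careful bookkeeping of the level-wise counts $n_k$ across the branching recursion: because step 3(c)(ii) spawns $q=O(C)$ subgraphs rather than a single child, the decay $n_{k+1}\le(1-f)\,n_k$ must be verified by applying Lemma \ref{activeVerticesNumberLemma} independently inside each subgraph and then aggregating, using that the threshold $r=2\log\log n$ is a fixed global quantity that is passed to every recursive call. A secondary technical point is justifying the $O(\log\log n)$ time bound for H-Arbdefective-Coloring (rather than the $O(\log n)$ figure one would get from the original Procedure Arbdefective-Coloring), which reduces to the observation that the length of the partial orientation produced by Partial-Orientation scales with the given number $r$ of H-sets and with the constant $t^{2}$, rather than with $\log n$.
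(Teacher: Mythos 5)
Your proof follows essentially the same route as the paper's: recursion depth $O(\log_{C}a)=O(\log a)$, a per-level cost of $O(\log\log n)$ for the partition and H-Arbdefective-Coloring steps (justified, as in the paper, by the orientation length scaling with $r=O(\log\log n)$ rather than $\log n$), and the key observation that only $O\left(n/\log^{2}n\right)$ vertices per level pay the $O(\log^{2}n)$ cost of One-Plus-Eta-Legal-Coloring, giving $O(n\log a)$ for step 3(b) overall. The only slip is your claim that the base case contributes $O(n)$ in total: step \ref{enu:recursionBaseCaseColorStep} invokes the algorithm of Theorem \ref{O(ka^2)VertexColoringTheorem} with $k=2$, whose vertex-averaged complexity is $O(\log\log n)$, so the correct total is $O(n\log\log n)$ --- which is still dominated by the $O(n\log a\log\log n)$ term and does not affect the final bound.
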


\begin{proof}

The worst-case time complexity required by coloring each subgraph
$G(V\setminus H)$, using Procedure One-Plus-Eta-Legal-Coloring, requires
at most the worst-case time complexity of executing Procedure One-Plus-Eta-Legal-Coloring
on the original graph $G$ given as input to the invocation One-Plus-Eta-Arb-Col$(G,1,a)$.
The worst-case time complexity of Procedure One-Plus-Eta-Legal-Coloring,
when executed on an input graph $G=(V,E)$ with arboricity $a$, is
$O(\log a\log n)=O(\log^{2}n)$ rounds. The last transition follows
from the fact that $a\le n$.

Also, we only execute Procedure One-Plus-Eta-Legal-Coloring on a subgraph
$G\left(V\setminus H\right)$ induced by the $H$-sets $H_{r+1},...,H_{O(\log n)}$,
which were produced by executing Procedure Partition on $G$. We remind
that we set $\epsilon=2$. Similarly to the analysis of the vertex-averaged
complexity of Procedure Partition, the number of vertices that take
part in this invocation of Procedure One-Plus-Eta-Legal-Coloring is
at most:
\begin{align*}
\sum_{i=\left\lceil 2\log\log n\right\rceil +1}^{\left\lceil \log_{\frac{2+\epsilon}{2}}n\right\rceil }n_{i} & \le\sum_{i=2\log\log n+2}^{\log_{\frac{2+\epsilon}{2}}n+1}\left(\frac{2}{2+\epsilon}\right)^{i-1}n\\
 & =\sum_{i=2\log\log n+2}^{\log n+1}\left(\frac{1}{2}\right)^{i-1}n\\
 & =\sum_{i=0}^{\log n-2\log\log n-1}\left(\frac{1}{2}\right)^{i+2\log\log n+1}n\\
 & \le n\cdot\left(\frac{1}{2}\right)^{2\log\log n+1}\cdot\frac{1-\left(\frac{1}{2}\right)^{\log n}}{1-\frac{1}{2}}\\
 & =\frac{n}{2^{\log\log^{2}n}}\cdot O(1)\\
 & =O\left(\frac{n}{\log^{2}n}\right)
\end{align*}
In addition, as explained above, the maximum depth of the recursion,
when invoking One-Plus-Eta-Arb-Col$(G,1,a)$, is at most:
\[
\log_{C}a=\frac{\log a}{\log C}=\frac{1}{\log C}\log a=O(\log a)
\]
For all recursive invocations One-Plus-Eta-Arb-Col$(G,i,a)$, with
the same value of $i$, the coloring of each subgraph $G(V\setminus H)$
is carried out in parallel. Therefore, the sum of the number of rounds
carried out by all vertices executing step 3(b) throughout the execution
of Procedure One-Plus-Eta-Arb-Col on the original input graph is at
most: 
\begin{align*}
O(\log^{2}n)\cdot O\left(\frac{n}{\log^{2}n}\right)\cdot O(\log a) & =O\left(n\log a\right)
\end{align*}
We now analyze the worst-case time complexity of coloring each subgraph
$G(H)$. The recursive invocation of Procedure One-Plus-Eta-Arb-Col
stops when $a<C$. In this case, step \ref{enu:recursionBaseCaseColorStep}
is executed. In step \ref{enu:recursionBaseCaseColorStep}, we compute
an $O(a^{2})$ coloring of each subgraph $G\left(H\right)$ in parallel
using the algorithm of Theorem \ref{O(ka^2)VertexColoringTheorem},
which has a vertex-averaged complexity of $O(\log\log n)$. Also,
the number of vertices executing the coloring of the subgraphs $G(H)$
can trivially be at most $n$. Therefore, the sum of the number of
rounds of communication executed by each vertex taking part in the
execution of step \ref{enu:recursionBaseCaseColorStep} is $O(n\log\log n)$.

We now analyze the worst-case time complexity of each execution of
Procedure H-Arbdefective-Coloring. In our algorithm, we invoked Procedure
H-Arbdefective-Coloring with the integer parameters $k=t=\left(3+\epsilon\right)C$.
Following a similar analysis to that of the worst-case time complexity
of Procedure Arbdefective-Coloring in \cite{Barenboim2011}, the worst-case
time complexity of Procedure H-Arbdefective-Coloring is $O(t^{2}\left|\mathcal{H}\right|)=O(r)$.
Since $\left|\mathcal{H}\right|=r=O(\log\log n)$, the worst-case
time complexity of a single execution of Procedure H-Arbdefective-Coloring
in our algorithm is $O(\log\log n)$.

As explained above, the maximum depth of the recursion, when invoking
One-Plus-Eta-Arb-Col$(G,1,a)$, is $O(\log a)$. This is also the
maximum number of times any vertex $v\in V$ executes Procedure H-Arbdefective-Coloring.
Since the number of vertices executing Procedure H-Arbdefective-Coloring
can trivially be at most $n$, the sum of the number of rounds of
communication carried out by all vertices executing Procedure H-Arbdefective-Coloring
throughout the execution of Procedure One-Plus-Eta-Arb-Col is $O(n\log a\log\log n)$.
Therefore, the vertex-averaged complexity of the execution of Procedure
One-Plus-Eta-Arb-Col on an input graph $G$ is:
\[
\frac{O\left(n\log a\right)+O\left(n\log a\log\log n\right)}{n}=O\left(\log a\log\log n\right)
\]
as required.

\end{proof}

The following theorem summarizes the properties of Procedure One-Plus-Eta-Arb-Col.

\begin{theorem}

Procedure One-Plus-Eta-Arb-Col computes a proper $O\left(a^{1+\eta}\right)$-vertex-coloring
of an input graph $G=(V,E)$ with a vertex-averaged complexity of
$O(\log a\log\log n)$ rounds, for an arbitrarily small constant $\eta>0$.

\end{theorem}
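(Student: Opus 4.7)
The plan is essentially to observe that the theorem is a summary statement that packages together the three lemmas immediately preceding it, so the proof reduces to citing each of them in turn. Specifically, Lemma \ref{arbColCorrectnessLemma} already establishes that Procedure One-Plus-Eta-Arb-Col, invoked as One-Plus-Eta-Arb-Col$(G,1,a)$ on an input graph $G=(V,E)$, produces a proper coloring; Lemma \ref{arbColNumOfColsLemma} establishes that the number of colors used is $O(a^{1+\eta})$ for an arbitrarily small constant $\eta>0$; and Lemma \ref{arbColVertexAveragedComplexityLemma} establishes that the vertex-averaged complexity is $O(\log a \log\log n)$. Thus the proof is a two-line aggregation.

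Concretely, I would write: ``By Lemma \ref{arbColCorrectnessLemma}, the coloring computed by Procedure One-Plus-Eta-Arb-Col is proper. By Lemma \ref{arbColNumOfColsLemma}, the number of colors employed is $O(a^{1+\eta})$ for an arbitrarily small constant $\eta>0$. By Lemma \ref{arbColVertexAveragedComplexityLemma}, the vertex-averaged complexity is $O(\log a \log\log n)$ rounds. Combining these three facts yields the claim.'' Since the constant $\eta$ quantifier is the same one used in Lemma \ref{arbColNumOfColsLemma} (it is chosen implicitly by selecting the constant $C$ large enough in step 1 of the procedure), there is no additional quantifier bookkeeping to perform.

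There is no genuine obstacle here: the real work has already been done in the three preceding lemmas, each of which handled one aspect (correctness by induction on the recursion depth, number of colors by carefully tracking the recurrence $f(a) = a^{1+\eta} + q \cdot f(\lfloor a/C \rfloor)$ with $q/C$ constant, and the vertex-averaged complexity by bounding the number of active vertices at depth $r = \lceil 2\log\log n \rceil$ via Lemma \ref{activeVerticesNumberLemma}). The only thing to double-check is that the $\eta$ from the color-count lemma is indeed an arbitrarily small positive constant controllable by the choice of $C$, which is explicit in the statement of Lemma \ref{arbColNumOfColsLemma}; once this is noted, the theorem follows immediately.
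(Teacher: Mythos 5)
Your proposal is correct and matches the paper's approach exactly: the theorem is stated there as a summary of Lemmas \ref{arbColCorrectnessLemma}, \ref{arbColNumOfColsLemma} and \ref{arbColVertexAveragedComplexityLemma}, with no additional argument beyond combining them. Your remark that $\eta$ is controlled by the choice of the constant $C$ is consistent with how Lemma \ref{arbColNumOfColsLemma} sets $\eta=\frac{6}{\log C}$.
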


\section{Solving Problems of Extension from any Partial Solution in Improved
Vertex-Averaged Complexity\label{sec:Solving-Problems-of-Extension-From-Any-Partial-Solution}}

\subsection{General Method}

In this section we define a class of problems we name \emph{problems
of extension} \emph{from any partial solution}. Also, we devise a
general method for converting an algorithm for a problem from this
class with a worst-case time complexity of $f(\Delta,n)$ to an algorithm
with a vertex-averaged complexity of $f(a,n)$. We begin with some
definitions.

\begin{definition}

Suppose we are given a graph $G=(V,E)$ and a problem $P$. Then $P$
is a problem of extension from any partial solution, if for any subgraph
$H'=(V',E')$, $V'\subseteq V$, $E'\subseteq E$, with a proper solution
$S'$ to $P$, there exists an algorithm $\mathcal{A}$ that can compute
a solution $S$ for $P$ on $G$ without changing the solution $S'$
for $H'$.

We refer to such an algorithm $\mathcal{A}$ as an algorithm for a
problem of extension from any partial solution.

\end{definition}

We observe that the main symmetry-breaking problems the paper deals
with, vertex-coloring, MIS, edge-coloring and maximal matching, are
all problems of this class. We also note that in \cite{Feuilloley2017a},
the author refers to the set of languages corresponding to the class
of problems of extension from any partial solution as \emph{completable}
LCL{*}\emph{ languages}, where LCL{*} is defined by the same author
in \cite{Feuilloley2017a} as the set of languages $\mathcal{L}$,
for which there exists a constant-time distributed verification algorithm
that accepts at every node of a given graph if and only if the graph
belongs to the language $\mathcal{L}$. (This is an extension of the
term \emph{locally checkable labelings }(LCL), used to describe problems
where the validity of the solution can be verified within a constant
number of rounds. The term LCL, to the best of our knowledge, was
first presented in \cite{Naor1995}). An early paper related to LCL
and problems of extension from any partial solution is \cite{Kutten1995}.
In \cite{Kutten1995}, the authors dealt with methods for transforming
an incorrect solution for certain problems, such as MIS, into a correct
one, in a time complexity that depends on the number of faulty vertices
in the graph, rather than, for example, on the number of all vertices
in the graph, which can potentially be much larger.

We now present a general method to convert an algorithm which solves
a problem of extension from any partial solution with a worst-case
time complexity given as a function of $\Delta,n$ to another algorithm
for the same problem, with a vertex-averaged complexity given as a
function of $a,n$.

\begin{theorem}\label{algorithmForPoEMethodTheorem}

Suppose we are given a problem of extension from any partial solution
$P$ and an algorithm for a problem of extension from any partial
solution $\mathcal{A}$ for solving $P$ with a worst-case time complexity
of $f(\Delta,n)$. Then, there exists an algorithm $\mathcal{A}'$
for solving $P$ with a vertex-averaged complexity of $O\left(f(a,n)\right)$.

\end{theorem}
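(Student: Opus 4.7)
My plan is to apply Procedure Partition to carve $V$ into $H$-sets $H_1, \ldots, H_\ell$ and then invoke $\mathcal{A}$ on a bounded-degree auxiliary graph constructed around each $H$-set, following the pipelined pattern of Corollary \ref{partitionBasedAlgAverageTimePerVertex} (as applied, e.g., in Section \ref{subsec:Parallelized-Forest-Decomposition}). The crucial structural fact I would exploit is $\Delta(G(H_i)) \le A = (2+\epsilon)a = O(a)$ (Section \ref{subsec:Procedure-Partition-Description}).

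Concretely, in the $i$-th iteration, once Procedure Partition commits $H_i$, I build an auxiliary graph $G^\ast_i$ obtained from $G(H_i)$ by appending, for every edge $\{v,u\}$ of $G$ with $v \in H_i$ and $u \in \bigcup_{j<i} H_j$, a fresh copy of $u$ adjacent only to $v$ and carrying $u$'s committed output as its fixed output. By the $H$-partition guarantee, each $v \in H_i$ has at most $A$ intra-$H_i$ neighbors in $G^\ast_i$, and every duplicated fringe copy has degree exactly $1$; hence $\Delta(G^\ast_i) = O(a)$ and $|V(G^\ast_i)| = O(|H_i|\Delta) = O(n\Delta)$. I then invoke $\mathcal{A}$ on $G^\ast_i$, with the duplicated fringe copies forming the subgraph $H'$ and their committed values forming the partial solution $S'$. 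By the defining extension property of $\mathcal{A}$, this returns an extension assigning a final output to every vertex of $H_i$ in at most $f(O(a), O(n\Delta)) = O(f(a,n))$ rounds, where I would implicitly assume that $f$ grows slowly enough in its second argument that blowing $n$ up by a factor of $\Delta \le n$ does not change the asymptotic value (which holds for every $f$ of interest in this paper).

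Given a worst-case bound of $O(f(a,n))$ per iteration, the vertex-averaged complexity follows directly from Corollary \ref{partitionBasedAlgAverageTimePerVertex} applied with $T_{\mathcal{A}} = O(f(a,n))$: by Lemma \ref{activeVerticesNumberLemma} the number of still-active vertices at iteration $i$ decays geometrically in $i$, so the total sum of active rounds across the algorithm is $O(n\cdot f(a,n))$, yielding vertex-averaged complexity $O(f(a,n))$ as required.

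The main obstacle will be justifying that the fringe duplication is harmless, i.e., that any solution $\mathcal{A}$ returns on $G^\ast_i$ consistent with all fringe copies corresponds to a valid extension of the committed partial solution on $G$. This holds because each duplicated copy of $u$ preserves exactly the local constraint that $u$ originally imposed on its $H_i$-neighbor $v$, and because $P$ is a problem of extension from any partial solution the committed global partial solution is itself extendable to all of $G$; the latter implies the existence of a solution on $G^\ast_i$ compatible with the duplicated boundary, which $\mathcal{A}$ is then guaranteed to find in the stated time.
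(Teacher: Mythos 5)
There is a genuine gap in your construction: the claim that $\Delta(G^{\ast}_{i})=O(a)$ is false. Procedure Partition guarantees only that each $v\in H_{i}$ has at most $A=(2+\epsilon)a$ neighbors in $\bigcup_{j=i}^{\ell}H_{j}$, i.e.\ in $H_{i}$ and the \emph{later} sets; it places no bound whatsoever on the number of neighbors of $v$ in the earlier sets $\bigcup_{j<i}H_{j}$, which can be as large as $\Delta-O(a)$. Your fringe copies are attached precisely along those uncontrolled edges, one pendant per neighbor of $v$ that has already committed, so $\deg_{G^{\ast}_{i}}(v)$ can be as large as $\Delta$. Consequently the invocation of $\mathcal{A}$ on $G^{\ast}_{i}$ costs $f(\Delta,\cdot)$, not $f(O(a),\cdot)$, and the reduction from $\Delta$ to $a$ --- which is the entire content of the theorem --- collapses. (The secondary issue you flag yourself, that $|V(G^{\ast}_{i})|$ may blow up to $n\Delta$ and you must assume $f$ is insensitive to this in its second argument, is a further unproven hypothesis not present in the theorem statement.)

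The paper avoids this by never materializing the committed boundary as vertices in the graph handed to $\mathcal{A}$: it invokes $\mathcal{A}$ directly on $G(H_{i})$, whose maximum degree genuinely is $O(a)$, and pushes the interaction with the already-committed solution into the extension property itself (in the concrete applications, e.g.\ via $(\deg+1)$-list-coloring where each vertex merely prunes its list by the colors of committed neighbors, which costs no extra rounds for the vertices of $H_{i}$). For problems that label edges (edge-coloring, maximal matching), where the crossing edges themselves require output, the paper introduces a separate auxiliary algorithm $\mathcal{B}$, required to run in $O(f(a,n))$ worst-case time and to be executed only by the vertices of $H_{i}$, which handles the crossing edges using the $O(a)$ forest labels; your proposal does not address this case at all. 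Your final appeal to Corollary \ref{partitionBasedAlgAverageTimePerVertex} is the same as the paper's and would be fine if the per-iteration worst case were actually $O(f(a,n))$, but as constructed it is not.
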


\begin{proof}

We prove this theorem by presenting a general method to convert algorithm
$\mathcal{A}$ to a different algorithm $\mathcal{A}'$ as described.
Suppose that our algorithms are invoked on some input graph $G=(V,E)$.
The conversion method accepts as input algorithm $\mathcal{A}$ and,
if necessary, an additional algorithm $\mathcal{B}$. Algorithm $\mathcal{B}$
is a general notation for an algorithm that is used in the case of
problems, where a label needs to be assigned to each edge, such as
edge-coloring and maximal matching. We say that it is \emph{necessary
}to invoke such an algorithm $\mathcal{B}$ if and only if the problem
we are trying to solve is concerned with assigning a label to each
edge. If an algorithm is used as algorithm $\mathcal{B}$, we require
algorithm $\mathcal{B}$ to satisfy several properties. The motivation
for the use of an algorithm $\mathcal{B}$ is better clarified in
the following outline of the produced algorithm $\mathcal{A}'$.

Algorithm $\mathcal{A}'$ performs $\ell=O(\log n)$ iterations. In
each iteration $1\le i\le\ell$, it invokes algorithm $\mathcal{A}$
on $G\left(H_{i}\right)$. If it is not necessary to invoke algorithm
$\mathcal{B}$, then by definition of $\mathcal{A}$, the solution
computed so far for $G\left(\cup_{j=1}^{i}H_{j}\right)$ is proper.
Otherwise, for $i\ge2$, at this point, edges crossing from $H_{i}$
to $\cup_{j=1}^{i-1}H_{j}$ will remain unhandled. Therefore, we invoke
the input algorithm $\mathcal{B}$. We require algorithm $\mathcal{B}$
to satisfy the following properties. First, algorithm $\mathcal{B}$
needs to satisfy that at the end of its execution, the solution computed
so far for $P$, on $G\left(\cup_{j=1}^{i}H_{j}\right)$ is proper.
Second, its worst-case time complexity needs to be $O\left(f(a,n)\right)$,
to help obtain the improved vertex-averaged complexity. Lastly, to
help obtain the improved vertex-averaged complexity, algorithm $\mathcal{B}$
must be executed only by the vertices of $H_{i}$. Examples of the
usage of an algorithm $\mathcal{B}$ can be found in Section \ref{subsec:PoEApplications}.
Next, for each $1\le i\le\ell$, let $G\left(H_{i}\right)=\left(H_{i},E_{i}\right)$
where $E_{i}=\left\{ e=\left\{ u,v\right\} \in E|u,v\in H_{i}\right\} $.
We define the steps of the conversion method more formally as follows:
\begin{enumerate}
\item Execute Procedure Parallelized-Forest-Decomposition.
\item Within the invocation of Procedure Parallelized-Forest-Decomposition
in step 1, we perform the following steps in each iteration $i$ of
the procedure's main loop:
\begin{enumerate}
\item We compute an $H$-set $H_{i}$ and decompose all edges of $E_{i}$
into oriented forests. 
\item We invoke algorithm $\mathcal{A}$ on the subgraph $G(H_{i})$ of
$G$.
\item In addition, for $i\ge2$, if necessary, we invoke another algorithm
$\mathcal{B}$ as described above on the sub-graph induced by all
edges $\{u,v\}\in E$, such that $u\in\cup_{j=1}^{i-1}H_{j}$, $v\in H_{i}$.
\end{enumerate}
\item We note that algorithm $\mathcal{A}$ is invoked on $H_{i+1}$ only
after all of the following events have occurred:
\begin{enumerate}
\item Algorithm $\mathcal{A}$ has completed execution for $H_{i}$.
\item Algorithm $\mathcal{B}$ has completed execution for $H_{i}$.
\item The $H$-set $H_{i+1}$ has been formed by iteration $i+1$ of Procedure
Parallelized-Forest-Decomposition\emph{.}
\item All edges of $E_{i+1}$ have been decomposed into $O(a)$ oriented
forests.
\end{enumerate}
\end{enumerate}
First, we prove the correctness of this general method. Namely, we
prove that algorithm $\mathcal{A}'$ computes a proper solution for
$P$, on $G$.

In each iteration $i$ of the main loop of algorithm $\mathcal{A}'$
we invoke algorithm $\mathcal{A}$ on $G\left(H_{i}\right)$. Also,
according to the definition of $\mathcal{A}$, we do not change the
partial solution computed so far on the subgraph $G\left(\cup_{j=1}^{i-1}H_{j}\right)$.
If it is not necessary to execute an algorithm $\mathcal{B}$, the
solution computed so far for $G\left(\cup_{j=1}^{i}H_{j}\right)$
is proper, by the definition of algorithm $\mathcal{A}$. Otherwise,
for $i>1$, we invoke algorithm $\mathcal{B}$ on the subgraph induced
by the edges $\{u,v\}$, such that $u\in\cup_{j=1}^{i-1},v\in H_{i}$.
By the definition of algorithm $\mathcal{B}$, at the end of the execution
of algorithm $\mathcal{A}'$, the solution produced so far for $P$,
on $G\left(\cup_{j=1}^{i}H_{j}\right)$, is proper. It easily follows
that the solution produced at the end of the execution of algorithm
$\mathcal{A}'$ for $P$ on $G$ is proper, as required.

We now prove that algorithm $\mathcal{A}'$ has a vertex-averaged
complexity of $O\left(f(a,n)\right)$. As explained in Section \ref{subsec:Procedure-Partition-Description},
for a subgraph $G\left(H_{i}\right)$ of $G$ for some $H$-set $H_{i}$,
for any vertex $v\in H_{i}$ it holds that the degree of $v$ is $O(a)$.

Therefore, the execution of algorithm $\mathcal{A}$ in each iteration
of the main loop of algorithm $\mathcal{A}'$ has a worst-case running
time of $O\left(f(a,n)\right)$. Also, if we invoke some procedure
as algorithm $\mathcal{B}$, its worst-case time complexity is by
definition $O\left(f\left(a,n\right)\right)$. Therefore, it follows
from Corollary \ref{partitionBasedAlgAverageTimePerVertex} that the
vertex-averaged complexity of algorithm $\mathcal{A}'$ is $O\left(f(a,n)\right)$,
as required.

\end{proof}

\subsection{Applications to Specific Problems of Extension from any Partial Solution\label{subsec:PoEApplications}}

In this section we present several corollaries. Each of the corollaries
constitutes an application of Theorem \ref{algorithmForPoEMethodTheorem}.
Within the section, we use $\mathcal{A}$,$\mathcal{B}$ to denote
the algorithms that are also denoted as $\mathcal{A}$,$\mathcal{B}$
in Theorem \ref{algorithmForPoEMethodTheorem} and $\mathcal{A}'$
to denote the algorithm generated by the method of Theorem \ref{algorithmForPoEMethodTheorem}.

The first corollary we present obtains an improved vertex-averaged
complexity for $(\Delta+1)$-vertex-coloring. We obtain this corollary
as follows, for a given graph $G=(V,E)$. We denote by $\deg_{G}(v)$
the degree of a vertex $v\in V$ in the graph $G$. Also, we define
the problem \emph{$(\deg+1)$-list-coloring} as follows. Each vertex
$v\in V$ is assigned a list of colors $L(v)\subseteq\mathcal{C}$,
for some finite set $\mathcal{C}$, such that $\left|L(v)\right|\ge\deg_{G}(v)+1$.
Then, we need to assign a color $c_{v}\in L(v)$ to each vertex $v\in V$,
that is different than the color of each neighbor of $v$.

We also employ a result of \cite{Fraigniaud2016}. Specifically, Theorem
4.1 of \cite{Fraigniaud2016} implies that $(\deg+1)$-list-coloring
can be solved in $O(\sqrt{\Delta}\log^{2.5}\Delta)$ rounds in the
worst-case. Henceforth, we refer to this implied algorithm as Procedure
Deg-Plus1-List-Col.

We now devise our algorithm for computing a $(\Delta+1)$-vertex-coloring
in improved vertex-averaged complexity. We denote this algorithm as
algorithm $\mathcal{A}'$. We obtain algorithm $\mathcal{A}'$ by
using the following procedure as algorithm $\mathcal{A}$. Initially,
each vertex $v\in V$ is given a list of colors, as in the case of
$(\deg+1$)-list-coloring, with $\mathcal{C}=\{1,2,...,\Delta+1\}$.
In each iteration $i$ of algorithm $\mathcal{A}'$, we invoke Procedure
Deg-Plus1-List-Col on the subgraph $G(H_{i})$. Subsequently, each
vertex $v$ sends the color it assigned itself to each neighbor $u\in\left(V\setminus\cup_{j=1}^{i}H_{j}\right)$
and then $u$ removes the color of $v$ from $L(u)$. The next corollary
summarizes the properties of the devised algorithm.

\begin{corollary}\label{vertexColPoECorollary}

For a given graph $G=(V,E)$, the vertices of $V$ can be colored
using $\Delta+1$ colors with a vertex-averaged complexity of $O(\sqrt{a}\log^{2.5}a+\log^{*}n)$
rounds.

\end{corollary}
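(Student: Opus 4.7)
The approach is to apply Theorem~\ref{algorithmForPoEMethodTheorem} with Procedure Deg-Plus1-List-Col playing the role of algorithm $\mathcal{A}$. Note that $(\Delta+1)$-vertex-coloring is a problem of extension from any partial solution, and because only vertices (not edges) receive labels, no auxiliary algorithm $\mathcal{B}$ is needed in this construction; once the vertices in $\bigcup_{j \le i} H_j$ are properly colored there is no pending obligation on edges between $H_i$ and earlier $H$-sets.

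The correctness crux is to verify that whenever Procedure Deg-Plus1-List-Col is invoked on $G(H_i)$, each vertex $v \in H_i$ satisfies the list-coloring precondition $|L(v)| \ge \deg_{G(H_i)}(v) + 1$. I would partition the neighbors of $v$ in $G$ into three disjoint sets: $N^{<}(v) = N(v) \cap \bigcup_{j<i} H_j$, $N^{=}(v) = N(v) \cap H_i$, and $N^{>}(v) = N(v) \cap \bigcup_{j>i} H_j$. Only the vertices of $N^{<}(v)$ have already been assigned a final color, and by the removal rule each of them deletes at most one element from $L(v)$, so $|L(v)| \ge (\Delta + 1) - |N^{<}(v)|$. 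Since $|N^{<}(v)| + |N^{=}(v)| + |N^{>}(v)| = \deg_G(v) \le \Delta$, rearranging gives $|L(v)| \ge |N^{=}(v)| + |N^{>}(v)| + 1 \ge \deg_{G(H_i)}(v) + 1$, which is exactly what Procedure Deg-Plus1-List-Col requires. Hence the procedure properly extends the coloring to $H_i$ without altering any color previously assigned in $\bigcup_{j<i} H_j$, and by induction on $i$ the output is a proper $(\Delta+1)$-coloring of all of $G$.

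For the vertex-averaged bound, recall from Section~\ref{subsec:Procedure-Partition-Description} that every vertex of $H_i$ has at most $A = O(a)$ neighbors inside $\bigcup_{j \ge i} H_j$, so the maximum degree of $G(H_i)$ is $O(a)$. Consequently each invocation of Procedure Deg-Plus1-List-Col runs in worst-case time $T_{\mathcal{A}} = O(\sqrt{a}\log^{2.5}a + \log^{*} n)$, where the additive $\log^{*} n$ term accounts for the standard initial color-space reduction in the Fraigniaud--Heinrich--Kosowski algorithm. Plugging this $T_{\mathcal{A}}$ into the bound of Theorem~\ref{algorithmForPoEMethodTheorem} (equivalently, into Corollary~\ref{partitionBasedAlgAverageTimePerVertex}, since no algorithm $\mathcal{B}$ is used) yields the claimed vertex-averaged complexity of $O(\sqrt{a}\log^{2.5}a + \log^{*} n)$.

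The only delicate point is the list-size bookkeeping above; once that is in hand, the corollary is a direct instantiation of the general reduction of Theorem~\ref{algorithmForPoEMethodTheorem}. What makes the bookkeeping go through is precisely the observation that the neighbors in $N^{=}(v) \cup N^{>}(v)$ do not erase any colors from $L(v)$ before the invocation on $G(H_i)$, so the $|N^{=}(v)| + |N^{>}(v)|$ slack in the initial list of size $\Delta+1$ is preserved and is exactly enough to cover $\deg_{G(H_i)}(v)$.
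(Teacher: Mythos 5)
Your proposal is correct and follows essentially the same route as the paper: instantiate Theorem \ref{algorithmForPoEMethodTheorem} with Procedure Deg-Plus1-List-Col as algorithm $\mathcal{A}$ (no $\mathcal{B}$ needed for a vertex-labeling problem), verify that the removal of already-used colors by earlier-colored neighbors leaves $|L(v)|\ge\deg_{G(H_i)}(v)+1$, and bound $T_{\mathcal{A}}$ by $O(\sqrt{a}\log^{2.5}a+\log^{*}n)$ using the fact that $\Delta(G(H_i))=O(a)$. Your list-size bookkeeping via the partition of $N(v)$ into $N^{<},N^{=},N^{>}$ is just a slightly more explicit rendering of the paper's observation that at most $\deg_{G}(v)-\deg_{G(H_i)}(v)$ colors are ever removed from $L(v)$ before iteration $i$.
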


\begin{proof}

First, we prove that the algorithm $\mathcal{A}$ properly colors
the vertices of $H_{i}$. For each edge $\{u,v\}\in E$, where $u,v\in H_{i}$,
the vertices $u,v$ necessarily have different colors, by the correctness
of Procedure Deg-Plus1-List-Col, according to \cite{Fraigniaud2016}.

Suppose now that $i\ge2$. Let us look at an edge $e=\{u,v\}\in E$,
where $u\in H_{j},v\in H_{i}$, for $j<i$. At the end of iteration
$j$ of algorithm $\mathcal{A}'$, $u$ sent its color to $v$, and
subsequently, $v$ removed the color from $L(v)$. At the beginning
of each iteration $i$ of algorithm $\mathcal{A}'$, for each vertex
$v\in V$, at most $\deg_{G}(v)-\deg_{G(H_{i})}(v)$ colors have been
removed from $L(v)$. Therefore, in iteration $i$ of algorithm $\mathcal{A}'$,
it holds that $\left|L(v)\right|\ge\deg_{G(H_{i})}(v)+1$. Therefore,
an available color exists for coloring $v$ and it is different than
that of $u$. Therefore, algorithm $\mathcal{A}'$ produces a proper
coloring, as required.

As for the number of colors used by algorithm $\mathcal{A}'$, all
colors are taken from the set $\{1,2,...,\Delta+1\}$, so the total
number of colors used is $\Delta+1$, as required.

We now analyze the worst-case time complexity of algorithm $\mathcal{A}$.
The worst-case time complexity of algorithm $\mathcal{A}$ is asymptotically
at most the worst-case time complexity of the execution of Procedure
Deg-Plus1-List-Col, when executed on a subgraph induced by an $H$-set
$H_{i}$. Therefore, based on the discussion preceding this corollary,
the worst-case time complexity of algorithm $\mathcal{A}$ is $O(\sqrt{a}\log^{2.5}a+\log^{*}n)$.

Therefore, by Theorem \ref{algorithmForPoEMethodTheorem}, the vertex-averaged
complexity of algorithm $\mathcal{A}'$ is $O(\sqrt{a}\log^{2.5}a+\log^{*}n)$,
as required.

\end{proof}

We now present an algorithm that obtains improved vertex-averaged
complexity for MIS. We denote by $MIS(H)$ an MIS for a certain graph
$H$. In addition, for $1\le i\le\ell$ we denote by $MIS_{i}$ a
set of vertices $v\in H_{i}$ satisfying the following properties.
The first property is that for each vertex $u\in\left(H_{i}\setminus MIS_{i}\right)$,
$u$ has some neighbor $v\in\cup_{j=1}^{i}MIS_{j}$. The second property
is that for each vertex $v\in MIS_{i}$, $v$ has no neighbors in
$\cup_{j=1}^{i}MIS_{j}$. From these properties, it follows that $\cup_{j=1}^{i}MIS_{j}$
is an MIS for $G\left(\cup_{j=1}^{i}H{}_{j}\right)$. In particular,
for $i=1$, $MIS_{1}$ is simply an MIS for $G(H_{1})$.

We achieve the result stated by the corollary by using the following
algorithm as algorithm $\mathcal{A}$, in each iteration $i$ of algorithm
$\mathcal{A}'$. The algorithm consists of two steps. Initially, $MIS(G)=\emptyset$. 

In the first step, we compute a proper $\Delta\left(G(H_{i})\right)+1$
coloring of the vertices of $H_{i}$ using Procedure Deg-Plus1-List-Col
where each vertex $v\in V$ is given an initial list of colors $L(v)=\{1,2,...,\deg_{G(H_{i})}(v)+1\}$.
In the second step, we compute $MIS_{i}$ as follows. We execute a
loop for $j=1,2,...,\Delta\left(G(H_{i})\right)+1$. In each iteration
$j$ of this loop, each vertex $v\in H_{i}$ with color $j$ that
has no neighbor in $\cup_{l=1}^{i}MIS_{l}$ adds itself to $MIS_{i}$.
The devised algorithm corresponds to the following corollary.

\begin{corollary}\label{MISPoECorollary}

For a graph $G=(V,E)$, an MIS can be computed for $G$ with a vertex-averaged
complexity of $O(a+\log^{*}n)$ rounds.

\end{corollary}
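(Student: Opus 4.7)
The plan is to apply Theorem \ref{algorithmForPoEMethodTheorem} with $P$ taken to be the MIS problem and with the two-step procedure described immediately before the corollary used as algorithm $\mathcal{A}$. First I would observe that MIS is a problem of extension from any partial solution: any independent set $S'$ on a subgraph $H' \subseteq G$ can be extended to an MIS of $G$ by iteratively picking any vertex of $V(G) \setminus V(H')$ whose closed neighborhood in $S'$ is empty and adding it to $S'$. Because MIS is a vertex-labeling problem (no labels are attached to edges), no auxiliary algorithm $\mathcal{B}$ is needed in the application of Theorem \ref{algorithmForPoEMethodTheorem}; the conclusion of that theorem then simply says that the vertex-averaged complexity of $\mathcal{A}'$ matches, up to constants, the worst-case complexity $T_{\mathcal{A}}$ of $\mathcal{A}$ on a single $H$-set.

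Next I would verify that algorithm $\mathcal{A}$ correctly produces $MIS_i$ when invoked on $G(H_i)$, assuming the membership status of the vertices of $\cup_{j<i} MIS_j$ is already known to the vertices of $H_i$. The first step of $\mathcal{A}$ computes a proper $(\Delta(G(H_i))+1)$-coloring of $H_i$ via Procedure Deg-Plus1-List-Col of \cite{Fraigniaud2016}, whose correctness is given in that reference. The second step sweeps through the $\Delta(G(H_i))+1$ color classes sequentially; at iteration $j$, every color-$j$ vertex all of whose neighbors in $\cup_{l \le i} MIS_l$ are absent joins $MIS_i$ simultaneously. Simultaneous additions within a single color class preserve independence because same-colored vertices are pairwise non-adjacent, and the sequential handling of color classes, combined with the fact that vertices admitted in earlier iterations broadcast their new status to their neighbors, ensures that at the moment each vertex is considered, the guard condition is evaluated against the correct current state of $\cup_{l \le i} MIS_l$. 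After all color classes have been processed, every vertex of $H_i$ is either in $MIS_i$ or has a neighbor in $\cup_{l \le i} MIS_l$, so both defining properties of $MIS_i$ hold, and $\cup_{l \le i} MIS_l$ is an MIS of $G(\cup_{l \le i} H_l)$, as required for an extension algorithm.

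For the complexity analysis, Section \ref{subsec:Procedure-Partition-Description} gives $\Delta(G(H_i)) \le A = (2+\epsilon)\cdot a = O(a)$. Therefore Procedure Deg-Plus1-List-Col finishes in $O(\sqrt{a}\log^{2.5}a + \log^* n)$ rounds in the worst case on $G(H_i)$, and the second-step sweep through the $O(a)$ color classes contributes another $O(a)$ rounds (one round per class). Since $\sqrt{a}\log^{2.5}a = O(a)$, the worst-case complexity of $\mathcal{A}$ is $f(a,n) = O(a + \log^* n)$. Invoking Theorem \ref{algorithmForPoEMethodTheorem} with this $f$ then yields an algorithm $\mathcal{A}'$ computing an MIS of $G$ with vertex-averaged complexity $O(a + \log^* n)$, which is exactly the claim of the corollary.

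The main subtlety I anticipate is the verification that a vertex of $H_i$ indeed has, at the start of its invocation of $\mathcal{A}$, up-to-date knowledge of the MIS-membership of each of its neighbors located in earlier $H$-sets. Because $H$-sets are formed sequentially by Procedure Parallelized-Forest-Decomposition and every terminating vertex broadcasts its final state to its neighbors before leaving the computation (the model convention explained in Section \ref{subsec:relatedResults}), this information is available without additional rounds, and no extra communication must be charged to implement the greedy check inside the second step of $\mathcal{A}$.
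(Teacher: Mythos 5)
Your proposal is correct and follows essentially the same route as the paper: both verify the two defining properties of $MIS_i$ for the color-class sweep (the paper does this by induction with contradiction arguments, you argue it directly, but the substance is identical), both bound the worst-case cost of $\mathcal{A}$ on an $H$-set by $O(\sqrt{a}\log^{2.5}a + a + \log^{*}n) = O(a+\log^{*}n)$ using $\Delta(G(H_i))=O(a)$, and both conclude via Theorem \ref{algorithmForPoEMethodTheorem}.
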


\begin{proof}

We first prove the correctness of algorithm $\mathcal{A}'$. This
is done by first proving that in each iteration $i$ of $\mathcal{A}'$,
after computing $MIS_{i}$, it holds that $\cup_{j=1}^{i}MIS_{j}$
is an MIS for $G\left(\cup_{j=1}^{i}H_{j}\right)$. We prove this
by induction on the value of the current iteration $i$ of $\mathcal{A}'$.

For the base case $i=1$, in the second step of the invocation of
algorithm $\mathcal{A}$, we are in fact executing the reduction from
MIS to $(\Delta+1)$-vertex-coloring, described in Section 3.2 of
\cite{Barenboim2013}, for the subgraph $G(H_{i})$ of $G$ induced
by $H_{i}$. Therefore, by the correctness of this reduction, $MIS_{i}$
is in this case, an MIS for $G(\cup_{j=1}^{1}H_{j})=G(H_{1})$, as
required.

Suppose now, that for $i\le k<\ell$, $\cup_{j=1}^{i}MIS_{j}$ is
an MIS for $G\left(\cup_{j=1}^{i}H_{j}\right)$. We now prove that
$\cup_{j=1}^{k+1}MIS_{j}$ is an MIS for $G\left(\cup_{j=1}^{k+1}H_{j}\right)$.
To this end, we prove that at the end of iteration $i=k+1$ of algorithm
$\mathcal{A}'$, the set $MIS_{k+1}$ satisfies the two properties
mentioned in the beginning of the proof, that a set $MIS_{i}$ needs
to satisfy.

As for the first property, let us suppose for contradiction that there
exists a vertex $u\in\left(H_{i}\setminus MIS_{i}\right)$, such that
$u$ has no neighbor in $\cup_{l=1}^{i}MIS_{l}$. However, this means
that in the appropriate iteration $j$ of the loop in the second step
of algorithm $\mathcal{A}$, $u$ would have checked and found that
it had no neighbor in $\cup_{l=1}^{i}MIS_{l}$. Subsequently, the
vertex $u$ would have added itself to $MIS_{i}$, contradicting that
$u\in\left(H_{i}\setminus MIS_{i}\right)$.

As for the second property, let us assume for contradiction that there
exists a vertex $v\in MIS_{i}$ that has some neighbor $u\in MIS_{l}$
for some $l\le i$. Also, suppose that during the execution of the
first step of algorithm $\mathcal{A}$, $u$ was colored with some
color $j_{1}$, while $v$ was colored with some color $j_{2}$. Without
loss of generality, suppose that $j_{2}>j_{1}$. Therefore, during
the execution of the loop of the second step of algorithm $\mathcal{A}$,
the vertex $v$ would have found out in iteration $j_{2}$ that $u\in MIS_{l}$
and would have not joined $MIS_{i}$, leading to a contradiction.

Therefore, the two above-mentioned properties, that a set $MIS_{i}$
needs to satisfy, hold for $i=k+1$. Therefore, they hold for any
$1\le i\le\ell$. In particular, they hold for $i=\ell$. That is,
$\cup_{j=1}^{\ell}MIS_{j}$ is a set of vertices, such that every
vertex in $\cup_{j=1}^{\ell}H_{j}\setminus\cup_{j=1}^{\ell}MIS_{j}$
has some neighbor in $\cup_{j=1}^{\ell}MIS_{j}$, and $\cup_{j=1}^{\ell}MIS_{j}$
is independent. Therefore, $\cup_{j=1}^{\ell}MIS_{j}$ is an MIS for
$G$, as required.

We now analyze the worst-case time complexity of algorithm $\mathcal{A}$.
Using a similar analysis to that carried out in the proof of Corollary
\ref{vertexColPoECorollary}, the worst-case time complexity of the
first step of algorithm $\mathcal{A}$ is $O(\sqrt{a}\log^{2.5}a+\log^{*}n)$.

We now analyze the worst-case time complexity of the second step of
algorithm $\mathcal{A}$. In the second step, we execute a loop consisting
of a number of iterations equal to the number of colors used to color
the vertices of $H_{i}$ in the first step of algorithm $\mathcal{A}$.
Also, each iteration of the loop requires constant time. It can be
easily verified that the number of colors used is at most $\Delta(G(H_{i}))+1$.
According to Section \ref{subsec:Procedure-Partition-Description},
it holds that $\Delta(G(H_{i}))=O(a)$. Therefore, the worst-case
time complexity of algorithm $\mathcal{A}$ is $O(\sqrt{a}\log^{2.5}a+a+\log^{*}n)$.\\
Therefore, by Theorem \ref{algorithmForPoEMethodTheorem}, the vertex-averaged
complexity of algorithm $\mathcal{A}'$ is $O(a+\log^{*}n)$.

\end{proof}

Another corollary follows from Corollary \ref{MISPoECorollary} as
follows.

\begin{corollary}\label{O(1)ArbMISPoECorollary}

For a graph $G=(V,E)$ with a constant arboricity $a$, an MIS can
be computed with a vertex-averaged complexity of $O(\log^{*}n)$ rounds.

\end{corollary}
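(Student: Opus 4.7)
The plan is to derive this corollary as an immediate specialization of Corollary \ref{MISPoECorollary}. That corollary established that for any input graph $G=(V,E)$, the algorithm $\mathcal{A}'$ constructed via the general method of Theorem \ref{algorithmForPoEMethodTheorem} (using Procedure Deg-Plus1-List-Col to color each $H$-set, followed by the sequential MIS-from-coloring reduction) computes a proper MIS of $G$ with vertex-averaged complexity $O\left(\sqrt{a}\log^{2.5}a + a + \log^{*} n\right) = O(a+\log^{*}n)$.

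The only remaining step is to substitute the hypothesis $a=O(1)$ into this bound. Under this hypothesis, the additive term $a$ as well as the term $\sqrt{a}\log^{2.5}a$ collapse to a constant, so the vertex-averaged complexity becomes $O(1+\log^{*}n) = O(\log^{*}n)$. No new algorithmic ingredient or analytic argument is required: the algorithm is exactly the one from Corollary \ref{MISPoECorollary}, and its correctness (the induction showing that $\cup_{j=1}^{\ell} MIS_j$ is an MIS of $G$) is unaffected by the value of $a$.

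There is no substantial obstacle, since the result is a direct corollary; the only thing worth verifying carefully is that every $a$-dependent term in the complexity of Corollary \ref{MISPoECorollary} is indeed absorbed into the $O(\log^{*}n)$ bound when $a=O(1)$, which it clearly is. Thus the proof will consist of one or two sentences invoking Corollary \ref{MISPoECorollary} and specializing the bound.
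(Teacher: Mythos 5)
Your proposal is correct and matches the paper exactly: the paper derives Corollary \ref{O(1)ArbMISPoECorollary} as an immediate specialization of Corollary \ref{MISPoECorollary}, substituting $a=O(1)$ into the $O(a+\log^{*}n)$ vertex-averaged bound to obtain $O(\log^{*}n)$. No further argument is given or needed.
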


We now present an algorithm which obtains improved vertex-averaged
complexity for $\left(2\Delta-1\right)$-edge-coloring. We obtain
the result by using the following algorithms as algorithms $\mathcal{A}$,$\mathcal{B}$
in each iteration $i$ of algorithm $\mathcal{A}'$. The algorithm
we use as algorithm $\mathcal{A}$ is as follows. We invoke a deterministic
$\left(2\Delta-1\right)$-edge-coloring algorithm from \cite{Panconesi2001}
on $G(H_{i})$. 

We now describe the algorithm used as algorithm $\mathcal{B}$. First,
we note that in the beginning of each iteration $i$, algorithm $\mathcal{A}'$
oriented all edges $e=\{u,v\}\in E$, such that $u\in H_{i},v\in V\setminus\cup_{k=1}^{i-1}H_{k}$.
Also, each vertex $v\in H_{i}$ labeled each of the outgoing edges
incident on it with a different label from $\{1,2,...,A\}$. Next,
we describe the operations we perform in algorithm $\mathcal{B}$. 

We execute a loop for $j=1,2,...,A$. We denote by $G_{j}(v)$ the
star subgraph of $G$ induced by all edges $e=\{u,v\}$, such that
$u\in\cup_{k=1}^{i-1}H_{j},v\in H_{i}$ and $e$ is labeled $j$.
In each iteration $j$ of this loop, each vertex $v\in H_{i}$ sequentially,
within $O(1)$ communication rounds, assigns a color to each edge
of $G_{j}(v)$ that has not yet been assigned to any edge incident
on it in $G$, from the palette $\{1,2,...,2\Delta-1\}$. Subsequently,
$v$ sends the colors assigned to edges incident on it, that have
already been colored, to all its neighbors in $V\setminus\cup_{k=1}^{i}H_{k}$.
Then, each neighbor $w\in V\setminus\cup_{k=1}^{i}H_{k}$ of $v$
marks the colors sent to it by $v$ as colors that can't be assigned
to edges that haven't yet been colored. The devised algorithm corresponds
to the following corollary.

\begin{corollary}\label{edgeColPoECorollary}

For an input graph $G=(V,E)$, it is possible to deterministically
compute a $(2\Delta-1)$-edge-coloring with a vertex-averaged complexity
of $O(a+\log^{*}n)$ rounds.

\end{corollary}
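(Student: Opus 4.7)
The plan is to apply Theorem \ref{algorithmForPoEMethodTheorem}. I would first observe that $(2\Delta-1)$-edge-coloring is a problem of extension from any partial solution: given any partial proper $(2\Delta-1)$-edge-coloring, an uncolored edge $e=\{u,v\}$ has at most $\deg(u)+\deg(v)-2\le 2\Delta-2$ forbidden colors in the palette $\{1,\ldots,2\Delta-1\}$, so a legal color always exists and the partial solution can be extended. The rest of the proof consists of verifying that the specific $\mathcal{A}$ and $\mathcal{B}$ chosen above meet the conditions of the theorem, with per-iteration worst-case cost $O(a+\log^{*}n)$, and is executed only by vertices of $H_i$.

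For $\mathcal{A}$ I would invoke the deterministic $(2\Delta-1)$-edge-coloring of \cite{Panconesi2001}, whose worst-case running time is $O(\Delta+\log^{*}n)$. Since Procedure Partition guarantees $\Delta(G(H_i))=O(a)$, applied to $G(H_i)$ this runs in $O(a+\log^{*}n)$ rounds, performed only by vertices of $H_i$. At the end of $\mathcal{A}$, all edges inside $G(H_i)$ are properly colored, and any cross-edges from $H_i$ to $\bigcup_{k<i}H_k$ remain uncolored.

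The core of the proof is the analysis of $\mathcal{B}$, which colors these cross-edges. For correctness, I would use the key structural fact that each vertex $u\in\bigcup_{k<i}H_k$ labeled its outgoing edges with pairwise distinct labels from $\{1,\ldots,A\}$ when $u$'s $H$-set was processed; hence in iteration $j$ of the inner loop at most one edge incident on any such $u$ is being colored. Consequently, the edges being colored concurrently across the different centers $v\in H_i$ share no vertex in $\bigcup_{k<i}H_k$, and since $\mathcal{B}$ touches only cross-edges (edges inside $H_i$ are already colored by $\mathcal{A}$), parallel coloring by distinct centers cannot produce a conflict. Within a single star $G_j(v)$, the vertex $v$ colors its edges sequentially in local computation using the colors of already-colored incident edges around each $u_\ell$ and around $v$, which $v$ has received from prior broadcasts; when picking a color for $\{u_\ell,v\}$, the number of forbidden colors is at most $\deg(u_\ell)+\deg(v)-2\le 2\Delta-2$, so a free color in $\{1,\ldots,2\Delta-1\}$ always exists. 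The subsequent broadcast of the newly chosen colors to neighbors in $V\setminus\bigcup_{k\le i}H_k$ takes one round, so each inner iteration costs $O(1)$ communication rounds, and the loop over $j=1,\ldots,A=O(a)$ yields a worst-case bound of $O(a)$, again performed only by vertices of $H_i$.

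Combining the two bounds gives a per-iteration worst-case complexity of $f(a,n)=O(a+\log^{*}n)$, so Theorem \ref{algorithmForPoEMethodTheorem} yields vertex-averaged complexity $O(a+\log^{*}n)$ for the overall algorithm. The step I expect to require the most care is verifying that the parallel execution of the inner loop of $\mathcal{B}$ across different centers $v\in H_i$ does not produce color conflicts; this is precisely where the uniqueness of labels on each $u$'s outgoing edges is essential, since it guarantees that in each iteration $j$ the set of concurrently colored edges forms a collection of vertex-disjoint stars rooted in $H_i$.
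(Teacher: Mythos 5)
Your proposal is correct and follows essentially the same route as the paper's proof: Panconesi--Rizzi as algorithm $\mathcal{A}$ on each $G(H_i)$ (with $\Delta(G(H_i))=O(a)$), and the $A$-iteration loop over labels as algorithm $\mathcal{B}$, where the distinctness of each vertex's outgoing-edge labels is exactly the fact the paper uses to rule out conflicts between edges sharing an endpoint in an earlier $H$-set, and the count of at most $2\Delta-2$ forbidden colors guarantees availability. The complexity accounting ($O(a+\log^{*}n)$ for $\mathcal{A}$, $O(a)$ for $\mathcal{B}$, then Theorem \ref{algorithmForPoEMethodTheorem}) also matches the paper.
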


\begin{proof}

We first prove the correctness of the devised algorithm. Suppose that
we have invoked the devised algorithm $\mathcal{A}'$ on some graph
$G=(V,E)$. For some $1\le i\le\ell$, let us look at some $H$-set
$H_{i}$. For two edges $\{u,v\}$,$\{u,w\}$, such that $u,v,w\in H_{i}$,
by the correctness of the $\left(2\Delta-1\right)$-edge-coloring
algorithm of \cite{Panconesi2001}, $\{u,v\},\{u,w\}$ have different
colors.

Suppose now that $i\ge2$. For two edges $e=\{u,v\},e'=\{u,w\}$,
such that $u,v\in H_{i},w\in H_{k}$ for $k<i$, the vertex $u$ colored
$e'$ after $e$. Therefore, when $u$ colored $e'$, it made sure
to use an available color from the palette $\{1,2,...,2\Delta-1\}$,
that is different than that of $e$. Such a color necessarily exists
since each edge in $E$ can intersect at most $2\Delta-2$ other edges.
Therefore, $e,e'$ have different colors.

Let us look now at two edges $e=\{u,v\},e'=\{w,x\}$, such that $u,w\in H_{k}$,
for some $k<i$ and $v,x\in H_{i}$, and $e,e'$ have a common endpoint.
Suppose for contradiction that $e,e'$ have the same color. If $u=w$,
then $e,e'$ were assigned different labels in the step of $O(a)$-forests-decomposition
that algorithm $\mathcal{A}'$ carried out in the beginning of its
iteration $k$. Therefore, for some $j_{1}\ne j_{2},1\le j_{1},j_{2}\le A$,
in the loop of algorithm $\mathcal{B}$, executed in iteration $i$
of algorithm $\mathcal{A}'$, the edges $e,e'$ were colored in iterations
$j_{1},j_{2}$, respectively. Suppose without loss of generality that
$j_{2}>j_{1}$. Then the edge $e'$ was colored after $e$. Therefore,
$x$ assigned an available color to $e'$ from the palette $\{1,2,...,2\Delta-1\}$,
different than that of $e$. Therefore, $e,e'$ have actually been
assigned two different colors, leading to a contradiction.

Suppose now that $v=x$. If $e,e'$ have the same label $j$, then
$v$ assigned each edge $e,e'$ a different available color from the
palette $\{1,2,...,2\Delta-1\}$ in iteration $j$ of the loop of
algorithm $\mathcal{B}$. Suppose that the edges $e,e'$ have different
labels $j_{1},j_{2}$, respectively, and that without loss of generality
it holds that $j_{2}>j_{1}$. Then, $e'$ received a different color
from the palette $\{1,2,...,2\Delta-1\}$ than $e$ in iteration $j_{2}$
of the loop of algorithm $\mathcal{B}$, leading to a contradiction.
Therefore, for any two edges $e,e'\in E$, $e,e'$ received different
colors, completing the proof of correctness.

As for the number of colors used by the algorithm, each edge $e\in E$
is assigned some color from $\{1,2,...,2\Delta-1\}$. Therefore, the
number of colors employed by algorithm $\mathcal{A}'$ is at most
$2\Delta-1$.

We now analyze the vertex-averaged complexity of algorithm $\mathcal{A}'$.
The worst-case time complexity of algorithm $\mathcal{A}$ is asymptotically
equal to that of the $\left(2\Delta-1\right)$-edge-coloring algorithm
of \cite{Panconesi2001}, when executed on a sub-graph of $G$ induced
by an $H$-set $H_{i}$. This time complexity is $O\left(a+\log^{*}n\right)$.
In algorithm $\mathcal{B}$, we execute a loop consisting of $O(a)$
iterations, where each iteration takes constant time.

Therefore, by Theorem \ref{algorithmForPoEMethodTheorem}, the vertex-averaged
complexity of algorithm $\mathcal{A}'$ is $O(a+\log^{*}n)$, as required.

\end{proof}

Corollary \ref{edgeColPoECorollary} implies the following corollary
for graphs with a constant arboricity. 

\begin{corollary}\label{O(1)ArbEdgeColPoECorollary}

For an input graph $G=(V,E)$ with arboricity $a=O(1)$, one can deterministically
compute a $\left(2\Delta-1\right)$-edge-coloring of the edges of
$E$ with a vertex-averaged complexity of $O(\log^{*}n)$.

\end{corollary}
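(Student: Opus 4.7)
The plan is to obtain this corollary as an immediate specialization of Corollary \ref{edgeColPoECorollary}. That corollary establishes, for an arbitrary input graph of arboricity $a$, a deterministic $(2\Delta-1)$-edge-coloring with vertex-averaged complexity $O(a + \log^{*} n)$. Substituting the hypothesis $a = O(1)$ collapses the additive $a$ term into a constant, which is absorbed into the $\log^{*} n$ term (since $\log^{*} n \ge 1$ for all $n \ge 2$), yielding the claimed $O(\log^{*} n)$ bound.

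First I would verify that the asymptotic simplification is legitimate by tracing the two sources of the $a$-term in the proof of Corollary \ref{edgeColPoECorollary}. The first source is the worst-case running time of the $(2\Delta-1)$-edge-coloring algorithm of \cite{Panconesi2001}, invoked as algorithm $\mathcal{A}$ on each subgraph $G(H_i)$; since $\Delta(G(H_i)) = O(a)$ by the construction in Section \ref{subsec:Procedure-Partition-Description}, this invocation costs $O(a + \log^{*} n)$ per $H$-set, which becomes $O(\log^{*} n)$ when $a = O(1)$. The second source is the $A$-iteration loop inside algorithm $\mathcal{B}$ that sequentially colors cross-$H$-set edges by label class; here $A = (2+\epsilon)a = O(a)$, so the loop takes $O(a)$ rounds per $H$-set, which likewise becomes $O(1)$ under the hypothesis. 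Both contributions have constants independent of $n$, so the substitution is unambiguous.

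Having confirmed that each per-$H$-set cost is $O(\log^{*} n)$, I would then invoke Corollary \ref{partitionBasedAlgAverageTimePerVertex} (equivalently, reapply Theorem \ref{algorithmForPoEMethodTheorem} with $f(\Delta,n) = O(\Delta + \log^{*} n)$) to conclude that the overall vertex-averaged complexity of algorithm $\mathcal{A}'$ on an input graph of constant arboricity is $O(\log^{*} n)$. No new analytical ingredient is required, and I anticipate no serious obstacle: the entire content of the corollary is the observation that $a + \log^{*} n = \Theta(\log^{*} n)$ when $a$ is constant, so the proof reduces to citing Corollary \ref{edgeColPoECorollary} and performing this one-line algebraic simplification.
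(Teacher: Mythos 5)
Your proposal is correct and matches the paper's approach exactly: the paper presents this corollary as an immediate consequence of Corollary \ref{edgeColPoECorollary}, obtained by substituting $a=O(1)$ into the $O(a+\log^{*}n)$ bound. Your additional tracing of the two sources of the $a$-term is sound but not required, since the specialization is a one-line asymptotic simplification.
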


We now devise an algorithm for MM with improved vertex-averaged complexity.
We use the following algorithms as algorithms $\mathcal{A},\mathcal{B}$
in each iteration $i$ of $\mathcal{A}'$. We note that whenever an
edge $e=\{u,v\}$ is added to the matching, $u,v$ become inactive
and stop taking part in algorithm $\mathcal{A}'$.

In algorithm $\mathcal{A}$ we compute an MM for $G(H_{i})$ using
an algorithm from \cite{Panconesi2001}. While doing so, for $i\ge2$,
we make sure not to add edges that are incident on other edges, that
have already been added in iterations $k<i$. We execute algorithm
$\mathcal{B}$ if $i\ge2$. In algorithm $\mathcal{B}$, we handle
edges $e=\{u,v\}$, such that $u\in H_{k}$ for some $k<i$ and $v\in H_{i}$.
In algorithm $\mathcal{B}$, we execute a loop for $j=1,2,...,A$.
In each iteration $j$ of this loop, each vertex $v\in H_{i}$ selects
a single edge, from the described edges $e$ labeled $j$, that isn't
incident on another edge $e'$ belonging to the matching, if such
an edge exists, and adds it to the matching. The devised algorithm
corresponds to the following corollary.

\begin{corollary}\label{mmPOECorollary}

For an input graph $G=(V,E)$, a maximal matching can be computed
for the edges of $G$ with a vertex-averaged complexity of $O(a+\log^{*}n)$
rounds.

\end{corollary}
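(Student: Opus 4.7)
The plan is to verify the hypotheses of Theorem \ref{algorithmForPoEMethodTheorem} for the maximal-matching problem, using the specified algorithms $\mathcal{A}$ and $\mathcal{B}$. First I would observe that maximal matching is indeed a problem of extension from any partial solution: given a partial matching $M'\subseteq E$ on any subgraph, one can always extend $M'$ by greedily adding any edge of $E\setminus M'$ that does not share an endpoint with an edge of $M'$, without modifying $M'$. This places the problem in the scope of the method of Theorem \ref{algorithmForPoEMethodTheorem}.

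Next I would prove that at the end of iteration $i$ of $\mathcal{A}'$, the set of edges selected so far is a matching in $G\!\left(\cup_{j=1}^{i}H_j\right)$. The argument proceeds by induction on $i$. The base case $i=1$ follows from the correctness of the MM algorithm of \cite{Panconesi2001} applied to $G(H_1)$. For the inductive step, edges internal to $H_i$ chosen by $\mathcal{A}$ satisfy the matching property among themselves, and they are constrained by the requirement that no endpoint already matched in earlier iterations is reused, so they do not conflict with the existing matching. The cross-edges added by $\mathcal{B}$ need a bit more care: each such edge $e=\{u,v\}$ with $u\in H_k, k<i$, $v\in H_i$ was labeled in $\{1,\dots,A\}$ by the parallelized forest decomposition so that the $A$ outgoing edges at $v$ have distinct labels. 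In iteration $j$ of the inner loop of $\mathcal{B}$, each $v\in H_i$ that is still active picks at most one edge labeled $j$, and only if that edge is not incident to the current matching; since each label corresponds to at most one such edge per vertex $v$, no two edges chosen by $v$ across the $A$ iterations can share the endpoint $v$, and the ``not incident to the current matching'' check prevents conflicts at the $H_k$-endpoint. Hence, the matching property is preserved.

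For maximality, I would argue that every edge $e=\{u,v\}\in E$ is considered at some point. If $u,v\in H_i$ for the same $i$, then $e$ is handled by the internal MM algorithm of $\mathcal{A}$ in iteration $i$. Otherwise, if $u\in H_k$ and $v\in H_i$ with $k<i$, then $e$ is labeled in the forest decomposition and is considered in the corresponding iteration of the inner loop of $\mathcal{B}$ during iteration $i$ of $\mathcal{A}'$. In either case, if neither endpoint is already matched when $e$ is processed, then (by construction of $\mathcal{A}$ or $\mathcal{B}$) an edge incident to at least one of these endpoints is added to the matching. Consequently, at the termination of $\mathcal{A}'$, no edge of $E$ can be added without creating a conflict, so the produced matching is maximal.

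Finally I would bound the worst-case running times of $\mathcal{A}$ and $\mathcal{B}$. Since $\Delta(G(H_i))=O(a)$ by Section \ref{subsec:Procedure-Partition-Description}, the MM algorithm of \cite{Panconesi2001} executed on $G(H_i)$ runs in $O(a+\log^* n)$ rounds, so $T_{\mathcal{A}}=O(a+\log^* n)$. The inner loop of $\mathcal{B}$ performs $A=O(a)$ iterations, each consisting of $O(1)$ communication rounds, giving $T_{\mathcal{B}}=O(a)$. Both bounds are of the form $f(a,n)=O(a+\log^* n)$. Applying Theorem \ref{algorithmForPoEMethodTheorem} then yields the claimed vertex-averaged complexity of $O(a+\log^* n)$. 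I expect the most delicate step to be the cross-edge conflict analysis in $\mathcal{B}$, where one must carefully exploit both the uniqueness of labels at outgoing edges and the sequential ``not already incident to the matching'' check to rule out all possible conflicts between edges selected in different label-iterations or by different vertices in $H_i$.
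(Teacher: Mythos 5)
Your proposal follows essentially the same route as the paper: induction over the iterations of $\mathcal{A}'$ for the matching property and maximality, the label-indexed inner loop of $\mathcal{B}$ for cross-edges, the bounds $T_{\mathcal{A}}=O(a+\log^{*}n)$ and $T_{\mathcal{B}}=O(a)$, and a final appeal to Theorem \ref{algorithmForPoEMethodTheorem}. One small correction: the forest decomposition assigns distinct labels to the \emph{outgoing} edges of the endpoint in the earlier $H$-set (edges are oriented toward the higher-indexed $H$-set), so edges incident on $v\in H_{i}$ may share a label; your conclusion still holds because $v$ selects at most one edge per label-iteration and the ``not incident to the current matching'' check blocks any further selection at an already-matched endpoint.
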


\begin{proof}

We first prove the correctness of algorithm $\mathcal{A}'$. We prove
this by induction on the number of the current iteration $i$ of algorithm
$\mathcal{A}'$. In the base case, we simply invoke the known algorithm
from \cite{Panconesi2001} on $G(H_{1})$. Therefore, by the correctness
of the MM algorithm of \cite{Panconesi2001}, algorithm $\mathcal{A}'$
correctly computes an MM for $G(H_{1})$.

Suppose now that algorithm $\mathcal{A}'$ correctly computes an MM
for $G(\cup_{l=1}^{k}H_{l})$ for some $1<k<\ell$. We now prove that
algorithm $\mathcal{A}'$ correctly computes an MM at the end of iteration
$i=k+1$ for $G(\cup_{l=1}^{k+1}H_{l})$. To prove this, we need to
prove that the matching computed so far, at the end of iteration $k+1$
of algorithm $\mathcal{A}'$, indeed contains no pair of intersecting
edges and that it is maximal for $G(\cup_{l=1}^{k+1}H_{l})$.

Let us look at some pair of intersecting edges $e=\{u,v\},e'=\{u,w\}$,
such that $e,e'\in E$, $u,v,w\in\cup_{l=1}^{k+1}H_{l}$. Following
a proof similar to that of the correctness of the algorithm devised
in Corollary \ref{edgeColPoECorollary}, we can show that if one of
the edges $e,e'$ belongs to the matching computed at the end of iteration
$i=k+1$, then the other does not. Therefore, the matching computed
indeed contains no pair of intersecting edges.

We now prove that the matching computed at the end of iteration $i=k+1$
of algorithm $\mathcal{A}'$ is maximal. Suppose for contradiction
that at the end of iteration $k+1$ of $\mathcal{A}'$ there exists
an edge $e=\{u,v\}\in E$, such that $u\in\cup_{l=1}^{k+1}H_{l},v\in H_{k+1}$,
which intersects no other edge $e'$ in the matching and does not
belong to the matching. If $u\in H_{k+1}$, then by the correctness
of the MM algorithm of \cite{Panconesi2001}, during the execution
of algorithm $\mathcal{A}$, $e$ would have been added to the matching,
leading to a contradiction. If $u\in H_{l}$, for some $l<k+1$, and
$e$ has some label $1\le j\le A$, then in iteration $j$ of the
loop of algorithm $\mathcal{B}$, $e$ should have been added to the
matching, also leading to a contradiction. Therefore, the matching
computed at the end of iteration $k+1$ of algorithm $\mathcal{A}'$
is maximal. It follows that the matching computed at the end of iteration
$k+1$ is indeed a maximal matching for $G(\cup_{l=1}^{k+1}H_{l})$.
Therefore, the matching algorithm $\mathcal{A}'$ computes for $G$
is indeed a maximal matching for $G$.

We now analyze the vertex-averaged complexity of algorithm $\mathcal{A}'$.
The worst-case time complexity of algorithm $\mathcal{A}$, where
we execute the MM algorithm of \cite{Panconesi2001}, following a
similar analysis to that of the proof of Corollary \ref{edgeColPoECorollary},
is $O(a+\log^{*}n)$. Also following a similar analysis to that of
the proof of Corollary \ref{edgeColPoECorollary}, the worst-case
time complexity of algorithm $\mathcal{B}$ is $O(a)$.

Therefore, by Theorem \ref{algorithmForPoEMethodTheorem}, the vertex-averaged
complexity of algorithm $\mathcal{A}'$ is $O(a+\log^{*}n)$, as required.

\end{proof}

Corollary \ref{mmPOECorollary} implies an additional corollary for
graphs with constant arboricity.

\begin{corollary}\label{O(1)ArbMMPOECorollary}

For an input graph $G=(V,E)$ with a known arboricity $a=O(1)$, an
MM can be computed within a vertex-averaged complexity of $O(\log^{*}n)$
rounds.

\end{corollary}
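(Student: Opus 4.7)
The plan is to obtain this corollary as an immediate specialization of Corollary~\ref{mmPOECorollary}. That corollary shows that for an arbitrary input graph $G=(V,E)$, the algorithm $\mathcal{A}'$ (constructed from the base MM routine of \cite{Panconesi2001} plugged into the scheme of Theorem~\ref{algorithmForPoEMethodTheorem}, together with the auxiliary routine $\mathcal{B}$ that handles cross-set edges label-by-label) computes a maximal matching with vertex-averaged complexity $O(a+\log^{*}n)$.

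Concretely, I would first observe that the algorithm of Corollary~\ref{mmPOECorollary} is oblivious to whether $a$ is constant or not: the procedure still runs Procedure Parallelized-Forest-Decomposition, then in each iteration $i$ invokes $\mathcal{A}$ on $G(H_i)$ and $\mathcal{B}$ on the edges crossing from $H_i$ into $\cup_{k<i}H_k$. The correctness argument (no pair of intersecting matched edges, and maximality) given in the proof of Corollary~\ref{mmPOECorollary} carries over verbatim, since it never uses the size of $a$.

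Finally, I would substitute the assumption $a=O(1)$ into the bound on the vertex-averaged complexity. The running time of $\mathcal{A}$ on $G(H_i)$ is $O(a+\log^{*}n)$ and that of $\mathcal{B}$ is $O(a)$, so under $a=O(1)$ both collapse to $O(\log^{*}n)$. Invoking Theorem~\ref{algorithmForPoEMethodTheorem} then yields a vertex-averaged complexity of $O(f(a,n))=O(a+\log^{*}n)=O(\log^{*}n)$, which is the claim. There is no real obstacle here; the only thing to verify is that the constant $O(1)$ arboricity does not break any of the intermediate steps, and inspection of the proofs of Corollary~\ref{mmPOECorollary} and Theorem~\ref{algorithmForPoEMethodTheorem} confirms this.
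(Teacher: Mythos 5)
Your proposal is correct and matches the paper exactly: the paper derives this corollary directly from Corollary~\ref{mmPOECorollary} by substituting $a=O(1)$ into the bound $O(a+\log^{*}n)$, with no additional argument. Your extra check that the correctness proof is insensitive to the value of $a$ is sound but not something the paper spells out.
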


\section{Randomized Algorithms}

\subsection{Overview}

In this section we present randomized algorithms that achieve an improved
vertex-averaged complexity with high probability. In contrast to deterministic
algorithms, for which a guaranteed upper bound exists on the average
number of rounds per vertex required in any execution, it is not guaranteed
for any randomized algorithm that an upper bound for its vertex-averaged
complexity will hold for any single execution, but such an upper bound
can be shown to hold with high probability.

\subsection{A Randomized $(\Delta+1)$-Vertex-Coloring in $O(1)$ Vertex-Averaged
Complexity\label{subsec:randDeltaPlusOneColInO(1)AverageTime}}

In this section we analyze the vertex-averaged complexity of the randomized
algorithm for $(\Delta+1)$-vertex-coloring presented in Section 10.1
in \cite{Barenboim2013} as Procedure Rand-Delta-Plus1. This is a
variant of Luby's algorithm \cite{Luby1993}. In this algorithm, in
each round, each vertex $v\in V$ for an input graph $G=(V,E)$ first
draws a single bit $b$ from $\{0,1\}$ uniformly at random. If the
bit drawn is 0, the bit is discarded and then $v$ continues to the
next round. Otherwise, $v$ draws a color uniformly at random from
the set $\{1,2,...,\Delta+1\}\setminus F_{v}$ where $F_{v}$ is the
set of final colors selected by neighbors of $v$. If the color chosen
by $v$ is different than the color chosen by each of its neighbors
which also chose a color in the current round, and also different
than that of each of its neighbors which decided upon a final color
in a previous round, then $v$ declares the color chosen to be its
``final color'', sending a message to its neighbors announcing it.
Then, every neighbor $u$ of $v$ that has not decided upon a final
color yet, updates its list of ``final colors'' $F_{u}$ accordingly,
adding the color chosen by $v$. We present and prove the following
claim about the algorithm's vertex-averaged complexity.

\begin{theorem}\label{randDeltaPlusOneColTheorem}

For an input graph $G=(V,E)$ with maximum degree $\Delta$ it is
possible to compute a $(\Delta+1)$-vertex-coloring of $G$ within
a vertex-averaged complexity of $O(1)$ rounds, with high probability.

\end{theorem}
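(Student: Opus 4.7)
The plan is to analyze Procedure Rand-Delta-Plus1 from Section 10.1 of \cite{Barenboim2013} directly, leveraging the standard ``per-round progress'' property: conditional on any history of random choices up to round $i-1$, every still-active vertex $v$ becomes permanently colored in round $i$ with probability at least some absolute constant $p>0$. This property is precisely what drives the classical worst-case $O(\log n)$ bound for this variant of Luby's algorithm; here I would exploit it to show that the \emph{sum} of rounds across all vertices, and not merely the maximum, is linear in $n$ with high probability.

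First I would establish the bound in expectation. Fix $v \in V$ and let $r(v)$ denote the number of rounds until $v$ terminates. A direct induction using the per-round progress property gives
\[
P[r(v)\ge k] \;\le\; (1-p)^{k}
\]
for every integer $k \ge 0$, so $r(v)$ is stochastically dominated by a geometric random variable with parameter $p$, and $E[r(v)] \le 1/p = O(1)$. Summing over $v \in V$ and using linearity of expectation yields $E[\sum_{v} r(v)] = O(n)$, so the expected vertex-averaged complexity is $O(1)$.

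Next I would upgrade this to a high-probability statement. Let $n_i$ denote the number of active vertices at the start of round $i$; the per-round progress property gives $E[n_i \mid \mathcal{F}_{i-1}] \le (1-p)\,n_{i-1}$, so $M_i := (1-p)^{-i} n_i$ is a non-negative supermartingale with $M_0 = n$. Doob's maximal inequality then yields that, with probability at least $1-1/n^{c}$, all vertices have terminated by some round $i^{**}=O(\log n)$. To obtain the stronger bound $\sum_{v} r(v) = O(n)$ with high probability, the plan is to split at an intermediate threshold $i^{*} = \Theta(\log\log n)$: for $i \le i^{*}$, use a concentration argument to show $n_i \le 2(1-p/2)^{i}n$ (which is where $n_{i-1}$ is still polylogarithmically large, so concentration is effective), and for $i^{*} < i \le i^{**}$, use the monotonicity $n_i \le n_{i^{*}} \le n/\mathrm{polylog}(n)$ together with the round-count bound $O(\log n)$. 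Summing,
\[
\sum_{v} r(v) \;=\; \sum_{i\ge 0} n_i \;\le\; 2n\sum_{i=0}^{\infty}(1-p/2)^{i} \;+\; O(\log n)\cdot\tfrac{n}{\mathrm{polylog}(n)} \;=\; O(n),
\]
which gives vertex-averaged complexity $O(1)$ with high probability.

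The main obstacle is the concentration step for $n_i$, because the indicators $Y_{v,i} := \mathbb{1}[v \text{ terminates in round } i]$ are not independent across $v$ within a single round: whether $v$ succeeds depends on the bits and color choices drawn by its neighbors, so the dependency graph of $\{Y_{v,i}\}_{v \text{ active}}$ is contained in $G^{2}$. Fortunately this dependence is strictly \emph{local}, so one may appeal to concentration inequalities for sums of locally dependent indicators (Janson's or Suen's inequality, or the method of bounded differences applied to the per-vertex randomness) to obtain the required concentration of $n_i$ around its conditional mean whenever $n_{i-1}$ is still polylogarithmic in $n$. A union bound over the $O(\log\log n)$ rounds in the first segment, combined with the earlier supermartingale tail bound over the remaining $O(\log n)$ rounds, then delivers the claimed high-probability guarantee.
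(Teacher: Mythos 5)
Your proposal follows essentially the same route as the paper's proof: a constant per-round termination probability for each active vertex, exponential decay of the number $n_i$ of active vertices via a concentration bound, a crude separate bound for the tail of the execution, and a geometric-series summation giving $\sum_{v}r(v)=O(n)$, hence $O(1)$ vertex-averaged complexity. The bookkeeping differs slightly: you truncate at round $i^{*}=\Theta(\log\log n)$ and charge the remainder $n_{i^{*}}\cdot O(\log n)=O(n)$, while the paper runs the concentration argument only while $n_i=\Omega(\log n)$ and then charges the $O(\log n)$ survivors $O(\log^{2}n)$ rounds in total; you obtain the $O(\log n)$ worst-case horizon by a supermartingale maximal inequality where the paper uses the union bound $n\cdot(3/4)^{c'\log n}\le 1/n$. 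These choices are interchangeable. The one substantive divergence is the per-round concentration of $n_i$, which you correctly single out as the crux. The paper handles the dependence among the failure indicators by introducing independent ``twin'' variables $\hat X_{v,i}$ with the same marginals and then invoking a Chernoff bound to conclude $\Pr[X_i\ge\tfrac{15}{16}n_i]\le e^{-n_i/64}$; as written, that final inequality bounds a tail of the dependent sum $X_i$ by the Chernoff bound for the independent sum $\hat X_i$, which does not follow without an additional domination or negative-association argument. You instead propose Janson/Suen or bounded-differences inequalities for locally dependent indicators. Be warned that these particular tools degrade with $\Delta$: the dependency graph lives in $G^{2}$, whose relevant chromatic/degree parameter is $\Theta(\Delta^{2})$, and re-randomizing a single vertex can in the worst case flip the success status of up to $\Delta$ neighbours (consider a star all of whose leaves drew the centre's colour), so McDiarmid- and Janson-type bounds are effective only when $n_i\gg\mathrm{poly}(\Delta)\log n$. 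Closing this step for all $\Delta$ requires something sharper, e.g.\ a Talagrand-type or ``typical bounded differences'' argument exploiting that such adversarial configurations are themselves exponentially unlikely. In short, your plan is at least as careful as the paper's own proof on the only delicate point, but the concentration step still needs a correct instantiation before the high-probability claim is fully established.
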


\begin{proof}

It it shown in \cite{Barenboim2013} that the probability for a vertex
$v$ to terminate in a given round is at least $\frac{1}{4}$. The
explanation for this is as follows. Suppose that in some round $i$,
in the first step of the algorithm, some vertex $v\in V$, chose a
bit $1$ from $\{0,1\}$ uniformly at random. Therefore, in the second
step of the algorithm, it chose a color uniformly at random from $\left\{ 1,2,...,\Delta+1\right\} \setminus F_{v}$.
For some neighbor $u$ of $v$, the probability that $u$ chose some
color in the current round and that $v$ chose the same color as $u$
is at most:
\[
\frac{1}{2(\Delta+1-\left|F_{v}\right|)}
\]
This follows from the following observations. The probability for
$u$ not to discard the bit $b\in\{0,1\}$ it initially selected uniformly
at random is $\frac{1}{2}$. In addition, $v$ has at most $\Delta+1-\left|F_{v}\right|$
different colors to choose from.

Therefore, using the union bound, the probability for $v$ to select
in a given round a color identical to the color chosen by any of its
neighbors is at most $\frac{(\Delta+1-\left|F_{v}\right|)}{2(\Delta+1-\left|F_{v}\right|)}=\frac{1}{2}$.
Also, the probability for $v$ not to discard its bit $b\in\{0,1\}$
and proceed to select uniformly at random a color from $\{1,2,...,\Delta+1\}\setminus F_{v}$
is $\frac{1}{2}$. It follows that the probability for $v$ to select
a color different from that of each of its neighbors is at least $\frac{1}{4}$,
as required.

Therefore, the probability that a vertex $v$ does not terminate in
a given round is at most $\frac{3}{4}$. Let us denote the number
of vertices of $V$ that have not terminated in a given round $i$
as $X_{i}$. Also, for a vertex $v$ and a round $i$, let us denote
by $X_{v,i}$ the following random variable. The random variable $X_{v,i}$
receives the value 1 if $v$ did not choose a final color and did
not terminate in round $i$, on the condition that $v$ did not terminate
in any round $j<i$. Otherwise, if $v$ terminates in round $i$,
it holds that $X_{v,i}=0$. By the linearity of expectation, it follows
that $E[X_{i}]\le\frac{3}{4}n_{i}$ where $n_{i}$ is the number of
active vertices in round $i$, that is vertices that have not yet
decided upon a final color.

Now, let us denote the vertices of $V$ as $v_{1},v_{2},...,v_{n}$.
Also, for each random variable $X_{v,i}$, for a vertex $v\in V$,
we define a twin random variable $\hat{X}_{v,i}$, such that the resulting
random variables $\hat{X}_{v,i}$ satisfy the following properties.
For each $v\in V$, the random variables $X_{v,i},\hat{X}_{v,i}$
have the same distribution, and the random variables $\hat{X}_{v_{1},i},\hat{X}_{v_{2},i},...,\hat{X}_{v_{n},i}$
are independent. In addition, we denote $\hat{X}_{i}=\sum_{j=1}^{n}\hat{X}_{v_{j},i}$.
Therefore, by applying the Chernoff bound, we obtain the following:

\begin{align*}
Pr\left[X_{i}\ge\frac{15}{16}n_{i}\right] & \le Pr\left[X_{i}\ge\left(1+\frac{1}{4}\right)E[X_{i}]\right]\\
 & =Pr\left[X_{i}\ge\left(1+\frac{1}{4}\right)\sum_{j=1}^{n}E[X_{v_{j},i}]\right]\\
 & =Pr\left[X_{i}\ge\left(1+\frac{1}{4}\right)\sum_{j=1}^{n}E\left[\hat{X}_{v_{j},i}\right]\right]\\
 & =Pr\left[X_{i}\ge\left(1+\frac{1}{4}\right)E[\hat{X}_{i}]\right]\\
 & \le e^{-\frac{\left(\frac{1}{4}\right)^{2}\cdot\frac{3}{4}n_{i}}{3}}\\
 & =e^{-\frac{n_{i}}{64}}
\end{align*}
Also, because in each round a vertex does not terminate with probability
at most $\frac{3}{4}$, the probability for any vertex not to terminate
after $c'\log n$ rounds, for an appropriate constant $c'\ge5$, by
the union bound, is at most:
\[
n\cdot\left(\frac{3}{4}\right)^{c'\log n}=\frac{n}{n^{c'\log\frac{4}{3}}}<\frac{n}{n^{2}}=\frac{1}{n}
\]
Therefore, either for the current round $i$ the value of $n_{i}$
satisfies $\frac{n_{i}}{64}\ge\log n^{64c}$ for a constant $c>1$,
in which case the number of vertices that have not yet terminated
becomes smaller by a constant fraction with high probability, or $n_{i}<4,096c\cdot\log n$,
in which case, each vertex that has not yet terminated in this round
will terminate after $O(\log n)$ more rounds with high probability.
Let $k$ denote the number of the first round in which $n_{k}<4,096c\cdot\log n$.
Then the total number of rounds of computation executed by all vertices
is with high probability at most:
\[
\sum_{i=1}^{k-1}\left(\frac{3}{4}\right)^{i}\cdot n+O\left(\log^{2}n\right)=O(n)+O\left(\log^{2}n\right)=O(n)
\]
Therefore, the vertex-averaged complexity of the devised algorithm
is $O(1)$. In terms of the number of colors employed by the algorithm,
it remains at most $\Delta+1$, as the final color of each vertex
$v\in V$ is still only one color $c_{v}\in\{1,2,...,\Delta+1\}$.

\end{proof}

Therefore, the vertex-averaged complexity of this known algorithm
is $O(1)$ rounds only. Also the achieved coloring is proper as we
haven't changed the original algorithm in this analysis in any way.
In addition, the achieved coloring employs at most $\Delta+1$ colors
as is already presented in \cite{Barenboim2013}. This result is far
superior to any currently known worst-case time result for $(\Delta+1)$-vertex-coloring
according to \cite{Barenboim2016b}. 

We also remind that a lower bound of $\Omega\left(\log^{*}n\right)$
rounds was presented in \cite{Feuilloley2017} for deterministic $3$-vertex-coloring
of cycles. Cycles satisfy $\Delta=2$. Therefore, this lower bound
holds for deterministic $\left(\Delta+1\right)$-vertex-coloring for
general graphs. On the other hand, by Theorem \ref{randDeltaPlusOneColTheorem},
using randomization helps obtain an algorithm with a vertex-averaged
complexity that isn't bounded from below by $\Omega\left(\log^{*}n\right)$,
and that is even constant. A similar, weaker result was already presented
in \cite{Feuilloley2017}. In the latter case, the author showed that
in the case of $3$-coloring cycles, using the randomized algorithm
described in this section, for $\Delta=2$, the expected vertex-averaged
complexity was constant. The result of the analysis in this section,
however, is stronger, since it applies to a general value of $\Delta$,
and since it shows the time complexity of the algorithm described
in this section to be constant with high probability, rather than
only show the expected value of the time complexity is constant.

\subsection{Additional Randomized Vertex-Coloring Algorithms Using $O(1)$ Vertex-Averaged
Complexity\label{subsec:randO(aloglogn)ColoringInO(1)Time}}

In this section, we devise a new randomized algorithm based on the
randomized $\left(\Delta+1\right)$-vertex-coloring algorithm of Section
\ref{subsec:randDeltaPlusOneColInO(1)AverageTime}. The devised algorithm
consists of two phases, and proceeds as follows.

In the first phase, we execute $t=\left\lfloor 2\log\log n\right\rfloor $
iterations of Procedure Partition, thus forming the $H$-sets $H_{1},H_{2},...,H_{t}$.
Upon the formation of each $H$-set $H_{i}$, for $1\le i\le t$,
we invoke the randomized $\left(\Delta+1\right)$-vertex-coloring
algorithm of Section \ref{subsec:randDeltaPlusOneColInO(1)AverageTime}
on $G\left(H_{i}\right)$, with the palette $\{1,2,...,A+1\}$. Once
a vertex selects a final color $c_{v}$ from this palette, according
to the execution on $G(H_{i})$, we assign each vertex $v\in H_{i}$
the color $\left\langle c_{v},i\right\rangle $.

In the second phase, that starts in round $t+1$, we continue executing
Procedure Partition, in order to compute the sets $H_{t+1},H_{t+2},...H_{\ell}$
(possibly in parallel to actions of some vertices in $H_{1},H_{2},...,H_{t}$
that still have not terminated phase 1). Then, we perform a loop from
$j=\ell$ down to $t+1$. In each iteration $j$ of this loop, we
invoke the algorithm of Section \ref{subsec:randDeltaPlusOneColInO(1)AverageTime}
with the palette $\{A+2,...,2A+2\}$. We note that for each iteration
$j<\ell$, each vertex $v\in H_{j}$ first waits for all its neighbors
in $\cup_{l=j+1}^{\ell}H_{l}$ to terminate and choose a color from
the palette $\{A+2,A+3,...,2A+2\}$, and only then invokes the algorithm
of Section \ref{subsec:randDeltaPlusOneColInO(1)AverageTime}. The
properties of the devised algorithm are summarized in the following
theorem.

\begin{theorem}\label{randO(a)ColInO(loglogn)AverageTimeTheorem}

The devised algorithm colors an input graph $G=(V,E)$ with a known
arboricity $a$ using $O(a\log\log n)$ colors with a vertex-averaged
complexity of $O(1)$ rounds, with high probability.

\end{theorem}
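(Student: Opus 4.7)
The plan is to verify three properties in turn: that the output is a proper coloring, that the total palette has size $O(a\log\log n)$, and that the sum of rounds over all vertices is $O(n)$ with high probability, so the vertex-averaged complexity is $O(1)$ w.h.p.

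For correctness, I would handle conflicts within an $H$-set, between $H$-sets of the same phase, and between the two phases separately. Each subgraph $G(H_i)$ has maximum degree at most $A=(2+\epsilon)a$ by the partition property from Section \ref{subsec:Procedure-Partition-Description}, so invoking Theorem \ref{randDeltaPlusOneColTheorem} with the palette $\{1,\ldots,A+1\}$ properly colors $G(H_i)$. Distinct phase-1 $H$-sets use distinct second coordinates $i$ in the pair $\langle c_v,i\rangle$, so there is no inter-$H$-set conflict in phase 1. In phase 2 I process $H_\ell,H_{\ell-1},\ldots,H_{t+1}$ in that order; when a vertex $v\in H_j$ runs the randomized coloring, all its neighbors in $\bigcup_{l=j+1}^{\ell}H_l$ already hold final colors from $\{A+2,\ldots,2A+2\}$, and since by the partition property $v$ has at most $A$ neighbors in $\bigcup_{l=j}^{\ell}H_l$, a color in a palette of size $A+1$ always remains available. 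The two phases use disjoint palettes, so these are all the possible conflicts.

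For the palette count, phase 1 uses pairs $\langle c,i\rangle$ with $c\in\{1,\ldots,A+1\}$ and $i\in\{1,\ldots,t\}$, contributing $(A+1)\cdot t=O(a\log\log n)$ colors, and phase 2 adds at most $A+1=O(a)$ more, for a total of $O(a\log\log n)$. For phase-1 running time, each vertex $v\in H_i$ accumulates $i$ Partition rounds plus the rounds it spends in the randomized coloring of $G(H_i)$. By Lemma \ref{activeVerticesNumberLemma} the sizes $n_i$ decay geometrically, so $\sum_{i=1}^{t}i\cdot|H_i|\le\sum_{i\ge 1}i\cdot(2/(2+\epsilon))^{i-1}n=O(n)$. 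The coloring contribution is controlled by Theorem \ref{randDeltaPlusOneColTheorem}: the same analysis that yields $O(n)$ total vertex-rounds on $G$ yields $O(|H_i|)$ total vertex-rounds on each $G(H_i)$, and a union bound over the $t=O(\log\log n)$ invocations preserves the high-probability guarantee in $n$, giving $O(n)$ phase-1 coloring rounds overall.

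For phase 2, I would set $\epsilon=2$ so that Lemma \ref{activeVerticesNumberLemma} yields $n_{t+1}\le(1/2)^{2\log\log n}n=O(n/\log^{2}n)$, bounding the number of vertices that enter phase 2. Each such vertex sits through at most $\ell-t=O(\log n)$ sequential coloring stages, and by the $O(\log n)$-round termination bound inside Theorem \ref{randDeltaPlusOneColTheorem} each stage terminates within $O(\log n)$ rounds w.h.p.\ in $n$, so every phase-2 vertex performs $O(\log^{2}n)$ rounds w.h.p. Multiplying gives $O(n/\log^{2}n)\cdot O(\log^{2}n)=O(n)$ total phase-2 rounds w.h.p. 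Combining the two phases, the total vertex-round count is $O(n)$ w.h.p., yielding vertex-averaged complexity $O(1)$ w.h.p. The main delicacy is lifting the vertex-averaged guarantee of Theorem \ref{randDeltaPlusOneColTheorem}, which is stated relative to the size of its input graph, to a bound that holds w.h.p.\ in the global parameter $n$ simultaneously across all $O(\log n)$ invocations; the cleanest fix is to appeal to the $O(\log n)$-round termination of each vertex and take one union bound over the $n$ vertices of $G$, each of which participates in exactly one invocation.
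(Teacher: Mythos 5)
Your proposal is correct and follows essentially the same route as the paper's proof: the same three-way conflict analysis (within an $H$-set via the degree bound $A$ and palette size $A+1$, across phase-1 sets via the second coordinate, and across phase-2 sets via the backward ordering and the at-most-$A$ higher neighbors), the same $O(a\log\log n)$ color count, and the same complexity accounting with geometric decay giving $O(n)$ total rounds in phase 1 and $O(n/\log^{2}n)$ surviving vertices times $O(\log^{2}n)$ w.h.p.\ rounds in phase 2. Your explicit choice of $\epsilon=2$ and the closing remark about lifting the high-probability guarantee to the global $n$ via a union bound are sensible refinements of details the paper treats more tersely, but they do not constitute a different argument.
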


\begin{proof}

First, we prove the algorithm computes a proper coloring of the input
graph. According to Section \ref{subsec:Procedure-Partition-Description},
the maximum degree of each $H$-set $H_{i}$, for $1\le i\le\ell$,
is $\Delta\left(G\left(H_{i}\right)\right)=A=O(a)$. We invoked the
algorithm of Section \ref{subsec:randDeltaPlusOneColInO(1)AverageTime}
with a palette of size $A+1$ on $G\left(H_{i}\right)$, for each
$1\le i\le\ell$. Therefore, for any two vertices $u,v\in H_{i}$,
for $1\le i\le\ell$, by the correctness of the algorithm of Section
\ref{subsec:randDeltaPlusOneColInO(1)AverageTime}, $u,v$ have different
colors, as required. For two vertices $u\in H_{j},v\in H_{i}$, such
that $1\le j,i\le t$, $j<i$, they have each respectively been assigned
the colors $\left\langle c_{u},j\right\rangle ,\left\langle c_{v},i\right\rangle $.
Since the color of the vertices $u,v$ differs at least in one component,
the second component, they have different colors.

Let us now look at two vertices $u\in H_{j},v\in H_{i}$, such that
$t+1\le i,j\le\ell$, $j<i$. In iteration $l=j$ of the loop of the
second phase of the devised algorithm, we invoked the algorithm of
Section \ref{subsec:randDeltaPlusOneColInO(1)AverageTime} with the
palette $\{A+2,A+3,...,2A+2\}$. By construction of the devised algorithm,
the vertex $u$ was colored after $v$, using the standard randomized
$\left(\Delta+1\right)$-vertex-coloring algorithm of Section \ref{subsec:randDeltaPlusOneColInO(1)AverageTime},
while avoiding the use of colors already chosen by neighbors of $u$
in $V\setminus\left(\cup_{l=1}^{j}H_{l}\right)$. Therefore, if $u$
has an available color to choose from the respective palette, it is
different than the color of each neighbor of $u$ in $V\setminus\left(\cup_{l=1}^{j}H_{l}\right)$.
The vertex $u$ executes the standard randomized $\left(\Delta+1\right)$-vertex-coloring
algorithm of Section \ref{subsec:randDeltaPlusOneColInO(1)AverageTime},
using a palette of colors of size $A+1$, while $u$ has at most $A$
neighbors in $V\setminus\left(\cup_{l=1}^{j}H_{l}\right)$, according
to Section \ref{subsec:The-Partition-Algorithm}. Therefore, $u$
has at least one available color to choose from, when it executes
the standard randomized $\left(\Delta+1\right)$-vertex-coloring algorithm
of Section \ref{subsec:randDeltaPlusOneColInO(1)AverageTime}. Therefore,
for any two vertices $u\in H_{j},v\in H_{i}$, for $t+1\le i,j\le\ell$,
$j<i$, the vertices $u,v$ have different colors.

For any two vertices $u\in\cup_{i=1}^{t}H_{i}$, $v\in\cup_{i=t+1}^{\ell}H_{i}$,
the vertices $u,v$ each respectively either chose a color $\left\langle c_{u},i\right\rangle $,
where $c_{u}\in\{1,2,...,A+1\}$, or a color from the palette $\{A+2,A+3,...,2A+2\}$.
Since the set from which the vertex $u$ chose its color, and the
set from which $v$ chose its color, are disjoint, the vertices $u,v$
necessarily have different colors. Therefore, for any two vertices
$u,v\in V$, it holds that $u,v$ have different colors, as required.

We now analyze the number of colors employed by the devised algorithm.
In the first phase, each vertex $v\in H_{i}$, for $1\le i\le t$,
chooses a color $\left\langle c_{v},i\right\rangle $, where $c_{v}\in\{1,2,...,A+1\}$
and $t=\left\lfloor 2\log\log n\right\rfloor $. In the second phase,
each vertex $v\in\cup_{i=t+1}^{\ell}H_{i}$ chooses a color from the
set $\{A+2,A+3,...,2A+2\}$. Therefore, the total number of colors
used by the devised algorithm is $O\left(a\log\log n\right)$.

We now analyze the vertex-averaged complexity of the devised algorithm.
We want to show that the total number of communication rounds carried
out by all vertices of $V$ is $O(n)$ with high probability. First,
we analyze the total number of rounds of communication carried out
in the first phase of the devised algorithm.

Denote by $n_{i}$ the number of active vertices in the input graph
that haven't yet finished all computation at the beginning of iteration
$i=1,2,...,t$ of the first phase. Also, suppose that the execution
of the algorithm of Section \ref{subsec:randDeltaPlusOneColInO(1)AverageTime}
on $G\left(H_{i}\right)$ begins in round $r_{i}$. Then, denote by
$n_{i,j}$ the number of vertices in $H_{i}$ that remain uncolored
by the invocation of the algorithm of Section \ref{subsec:randDeltaPlusOneColInO(1)AverageTime}
in round $j\ge r_{i}$. Also, let $j=k$ be the first round where
$n_{i,j}<4,096c\cdot\log n$ for a constant $c>1$. Then, according
to the proof of Theorem \ref{randDeltaPlusOneColTheorem}, the total
number of rounds of communication executed by the vertices of a given
subset $H_{i}$ in the execution of the algorithm of Section \ref{subsec:randDeltaPlusOneColInO(1)AverageTime}
is, with high probability, at most:
\[
\sum_{j=1}^{k-1}\left(\frac{3}{4}\right)^{j}\cdot n_{i,j}+O\left(\log n\right)=O(n_{i})+O\left(\log n\right)
\]
It follows that with high probability, the total number of rounds
of communication carried out by the vertices of $\cup_{i=1}^{t}H_{i}$
is:
\[
\sum_{i=1}^{t}O\left(n_{i}\right)+O\left(t\log n\right)=O\left(n+\log n\log\log n\right)=O(n)
\]
We now analyze the total number of rounds of communication carried
out in the second phase of the devised algorithm. Let us look at some
vertex $v\in H_{i}$, for $t+1\le i\le\ell$. For each two different
rounds $j,k$ of the algorithm of Section \ref{subsec:randDeltaPlusOneColInO(1)AverageTime}
executed by the vertex $v$, such that $j<k$, the color chosen by
$v$ in round $k$ is independent of the color it chose in round $j$.
Also, according to Section \ref{subsec:randDeltaPlusOneColInO(1)AverageTime},
in each round, in the execution of the algorithm of Section \ref{subsec:randDeltaPlusOneColInO(1)AverageTime},
the vertex $v$ has a probability of at most $\frac{3}{4}$ to choose
a color identical to the color chosen by one of its neighbors, and
thus, not to terminate. Therefore, the probability that $v$ doesn't
terminate after $l$ rounds is at most $\left(\frac{3}{4}\right)^{l}$.

Throughout the execution of the second phase of the devised algorithm,
there can trivially be no more than $n$ active vertices in any round.
Therefore, by the union bound, the probability that in the execution
of the algorithm of Section \ref{subsec:randDeltaPlusOneColInO(1)AverageTime},
in an iteration $t+1\le i\le\ell$ of the loop of the second phase
of the devised algorithm, at least one vertex $v\in H_{i}$ doesn't
terminate after $l$ rounds is at most:
\begin{align*}
\left(\frac{3}{4}\right)^{l}\cdot n & =2^{l\cdot\log\frac{3}{4}}\cdot n\\
 & =\frac{n}{2^{l\cdot\log\frac{4}{3}}}
\end{align*}
Specifically, for $l=\left(c_{i}+1+\frac{2}{\epsilon}\right)\cdot\frac{1}{\log\frac{4}{3}}\cdot\log n$
for a sufficiently large constant $c_{i}>1$, the probability of the
last mentioned event is at most $\frac{n}{n^{c_{i}+1+\frac{2}{\epsilon}}}=\frac{1}{n^{c_{i}+\frac{2}{\epsilon}}}$.
We remind, that according to Section \ref{subsec:Procedure-Partition-Description},
it holds that $\ell=\lfloor\frac{2}{\epsilon}\log n\rfloor\le\frac{2}{\epsilon}\log n$.
Choosing an appropriate constant $c_{min}=\min_{t+1\le i\le\ell}c_{i}$,
by the union bound, the probability that in at least one iteration
$t+1\le i\le\ell$ of the second phase of the devised algorithm, the
running time of the invocation of the algorithm of Section \ref{subsec:randDeltaPlusOneColInO(1)AverageTime}
requires more than $\left(c_{i}+1+\frac{2}{\epsilon}\right)\cdot\frac{1}{\log\frac{4}{3}}\cdot\log n$
rounds, is at most:
\[
\frac{\ell}{n^{c_{min}+\frac{2}{\epsilon}}}\le\frac{\frac{2}{\epsilon}\log n}{n^{c_{min}+\frac{2}{\epsilon}}}\le\frac{n^{\frac{2}{\epsilon}}}{n^{c_{min}+\frac{2}{\epsilon}}}=\frac{1}{n^{c_{min}}}<\frac{1}{n}
\]
Therefore, with high probability, the worst-case time complexity of
the second phase of the devised algorithm is $O\left(\log^{2}n\right)$.
Also, according to Lemma \ref{activeVerticesNumberLemma}, the number
of active vertices in the execution of the second phase of the devised
algorithm is $O\left(\frac{n}{\log^{2}n}\right)$. Therefore, the
total number of communication rounds carried out in the second phase
of the devised algorithm is $O(n)$, with high probability. 

Therefore, for a suitable choice of the constant $c'$ from the proof
of Theorem \ref{randDeltaPlusOneColTheorem}, and of the constants
$c_{i}$, for $t+1\le i\le\ell$, the total number of communication
rounds carried out by the vertices of the input graph in the execution
of the devised algorithm is $O(n)$, with high probability. Therefore,
the vertex-averaged complexity of the devised algorithm is $O(1)$
rounds, with high probability, as required.

\end{proof}

\section{Discussion\label{sec:discussionAndConclusionsSection}}

Based on the results presented so far, we make several observations.
We remind that a summary of the results obtained in the paper can
be found in Section \ref{sec:comparisonWithPreviousWorkSection}.
One observation we make pertains to differences obtained so far, between
the lower and upper bounds on the vertex-averaged complexity of the
solution of the problems studied in this paper. For example, let us
look at the currently-known best lower bound on the vertex-averaged
complexity of deterministic $O\left(a\cdot\log^{*}n\right)$-vertex-coloring.
In this case, the only known non-trivial lower bound on the vertex-averaged
complexity is $\Omega\left(1\right)$. For constant arboricity, our
results are away of this bound only by a factor of $\log^{*}n$. For
non-constant arboricity, there exists a difference between the worst-case
lower bound of $\Omega\left(1\right)$ and the vertex-averaged time
complexity of the respective algorithm we devised, which is $O\left(a\log^{*}n\right)$.
An interesting question these observations raise, is whether the lower
bound for deterministic $O\left(a\cdot\log^{*}n\right)$-vertex-coloring
can be shown to be $\omega\left(1\right)$, and in particular, how
close it is to $O\left(a\log^{*}n\right)$ rounds.

As another example, in the case of randomized $p$-vertex-coloring,
for $p\in\left\{ O\left(a\log\log n\right),\Delta+1\right\} $, we
have obtained the best possible results (up to constant factors).
Namely, for $p\in\left\{ O\left(a\log\log n\right),\Delta+1\right\} $,
we have presented methods for coloring an input graph with a vertex-averaged
complexity of $O\left(1\right)$ rounds, with high probability.

Another observation we make is about the difference between the best
vertex-averaged complexity obtained using deterministic algorithms,
and the best vertex-averaged complexity obtained using randomized
algorithms. This observation concerns only the vertex-coloring algorithms
presented in this paper, as the algorithms presented for MIS, MM and
$\left(2\Delta+1\right)$-edge-coloring are all deterministic. Consider
for example the case of $O\left(a\log\log n\right)$-vertex-coloring.
In this case, the vertex-averaged time complexity of the deterministic
algorithm we devised in Section \ref{subsec:O(ka)VertexColoringSection}
is $O\left(a\log\log n\right)$, according to Theorem \ref{O(ak)DeterministicVertexColoringTheorem}.
On the other hand, the randomized algorithm of Section \ref{subsec:randO(aloglogn)ColoringInO(1)Time}
has a vertex-averaged complexity of only $O\left(1\right)$ rounds,
with high probability. Therefore, there is a difference in this case
of $O\left(a\log\log n\right)$ rounds between the vertex-averaged
complexity of the deterministic algorithm and the randomized algorithm.

This raises the question whether there exists a deterministic $O\left(a\log\log n\right)$-vertex-coloring
algorithm with a vertex-averaged complexity that is $o\left(a\log\log n\right)$,
and how much of an improvement can be obtained.

In the case of $\left(\Delta+1\right)$-vertex-coloring, the deterministic
algorithm of Corollary \ref{vertexColPoECorollary} requires a vertex-averaged
complexity of $O(\sqrt{a}\log^{2.5}a+\log^{*}n)$ rounds, which depends
on both $a$ and $n$. On the other hand, the standard randomized
$\left(\Delta+1\right)$-vertex-coloring algorithm, described in Section
\ref{subsec:randDeltaPlusOneColInO(1)AverageTime}, has a vertex-averaged
complexity of only $O\left(1\right)$ rounds, with high probability.
This last mentioned difference between the last two mentioned algorithms
raises the question to what extent can the vertex-averaged complexity
of the deterministic algorithm of Corollary \ref{vertexColPoECorollary}
be made closer to the constant vertex-averaged complexity of the standard
randomized $\left(\Delta+1\right)$-vertex-coloring algorithm, described
in Section \ref{subsec:randDeltaPlusOneColInO(1)AverageTime}.

Yet an additional observation that can be made regarding the results
obtained in this paper, relates to the difference between the known
worst-case lower bounds for the problems that were dealt with in this
paper, and certain results obtained in this paper. Namely, the vertex-averaged
complexity of some of the algorithms described in this paper, is asymptotically
lower than the worst-case lower bound for the same respective problems.
Following are some examples of this observation.

In the case of $O\left(a\cdot\log^{*}n\right)$-vertex-coloring, for
constant arboricity $a$, it is implied in \cite{Barenboim2008} that
solving this problem requires $\Omega\left(\log n\right)$ time in
the worst case. However, the deterministic algorithm of Section \ref{subsec:O(ka)VertexColoringSection}
obtains a vertex-averaged complexity of $O\left(\log^{*}n\right)$.

In the case of $\left(\Delta+1\right)$-vertex-coloring, we have shown
in Section \ref{subsec:randDeltaPlusOneColInO(1)AverageTime} that
the standard randomized $\left(\Delta+1\right)$-vertex-coloring algorithm
described in the same section has a vertex-averaged complexity of
$O\left(1\right)$ rounds, with high probability, while in the worst-case
scenario, according to \cite{Linial1992}, such a coloring requires
$\Omega\left(\log^{*}n\right)$ rounds, even with the use of randomization.

In the case of $O\left(a\cdot\log^{*}n\right)$-vertex-coloring, for
constant arboricity, a deterministic algorithm we presented in Section
\ref{subsec:O(ka)VertexColoringSection} produces the coloring with
a vertex-averaged complexity of $O\left(a\log^{*}n\right)=O\left(\log^{*}n\right)$
rounds. This last mentioned vertex-averaged complexity is considerably
lower than the worst-case lower bound for for $O(aq)$-coloring, $q>2$,
which is, for general arboricity, $\Omega\left(\frac{\log n}{\log a+\log q}\right)$
\cite{Barenboim2008}. In particular, for $a=O\left(\log^{*}n\right)$,
the worst-case lower bound for $O\left(a\log^{*}n\right)$ is $\Omega\left(\frac{\log n}{\log\left(\log^{*}n\right)}\right)$.

\section{Conclusion and future research directions}

We observe that from a theoretical point of view, the measure of vertex-averaged
complexity seems promising as an optimization criterion. In particular,
in many cases where a certain pair of algorithms have the same asymptotic
worst case running time, the measure of vertex-averaged complexity
provides a way to decide which algorithm is expected to yield better
performance in practice, especially in scenarios such as those mentioned
in Section \ref{subsec:Motivation}.

Regarding possible directions for future research on the subject of
vertex-averaged complexity, one direction is to perform an experimental
evaluation of the value of using vertex-averaged complexity as an
optimization criterion. The goal of such experimentation would be
to confirm that the framework obtains simulations with a significantly
smaller actual running time (and equivalently, also, a significantly
smaller total number of rounds over all simulated processors).

Another possible research direction, is an empirical evaluation of
the extent to which algorithms with improved vertex-averaged complexity,
for tasks consisting of at least two sub-tasks, allow a faster execution
of the complete task for most of the simulated network's processors.

An additional research direction is to try and improve upon the results
obtained in this paper. In particular, one can try and work on developing
algorithms with better vertex-averaged complexity for the problems
discussed in the paper, as well as possibly other important graph-theoretic
problems.

In addition, one can try and research regarding possible non-trivial
(in particular, non constant) lower bounds on the vertex-averaged
complexity of the symmetry-breaking problems discussed in this paper,
as well as the vertex-averaged complexity of possibly other important
graph-theoretic problems. Currently, the only known non-trivial lower
bound is the lower bound of $\Omega\left(\log^{*}n\right)$ rounds
for the deterministic $3$-coloring of rings (and therefore, for many
other types of coloring problems, as well), presented in \cite{Feuilloley2017}.

\end{onehalfspacing}

\end{document}